

\documentclass[brochure,12pt,french]{smfbourbaki}


\usepackage[T1]{fontenc}
\usepackage{lmodern,amssymb,bm}

\usepackage{babel}

\usepackage[utf8]{inputenc}

\usepackage[colorlinks=true, linkcolor=blue, citecolor=red, urlcolor=blue]{hyperref}


\usepackage[
backend=bibtex,
style=authoryear, 
citestyle=authoryear-comp,
maxnames=7,
sortcites=false 
]{biblatex}
\usepackage{csquotes}


\usepackage{graphicx}
\def\eps {{\epsilon}}
\def\llangle {{\langle\!\langle}}
\def\rrangle {{\rangle\!\rangle}}

\usepackage{tikz}
\usetikzlibrary{positioning}
\usetikzlibrary{patterns}

\newcommand*{\ctikz}[2][]{\hbox{\mathsurround=3pt$\vcenter{\hbox{\tikz[#1]{#2}}}$}}




\DeclareDelimFormat{nameyeardelim}{\addcomma\space}

\DeclareNameAlias{sortname}{given-family}

\renewcommand{\bibnamedash}{\leavevmode\raise3pt\hbox to3em{\hrulefill}\space}

\AtEveryBibitem{%
  \clearfield{issn} 
  \clearfield{isbn} 
  \clearfield{doi} 

  \ifentrytype{online}{}{
    \clearfield{url}
  }
}

\renewbibmacro{in:}{%
    \ifentrytype{article}{}{\printtext{\bibstring{in}\intitlepunct}}}

\DeclareFieldFormat[article,periodical,inreference]{number}{\mkbibparens{#1}}
\DeclareFieldFormat[article,periodical,inreference]{volume}{\mkbibbold{#1}}
\renewbibmacro*{volume+number+eid}{%
    \printfield{volume}%
    \setunit*{\addthinspace}
    \printfield{number}%
    \setunit{\addcomma\space}%
    \printfield{eid}}

\DeclareFieldFormat[article,inbook,incollection]{title}{\enquote{#1}\addcomma} 


\addbibresource{Bib1198-Golse.bib} 


\addressindent 75mm    

\date{Novembre 2022}
\bbkannee{75\textsuperscript{e} année, 2022--2023}  
\bbknumero{1198}                                      


\title{Validité de la théorie cinétique des gaz : \\ au-delà de l'équation de Boltzmann}
\subtitle{d'après T. Bodineau, I. Gallagher, L. Saint-Raymond, S. Simonella}

\author{François Golse}
\address{École polytechnique, CMLS\\ Route de Saclay, 91128 Palaiseau Cedex}
\email{\\ francois.golse@polytechnique.edu}


\begin{document}

\maketitle

\tableofcontents

\section*{Introduction}

On doit à \textcite{Max1,Max2} et \textcite{Boltz1,Boltz2} la fondation de la théorie cinétique des gaz. Quoique l'idée d'une description particulaire de la matière remonte à l'Antiquité, Maxwell est le premier 
à faire appel à des notions de statistique afin de mettre cette description en équation. Le caractère novateur de cette entreprise apparaît très clairement à la lecture de l'introduction de \parencite{Max2}: 
la liste exhaustive des précurseurs cités par Maxwell est brève --- Lucrèce\footnote{Cf. par exemple {\og  Nam quoniam per inane uagantur, cuncta necessest | Aut gravitate sua ferri primordia rerum | Aut 
ictu forte alterius. Nam  < cum > cita saepe | Obuia conflixere, fit ut diuersa repente | Dissiliant [...]\fg}   (Car puisqu'il errent à travers le vide, il faut que les principes des choses soient tous emportés soit 
par leur propre poids, soit encore par le choc d'un autre atome). De Rerum Natura II, v. 83--87, trad. A. Ernout, Les Belles Lettres, Paris, 2009.  Ces vers décrivent précisément le processus physique 
sous-jacent à l'équation de Boltzmann \eqref{BoltzGen} ci-dessous. Voici ce qu'en dit \textcite{Max2}: {\og  [...] he describes the atoms as all moving downwards with equal velocities, which, at quite uncertain 
times and places, suffer an imperceptible change, just enough to allow of occasional collisions taking place between the atoms.\fg} }(I$^\mathrm{er}$ siècle av. J.-C.), D. Bernoulli (1738), Le Sage (1761), 
Herapath (1847), avant les travaux évidemment fondamentaux de Joule (1848) et Clausius (1857). Mais ce n'est qu'au XX$^\mathrm{e}$ siècle que l'on comprend que la théorie cinétique des gaz peut être 
déduite rigoureusement des équations de la mécanique classique appliquées à un système de particules sphériques identiques interagissant lors de collisions élastiques dans une certaine limite asymptotique. 
Hilbert est le premier à avoir formulé cette question comme un problème mathématique (cité dans son sixième problème sur l'{\og  axiomatisation de la physique\fg}   \parencite{Hilb1900}). Mais le texte 
fondamental identifiant clairement le régime asymptotique à considérer est l'article \parencite{Grad49}. Après les articles précurseurs \parencite{Gal69} et \parencite{Cer72}, \textcite{Lan75} donne la  première 
justification rigoureuse de l'équation de Boltzmann comme conséquence du principe fondamental de la dynamique (c'est-à-dire la deuxième loi de Newton) appliqué à chaque molécule de gaz.

Le théorème de Lanford a été étendu dans diverses directions depuis 1975: voir par exemple \parencite{King,IllnerPulvi,GSRT,Ayi}. Toutefois, un certain nombre de questions essentielles demeuraient ouvertes 
même après les plus récents de ces travaux. Ces questions sont principalement de deux types bien distincts. D'une part, sur le plan mathématique, le théorème de Lanford exprime qu'une certaine quantité 
converge vers une solution de l'équation de Boltzmann. Peut-on alors estimer l'erreur entre la limite, c'est-à-dire la solution de l'équation de Boltzmann, et la quantité qui l'approche? D'autre part, la convergence 
démontrée par Lanford a lieu sur un intervalle de temps assez restreint --- on trouvera une discussion assez détaillée de ce point par Lanford lui-même dans \parencite{LanfordAst} --- alors que l'équation de 
Boltzmann possède des solutions globales en temps pour des données initiales extrêmement générales (voir \parencite{dPL}, ainsi que l'exposé \parencite{PG88} dans ce même séminaire). En pratique, les 
spécialistes de gaz raréfiés utilisent d'ailleurs l'équation de Boltzmann ou certains de ses avatars pour des simulations numériques sur des plages de temps beaucoup plus longues que celle prédite par 
\parencite{Lan75}. Peut-on alors démontrer la validité de l'équation de Boltzmann sur des intervalles de temps arbitrairement longs?

Il semble à peu près évident qu'une réponse complète à cette dernière question devra nécessairement faire appel à des méthodes mathématiques très différentes de celles de \parencite{Lan75}, 
et qui restent probablement à inventer. 

En revanche, quant aux questions du premier type, des progrès spectaculaires ont été accomplis récemment dans une série de travaux importants \parencite{BGSRSToulouse, BGSRSjsp, BoGalSRSim1}. 
Citons également les articles \parencite{BGSRScpam, BGSRSLongT}, qui étendent la théorie des fluctuations étudiée dans \parencite{BGSRSToulouse, BGSRSjsp, BoGalSRSim1} par delà le temps de Lanford, 
mais dans une situation extrêmement particulière, celle des fluctuations autour d'un état d'équilibre. On évoquera brièvement ces derniers travaux dans la section \ref{S->TLanford} de cet exposé.

Il est évidemment impossible de rendre compte en quelques dizaines de pages de l'ensemble des résultats obtenus et du détail des méthodes mathématiques employées dans ces articles, 
qui sont malheureusement d'une grande technicité compte tenu de la difficulté du problème.

Toutefois, dans les contributions de Bodineau, Gallagher, Saint-Raymond et Simonella au problème de la justification rigoureuse de la théorie cinétique des gaz, un rôle essentiel semble dévolu à une nouvelle 
équation de type Hamilton--Jacobi {\og  fonctionnelle\fg}, équation satisfaite par une notion de {\og fonction génératrice des cumulants\fg}   dans la même limite que celle étudiée par Lanford.

C'est donc sur cette équation de Hamilton--Jacobi, et sur ses applications à la description statistique de la dynamique des gaz, que l'on a choisi de centrer cet exposé. On trouvera une présentation sensiblement 
différente de ces mêmes travaux dans la conférence plénière de Laure Saint-Raymond au Congrès international des mathématiciens (2022): voir \parencite{BGSRSicm} --- présentation dont on s'est toutefois 
inspiré ici pour évoquer certaines des méthodes de démonstration utilisées dans les travaux cités plus haut. 

Les questions étudiées dans le présent exposé font évidemment appel à plusieurs notions fondamentales relatives à la justification rigoureuse par Lanford de l'équation de Boltzmann, dont un compte-rendu 
très précis et détaillé se trouve dans \parencite{GSRT} (voir également \parencite{CIP}), et qui a déjà fait l'objet d'une présentation à ce même séminaire \parencite{FG2014}. On a essayé autant que possible 
d'éviter les redites entre ce précédent exposé et le texte qui va suivre. Un certain nombre de questions, comme par exemple les {\og  paradoxes\fg}   liés à l'irréversibilité, ont déjà été décrits et commentés dans 
\parencite{FG2014}; on a délibérément choisi de ne pas y revenir, et d'y renvoyer le lecteur chaque fois que cela était possible.

Je tiens à remercier Thierry Bodineau, Isabelle Gallagher, Laure Saint-Raymond et Sergio Simonella de m'avoir communiqué une première version de leur article \parencite{BGSRSicm}, ainsi que de leurs
suggestions pendant la préparation de cet exposé.

\section{Limite de Boltzmann--Grad et théorème de Lanford}


Commençons par rappeler dans cette section quelques notions de base, déjà présentées dans \parencite{FG2014}, mais indispensables pour la suite.

\subsection{La dynamique moléculaire}


Considérons un gaz monoatomique, que l'on voit comme un système de $N$ molécules qui sont des boules de diamètre $\eps\in]0,\tfrac12[$. La position et la vitesse de la $k$-ième boule à l'instant $t$ sont notées 
respectivement $x_k(t)$ et $v_k(t)\in\mathbf R^3$ où $k=1,\ldots,N$. Dans toute la suite, on supposera pour simplifier que l'évolution du gaz est spatialement périodique, de sorte que 
$x_k(t)\in\mathbf T^3=\mathbf R^3/\mathbf Z^3$. En l'absence de force extérieure agissant sur les molécules, la deuxième loi de Newton écrite pour chaque molécule de gaz est\footnote{Pour $x,y\in\mathbf T^3$,
on note $\text{dist}(x,y)=\min\{|X-Y|\text{ t.q. }X,Y\in\mathbf R^3\,,\,\,X=x\text{ et }Y=y\text{ mod. }\mathbf Z^3\}$.}
\begin{equation}\label{Newton}
\frac{dx_k}{dt}(t)=v_k(t)\,,\quad\frac{dv_k}{dt}(t)=0\,,\qquad\hbox{ si }\text{dist}(x_k(t),x_l(t))>\eps\hbox{ pour tout }k\not=l\,.
\end{equation}
Au cours d'une collision entre la $k$-ième et la $l$-ième molécule à un instant $t^*$, les positions de ces molécules varient continûment en temps, c'est-à-dire que
\begin{equation}\label{Collxk}
\text{dist}(x_k(t^*-0),x_l(t^*-0))=\eps\implies x_k(t^*+0)=x_k(t^*-0)\quad\text{ et }\quad x_l(t^*+0)=x_l(t^*-0)\,,
\end{equation}
tandis que leurs vitesses varient de façon discontinue comme suit:
\begin{equation}\label{Collvkvj}
\begin{aligned}
v_k(t^*+0)&=v_k(t^*-0)-((v_k(t^*-0)-v_l(t^*-0))\cdot n_{kl}(t^*))\,n_{kl}(t^*)\,,
\\
v_l(t^*+0)&=v_l(t^*-0)+((v_k(t^*-0)-v_l(t^*-0))\cdot n_{kl}(t^*))\,n_{kl}(t^*)\,,
\end{aligned}
\end{equation}
en notant\footnote{Soient $x,y\in\mathbf T^3$ tels que $r=\text{dist}(x,y)<\tfrac12$. Il existe un unique vecteur unitaire $n$ dans $\mathbf R^3$ tel que $y=x+rn$. Ce vecteur sera noté $n=(y-x)/r$ ou $(y-x)/|y-x|$ 
dans la suite de cet exposé.} $n_{kl}(t^*):=(x_l(t^*\pm0)-x_k(t^*\pm0))/\eps$. On notera dans la suite de cet exposé
$$
\Lambda^\eps_N:=\{(x_1,\ldots,x_N)\in(\mathbf{T}^3)^N\hbox{ t.q. }\text{dist}(x_k(t),x_l(t))>\eps\hbox{ pour }k,l=1,\ldots,N\,,\,\,k\not=l\}
$$
--- il s'agit de l'ensemble des positions physiquement admissibles pour les molécules, qui ne peuvent s'interpénétrer --- et $\Gamma^\eps_N:=\Lambda^\eps_N\times(\mathbf{R}^3)^N$, l'espace des phases à $N$ 
particules. On suppose connues les positions et les vitesses de chaque molécule à l'instant initial $t=0$, soit
\begin{equation}\label{CondinNewton}
x_k(0)=x_k^{in}\,,\quad v_k(0)=v_k^{in}\,,\qquad\qquad k=1,\ldots,N
\end{equation}
avec $(x_1^{in},v_1^{in},\ldots,x_N^{in},v_N^{in})\in\Gamma^\eps_N$, et on s'intéresse aux solutions 
$$
t\mapsto(x_1(t),v_1(t),\ldots,x_N(t),v_N(t))\in\Gamma^\eps_N
$$
de ce problème de Cauchy (\ref{Newton})-(\ref{Collxk})-(\ref{Collvkvj}) avec la condition initiale (\ref{CondinNewton}). Notons $m_N$ la mesure de Lebesgue sur $(\mathbf T^3\times\mathbf R^3)^N$.

\begin{prop}
Il existe $E\subset\overline{\Gamma^\eps_N}$ tel que $m_N(E)=0$ et vérifiant la propriété suivante: pour tout $(x^{in}_1,v^{in}_1,\ldots,x^{in}_N,v^{in}_N)\in\overline{\Gamma^\eps_N}\setminus E$, le problème 
de Cauchy (\ref{Newton})-(\ref{Collxk})-(\ref{Collvkvj})-(\ref{CondinNewton}) admet une unique solution
$$
t\mapsto(x_1(t),v_1(t),\ldots,x_N(t),v_N(t))=:S_t^{N,\eps}(x^{in}_1,v^{in}_1,\ldots,x^{in}_N,v^{in}_N)
$$
définie pour tout $t\in\mathbf{R}$. Ceci définit $S^{N,\eps}_t$ comme flot sur $\overline{\Gamma^\eps_N}\setminus E$: pour tout $t\in\mathbf{R}$, on a 
$S^{N,\eps}_t(\overline{\Gamma^\eps_N\setminus E})\subset\overline{\Gamma^\eps_N\setminus E}$ et $S^{N,\eps}_{t+s}=S^{N,\eps}_{t}\circ S^{N,\eps}_{s}$. D'autre part, la mesure $m_N$ est invariante sous 
l'action de $S^{N,\eps}_t$, c'est-à-dire que $m_N(S^{N,\eps}_t(A))=m_N(A)$ pour tout $A\subset\overline{\Gamma^\eps_N}$ mesurable et tout $t\in\mathbf{R}$.
\end{prop}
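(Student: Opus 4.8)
The plan is to build the flow by alternating free flights and binary reflections, putting into the exceptional set $E$ every initial datum whose forward or backward trajectory ever meets a contact that is not a transverse binary one, as well as every datum producing an accumulation of collision times. Between collisions the system (\ref{Newton}) reduces to free transport $(x_k,v_k)\mapsto(x_k+tv_k,v_k)$, which is globally defined, smooth and preserves $m_N$ because its Jacobian is identically $1$. All the difficulty is thus concentrated in making sense of the velocity jump (\ref{Collvkvj}) at the contact instants and in iterating the procedure over all of $\mathbf R$.

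First I would analyse a single \emph{regular} contact, meaning an instant $t^*$ at which exactly one pair $(k,l)$ satisfies $\text{dist}(x_k,x_l)=\eps$ and the normal relative velocity $(v_k-v_l)\cdot n_{kl}$ is nonzero. There (\ref{Collvkvj}) determines the outgoing velocities \emph{uniquely}: the map is a linear orthogonal reflection of $(v_k,v_l)$, hence an involution preserving the total kinetic energy, with Jacobian of modulus $1$. Since the normal relative velocity does not vanish the two spheres actually separate, so the contact is isolated from the following one, and applying the same reflection backward shows the step is reversible. Uniqueness of the solution and the group law $S^{N,\eps}_{t+s}=S^{N,\eps}_t\circ S^{N,\eps}_s$ then follow from the autonomy and determinism of this construction on its domain of definition.

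Next I would show that the degenerate data form an $m_N$-negligible set. The boundary $\partial\Gamma^\eps_N$ is the union of the hypersurfaces $\{\text{dist}(x_k,x_l)=\eps\}$; the configurations where two of them meet --- triple contacts or two simultaneous disjoint contacts --- lie on a submanifold of positive codimension, and on the regular part the grazing locus $\{(v_k-v_l)\cdot n_{kl}=0\}$ is likewise of positive codimension. Because positions are unchanged at a collision by (\ref{Collxk}) while the reflection preserves $dv$ and the absolute normal speed $|(v_k-v_l)\cdot n_{kl}|$, the reflection preserves the flux measure $|(v_k-v_l)\cdot n_{kl}|\,d\sigma\,dv$ carried by the regular boundary; via the co-area formula this is exactly the invariance of $m_N$ across a collision, which together with the invariance under free flight gives the invariance of $m_N$ wherever $S^{N,\eps}_t$ is defined. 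Transporting the degenerate subsets back along the measure-preserving free flow, the set of data meeting a degenerate configuration at their $n$-th contact is $m_N$-null, and the countable union over $n$ stays null.

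The main obstacle is the global-in-time assertion: excluding, for almost every datum, an accumulation of collision instants in finite time. Conservation of $\sum_k|v_k|^2$ across every collision keeps the velocities bounded, but this is not by itself enough to forbid infinitely many collisions on a bounded interval. Here I would invoke the fact, specific to finitely many hard spheres, that a trajectory made of transverse binary collisions can undergo only finitely many of them on any compact time interval --- a geometric bound going back to Alexander (see also Burago--Ferleger--Kononenko), with the complete argument in the spirit of \parencite{GSRT} and \parencite{CIP}; an equivalent route is a direct proof that the accumulation set is $m_N$-null. Taking for $E$ the union of the null set of degenerate data and of the accumulation set, one gets $m_N(E)=0$; on $\overline{\Gamma^\eps_N}\setminus E$ the iteration involves only finitely many reflections per compact interval, hence defines $S^{N,\eps}_t$ for every $t\in\mathbf R$, stabilizes $\overline{\Gamma^\eps_N\setminus E}$ by construction and reversibility, and preserves $m_N$.
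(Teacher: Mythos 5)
The paper offers no proof of this proposition at all: it refers the reader to \parencite{Alex} and to Proposition 4.1.1 of \parencite{GSRT}. Your sketch is the standard argument from precisely those references --- free transport interleaved with transverse binary reflections, exclusion by codimension and transport along the measure-preserving flow of the null set of grazing, multiple or accumulating contacts, and invariance of $m_N$ via the boundary flux measure --- and it is correct at the level of detail the paper itself operates at, with the genuinely delicate step (non-accumulation of collision times for $m_N$-almost every datum) rightly identified and delegated to Alexander or Burago--Ferleger--Kononenko rather than glossed over.
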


\smallskip
Voir \parencite{Alex}, ou encore le chapitre 4 (en particulier la Proposition 4.1.1) de \parencite{GSRT}. Ce résultat est décrit avec un peu plus de détails dans la section 3.1 de \parencite{FG2014}.

\smallskip
Dans la suite, il sera commode de noter $z^{in}_k:=(x^{in}_k,v^{in}_k)$ et $z_k(t):=(x_k(t),v_k(t))$.

\subsection{Ensemble grand-canonique et loi d'échelle de Boltzmann--Grad}


La théorie cinétique des gaz est obtenue à partir du système \eqref{Newton}-\eqref{Collxk}-\eqref{Collvkvj} dans un régime asymptotique très particulier, connu sous le nom de loi d'échelle de Boltzmann--Grad. 
On y suppose que le diamètre des molécules $\eps$ est très petit par rapport au diamètre du domaine $\mathbf T^3$ où elles sont confinées, tandis que leur nombre $N$ est très grand. La quantité $\eps^2$ 
est donc homogène à une surface --- $\tfrac14\pi\eps^2$ est la surface de la section équatoriale d'une molécule --- de sorte que la quantité $\ell:=\text{Vol}(\mathbf T^3)/(N\eps^2)$ est homogène à une
longueur, proportionnelle au libre parcours moyen --- c'est-à-dire à la longueur moyenne séparant deux collisions subies par la même molécule typique dans le gaz. La loi d'échelle de Boltzmann--Grad postule 
que cette longueur est du même ordre de grandeur que le diamètre du domaine spatial $\mathbf T^3$. En particulier, le volume $(N-1)\tfrac43\pi\eps^3$ de l'espace dans lequel aucun des points $x_k(t)$ ne 
peut pénétrer (dit {\og  volume exclu\fg}, rempli par les $N-1$ autres molécules) est $O(\eps)$ et donc négligeable dans la limite de Boltzmann--Grad. C'est pourquoi la théorie cinétique des gaz de Maxwell et 
Boltzmann, obtenue dans cette limite, ne peut décrire que des gaz parfaits. 

Contrairement au cadre considéré dans \parencite{Lan75,CIP,GSRT} ainsi que dans \parencite{FG2014}, où $N$ est un paramètre entier que l'on fait tendre vers l'infini\footnote{Formalisme dit de l'{\og  ensemble 
canonique\fg}.}, on supposera ici que $N$, ainsi que les positions et les vitesses initiales des $N$ molécules sont des variables aléatoires. Soit une suite $(\phi_n)_{n\ge 0}$ où  
$\phi_n\in C_b((\mathbf T^3\times\mathbf R^3)^n)$ pour tout $n\ge 1$, et où $\phi_0\in\mathbf R$. De façon équivalente, on considère la fonction définie sur l'espace grand-canonique $\Omega$ par
\begin{equation}\label{DefPhi}
\Phi:\,\Omega:=\bigcup_{n\ge 0}\{n\}\times(\mathbf T^3\times\mathbf R^3)^n\ni(N,z_1,\ldots,z_N)\mapsto\sum_{n\ge 0}\mathbf 1_{N=n}\phi_n(z_1,\ldots,z_n)\,,
\end{equation}
et on définit sa moyenne grand-canonique par la formule
$$
\mathbb E_\eps(\Phi):=\frac1{\mathcal Z_\eps}\sum_{n\ge 0}\frac{\mu_\eps^n}{n!}\int_{(\mathbf T^3\times\mathbf R^3)^n}\phi_n(Z_n)\mathbb F^{in}_n(Z_n)dZ_n\,,
$$
où on a noté $Z_n:=(z_1,\ldots,z_n)$, où $\mu_\eps>0$ est un paramètre qui sera précisé plus loin, et où
\begin{equation}\label{DefFnin}
\mathbb F^{in}_n(Z_n):=\prod_{i=1}^nf^{in}(z_i)\prod_{1\le j<k\le n}\mathbf 1_{\text{dist}(x_j,x_k)>\eps}\,.
\end{equation}
On définit de la sorte une mesure de probabilité borélienne $\mathbb P_\eps$ sur $\Omega$, et le terme {\og  ensemble grand-canonique\fg}   désigne le couple $(\Omega,\mathbb P_\eps)$.

Dans cette formule, $f^{in}$ est la fonction de distribution (à une molécule) qui joue le rôle de condition initiale dans l'équation de Boltzmann. Sans perte de généralité, on supposera que $f^{in}$ est une densité 
de probabilité sur l'espace des phases $\mathbf T^3\times\mathbf R^3$. Notons que 
$$
\tfrac43\pi n\eps^3>1\implies\prod_{1\le j<k\le n}\mathbf 1_{\text{dist}(x_j,x_k)>\eps}=0\,,
$$
de sorte que la somme définissant $\mathbb E_\eps(\Phi)$ ne comporte qu'un nombre fini de termes non nuls. Le réel $\mathcal Z_\eps$ est défini par la condition de normalisation $\mathbb E_\eps(1)=1$:
$$
\mathcal Z_\eps:=\sum_{n\ge 0}\frac{\mu_\eps^n}{n!}\int_{(\mathbf T^3\times\mathbf R^3)^n}\mathbb F^{in}_n(Z_n)dZ_n\,.
$$
\begin{lemm}\label{L-EN=}
Supposons que la densité de probabilité $f^{in}$ appartient à l'espace $L^\infty(\mathbf T^3;L^1(\mathbf R^3))$ et que $\eps^3\mu_\eps\to 0$ quand $\eps\to 0^+$. Alors
$$
\mathbb E_\eps(N)\sim\mu_\eps\qquad\text{ lorsque }\eps\to 0^+\,,
$$
où $N$ désigne la fonction $\Phi$ sur l'espace grand-canonique $\Omega$ associée par la formule \eqref{DefPhi} à la suite de fonctions $(\phi_n)_{n\ge 0}$ telle que $\phi_n(z_1,\ldots,z_n)=n$ pour tout $n\ge 0$ et tous $z_1,\ldots,z_n\in\mathbf T^3\times\mathbf R^3$.
\end{lemm}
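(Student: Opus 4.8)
The plan is to write $\mathbb E_\eps(N)$ as a ratio of two (finite) power series in $\mu_\eps$ and to show that ratio equals $\mu_\eps\,(1+o(1))$. Setting $I_n:=\int_{(\mathbf T^3\times\mathbf R^3)^n}\mathbb F^{in}_n(Z_n)\,dZ_n$, one has $\mathcal Z_\eps=\sum_{n\ge 0}\frac{\mu_\eps^n}{n!}I_n$, and since here $\phi_n\equiv n$, a shift of index $n\mapsto n+1$ gives
\[
\mathbb E_\eps(N)=\frac1{\mathcal Z_\eps}\sum_{n\ge 1}\frac{\mu_\eps^n}{(n-1)!}I_n=\mu_\eps\,\frac{J}{\mathcal Z_\eps},\qquad J:=\sum_{n\ge 0}\frac{\mu_\eps^n}{n!}I_{n+1}.
\]
Because $I_n=0$ as soon as $\tfrac43\pi n\eps^3>1$, all these sums are finite and no convergence issue arises; moreover $0<I_0=1\le\mathcal Z_\eps$. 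It therefore suffices to prove $J/\mathcal Z_\eps\to 1$.

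The heart of the matter is a comparison of consecutive $I_n$. First I would integrate out the last variable $z_{n+1}$ in $I_{n+1}$, using the factorization $\prod_{1\le j<k\le n+1}\mathbf 1_{\text{dist}(x_j,x_k)>\eps}=\bigl(\prod_{1\le j<k\le n}\mathbf 1_{\text{dist}(x_j,x_k)>\eps}\bigr)\prod_{j=1}^n\mathbf 1_{\text{dist}(x_j,x_{n+1})>\eps}$ together with $\int f^{in}\,dz=1$. Writing $\rho^{in}(x):=\int_{\mathbf R^3}f^{in}(x,v)\,dv$, this produces the exact identity
\[
I_{n+1}=I_n-\int_{(\mathbf T^3\times\mathbf R^3)^n}\mathbb F^{in}_n(Z_n)\,g_n(x_1,\ldots,x_n)\,dZ_n,\qquad g_n:=\int_{\mathbf T^3}\rho^{in}(y)\,\mathbf 1_{\{\exists j\,:\,\text{dist}(x_j,y)\le\eps\}}\,dy,
\]
where $g_n\ge 0$; hence $(I_n)_n$ is non-increasing and $0\le J\le\mathcal Z_\eps$.

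The key step, and the one that makes the scaling work, is to bound $g_n$ while \emph{retaining} the factor $\mathbb F^{in}_n$. By subadditivity and the value $\tfrac43\pi\eps^3$ of the volume of a ball of radius $\eps<\tfrac12$ in $\mathbf T^3$, one gets $g_n\le\sum_{j=1}^n\int\rho^{in}(y)\,\mathbf 1_{\text{dist}(x_j,y)\le\eps}\,dy\le\tfrac43\pi M n\eps^3$, where $M:=\|f^{in}\|_{L^\infty(\mathbf T^3;L^1(\mathbf R^3))}<\infty$ by the first hypothesis. Crucially I do \emph{not} bound $\mathbb F^{in}_n$ by $1$; keeping it yields
\[
0\le I_n-I_{n+1}\le\tfrac43\pi M n\eps^3\int\mathbb F^{in}_n\,dZ_n=\tfrac43\pi M n\eps^3\,I_n.
\]
Multiplying by $\mu_\eps^n/n!$ and summing, the right-hand side reassembles, after the same index shift as above, into $\tfrac43\pi M\eps^3\mu_\eps J$, so that $\mathcal Z_\eps-J=\sum_n\frac{\mu_\eps^n}{n!}(I_n-I_{n+1})\le\tfrac43\pi M\eps^3\mu_\eps J$. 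Dividing by $\mathcal Z_\eps>0$ and using $J\le\mathcal Z_\eps$ gives
\[
0\le 1-\frac{J}{\mathcal Z_\eps}\le\tfrac43\pi M\eps^3\mu_\eps,
\]
which tends to $0$ as $\eps\to 0^+$ by the second hypothesis. Hence $\mathbb E_\eps(N)=\mu_\eps J/\mathcal Z_\eps\sim\mu_\eps$.

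I expect the main obstacle to be precisely this choice in the key step. The naive route — estimating $1-I_n$ directly by a union bound over pairs — produces a correction of order $\tfrac12 n(n-1)\eps^3$, hence a term of order $\eps^3\mu_\eps^2$ in $\mathcal Z_\eps$; under the Boltzmann--Grad scaling $\mu_\eps\sim\eps^{-2}$ this quantity diverges, so it cannot be controlled by the hypothesis $\eps^3\mu_\eps\to 0$ alone, and one would then be forced to produce a separate lower bound on $\mathcal Z_\eps$. Working instead with the telescoping difference $I_n-I_{n+1}$ and keeping the factor $I_n$ turns the estimate into a self-consistent one costing only one extra power of $\mu_\eps$, which is exactly what the hypothesis affords.
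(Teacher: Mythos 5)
Your proof is correct and follows essentially the same route as the paper: the exact telescoping identity obtained by integrating out the last variable, the bound $0\le I_n-I_{n+1}\le\tfrac43\pi M n\eps^3 I_n$ obtained by a union bound on the exclusion indicator while keeping the factor $\mathbb F^{in}_n$, and resummation against $\mu_\eps^n/n!$ are precisely the paper's estimates $u_n\le u_{n-1}$ and $u_n\ge u_{n-1}(1-\tfrac43\pi C(n-1)\eps^3)$. The only cosmetic difference is that you package the conclusion as $J/\mathcal Z_\eps\to1$, whereas the paper sums the two inequalities directly to get $\mu_\eps(1-\tfrac43\pi C\eps^3\mu_\eps)\le\mathbb E_\eps(N)\le\mu_\eps$.
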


\begin{proof} Posons $C:=\|f^{in}\|_{L^\infty(\mathbf T^3;L^1(\mathbf R^3))}$, et 
$$
\begin{aligned}
u_n:=&\int_{(\mathbf T^3\times\mathbf R^3)^n}\mathbb F^{in}_n(Z_n)dZ_n
\\
\le&\int_{(\mathbf T^3\times\mathbf R^3)^n}\mathbb F^{in}_{n-1}(Z_{n-1})dZ_{n-1}\int_{\mathbf T^3\times\mathbf R^3}f^{in}(z_n)dz_n&\le u_{n-1}\le 1\,.
\end{aligned}
$$
D'autre part
$$
\begin{aligned}
u_n\ge&\int_{(\mathbf T^3\times\mathbf R^3)^{n-1}}\left(\int_{\mathbf T^3\times\mathbf R^3}\left(1-\sum_{j=1}^{n-1}\mathbf 1_{|x_n-x_j|\le\eps}\right)f^{in}(z_n)dz_n\right)\mathbb F^{in}_{n-1}(Z_{n-1})dZ_{n-1}
\\
\ge& u_{n-1}(1-\tfrac43\pi C(n-1)\eps^3)\,.
\end{aligned}
$$ 
Donc
$$
\mathbb E_\eps(N)\le\frac1{\mathcal Z_\eps}\sum_{n\ge 1}\frac{\mu_\eps^n}{n!}n=\frac{\mu_\eps}{\mathcal Z_\eps}\sum_{n\ge 1}\frac{\mu_\eps^{n-1}}{(n-1)!}=\mu_\eps\,,
$$
tandis que, comme $u_{n-1}\le u_{n-2}$
$$
\begin{aligned}
\mathbb E_\eps(N)\ge&\frac{\mu_\eps}{\mathcal Z_\eps}\sum_{n\ge 0}\frac{\mu_\eps^{n-1}}{n!}nu_{n-1}(1-\tfrac43\pi C(n-1)\eps^3)
\\
\ge&\frac{\mu_\eps}{\mathcal Z_\eps}\sum_{n\ge 1}\frac{\mu_\eps^{n-1}u_{n-1}}{(n-1)!}-\tfrac43\pi C\eps^3\frac{\mu^2_\eps}{\mathcal Z_\eps}\sum_{n\ge 2}\frac{\mu_\eps^{n-2}u_{n-2}}{(n-2)!}
=\mu_\eps(1-\tfrac43\pi C\eps^3\mu_\eps)\,.
\end{aligned}
$$
\end{proof}

\begin{rema}
En supposant que $\mu_\eps\to+\infty$ lorsque $\eps\to 0^+$ et sous les mêmes hypothèses que dans le Lemme \ref{L-EN=}, on a $\mathcal Z_\eps\sim e^{\mu_\eps}$. La démonstration de ce fait, laissée 
au lecteur, suit de près celle du lemme.
\end{rema}

Dans le formalisme grand-canonique, on réalisera donc la loi d'échelle de Boltzmann--Grad en posant
$$
\mu_\eps=\eps^{-2}\,,\quad\text{ de sorte que }\eps^2\mathbb E_\eps(N)\to 1\text{ lorsque }\eps\to 0^+\,.
$$

Observons d'ailleurs que
$$
\mathbb P_\eps(\{N=n\})=\mathbb E_\eps(\mathbf 1_{N=n})=\mathcal Z_\eps^{-1}\frac{\mu_\eps^n}{n!}\int_{(\mathbf T^3\times\mathbf R^3)^n}\mathbb F^{in}_n(Z_n)dZ_n\,,
$$
formule qui évoque bien sûr la loi de Poisson. (Si on fait $\eps=0$ dans cette formule, et que l'on pose $\mu_0=\lambda$ --- ce qui contredit évidemment la loi d'échelle de Boltzmann--Grad --- on trouve 
en effet que $\mathcal Z_0=e^{\lambda}$, puis que $\mathbb P_0(\{N=n\})=e^{-\lambda}\frac{\lambda^n}{n!}$, ce qui est la définition de la loi de Poisson de paramètre $\lambda$).

Le lecteur familier des énoncés usuels du théorème de Lanford dans \parencite{CIP} ou \parencite{GSRT} pourra être surpris du choix de l'ensemble grand-canonique dans cette étude. En effet, le processus 
collisionnel considéré ici, à savoir des collisions binaires entre sphères dures, laisse invariant le nombre de molécules. N'est-il donc pas artificiel de considérer le nombre de molécules comme aléatoire? 
En réalité, le fait de prescrire le nombre total de molécules introduit nécessairement une corrélation entre ces molécules qui ne provient ni de la dynamique, ni du choix de la condition initiale. Le formalisme 
grand-canonique permet précisément d'éviter cela (voir la remarque (6) dans la section 2.4.1 de \parencite{PulviSim}).

\subsection{Mesure(s) empirique(s) et corrélations}


Pour tout $\eps>0$, posons
$$
\rho^\eps[N,Z_N]:=\frac1{\mu_\eps}\sum_{j=1}^N\delta_{z_j}\,,\qquad Z_N:=(z_1,\ldots,z_N)\in\Gamma^\eps_N\,.
$$
Il s'agit d'une fonction sur l'espace grand-canonique $\Omega$ à valeurs dans l'espace des mesures de Radon sur $\mathbf T^3\times\mathbf R^3$. Elle vérifie $\rho^\eps[N,Z_N]\ge 0$ et 
$\|\rho^\eps[N,Z_N]\|_{VT}=N/\mu_\eps$. Pour tout $k\ge 1$, on pose de même
$$
\rho^\eps_k[N,Z_N]:=\frac1{\mu^k_\eps}\sum_{\genfrac{}{}{0pt}{2}{j:\{1,\ldots,k\}\to\{1,\ldots,N\}}{\text{ injective }}}\delta_{z_{j(1)}}\otimes\ldots\otimes\delta_{z_{j(k)}}\ge 0\,,
$$
fonction définie sur $\Omega$ à valeurs dans l'espace des mesures de Radon sur $(\mathbf T^3\times\mathbf R^3)^k$. Elle vérifie $\rho^\eps_k\ge 0$ ainsi que
$\|\rho^\eps_k[N,Z_N]\|_{VT}=N(N-1)\ldots(N-k+1)/\mu^k_\eps$. On notera enfin
\begin{equation}\label{DefRhoepst}
\rho^\eps_t:=\rho^\eps[N,S^{N,\eps}_tZ_N]\,,\qquad\rho^\eps_{k,t}:=\rho^\eps_k[N,S^{N,\eps}_tZ_N]\,.
\end{equation}

A partir de là, on définit la suite des corrélations $F^\eps_k(t,\cdot)$ entre $k$ molécules à l'instant $t$ pour l'ensemble grand-canonique par la formule
\begin{equation}\label{DefCorrel}
\int_{(\mathbf T^3\times\mathbf R^3)^k}F^\eps_k(t,Z_k)h_k(Z_k)dZ_k:=\mathbb E_\eps(\langle\rho^\eps_{k,t},h_k\rangle)\,,\quad k\ge 1\,,
\end{equation}
pour toute fonction test $h_k\in C_b((\mathbf T^3\times\mathbf R^3)^k)$ symétrique en ses $k$ variables. 

Ce formalisme diffère de celui de la {\og  hiérarchie BBGKY\fg}: voir \parencite{FG2014}.  Posons
$$
\mathbb F_{n:k}(t,Z_k):=\int_{(\mathbf T^3\times\mathbf R^3)^{n-k}}\mathbb F^{in}_n(S^{n,\eps}_{-t}Z_n)dz_{k+1}\ldots dz_n\,.
$$
Cette formule vaut pour $n>k\ge 0$; on convient par ailleurs que $\mathbb F_{n:n}=\mathbb F_n=\mathbb F^{in}_n\circ S^{n,\eps}_{-t}$ et que $\mathbb F_{n:k}=0$ lorsque $k>n$.

Une remarque importante s'impose: la fonction de distribution à  $n$ corps initiale $\mathbb F^{in}_n$ est évidemment symétrique en les variables $z_1,\ldots,z_n$ comme le montre la formule \eqref{DefFnin}, 
et cette symétrie est propagée par la dynamique $S^{n,\eps}_t$, de sorte que $\mathbb F_n(t,\cdot)$ est également symétrique en les variables $z_1,\ldots,z_n$. Du point de vue de la physique, cette symétrie
traduit le fait que les molécules sont indistinguables.

Le point de vue de la hiérarchie BBGKY décrit dans \parencite{FG2014} consistait à étudier $\mathbb F_{n:1}$ dans la limite où $n\to+\infty$ avec $\eps\to 0^+$ vérifiant la condition de Boltzmann--Grad 
$n\eps^2=1$. La traduction entre ce formalisme et celui de l'ensemble grand-canonique considéré ici découle du lemme ci-dessous.

\begin{lemm}
Pour tout $\eps>0$ et tout $t\ge 0$, 
$$
F^\eps_k(t,\cdot)=\frac1{\mathcal Z_\eps}\sum_{n\ge k}\frac{\mu_\eps^{n-k}}{(n-k)!}\mathbb F_{n:k}(t,\cdot)\,,\qquad k\ge 1\,.
$$
\end{lemm}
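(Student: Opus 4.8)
The plan is to prove the identity in its dual (weak) form: by \eqref{DefCorrel} the correlation $F^\eps_k(t,\cdot)$ is characterized by its integrals against arbitrary symmetric $h_k\in C_b((\mathbf T^3\times\mathbf R^3)^k)$, so it suffices to show that $\mathbb E_\eps(\langle\rho^\eps_{k,t},h_k\rangle)$ coincides with $\int h_k(Z_k)\bigl(\tfrac1{\mathcal Z_\eps}\sum_{n\ge k}\tfrac{\mu_\eps^{n-k}}{(n-k)!}\mathbb F_{n:k}(t,Z_k)\bigr)\,dZ_k$ for every such $h_k$. Since both $F^\eps_k(t,\cdot)$ and the right-hand side of the asserted formula are symmetric in their $k$ arguments, and symmetric test functions determine a symmetric density, this equality of pairings yields the stated pointwise (almost everywhere) identity.

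First I would unfold the grand-canonical expectation. On the component $\{N=n\}$ of $\Omega$, the random variable $\langle\rho^\eps_{k,t},h_k\rangle$ equals $\mu_\eps^{-k}\sum_{j}h_k\bigl((S^{n,\eps}_tZ_n)_{j(1)},\ldots,(S^{n,\eps}_tZ_n)_{j(k)}\bigr)$, the sum running over the injective maps $j\colon\{1,\ldots,k\}\to\{1,\ldots,n\}$; in particular it vanishes when $n<k$, so the series effectively starts at $n=k$. Inserting this into the definition of $\mathbb E_\eps$ produces, up to the factor $1/\mathcal Z_\eps$, a double sum over $n\ge k$ and over injective $j$, each term weighted by $\tfrac{\mu_\eps^n}{n!}\mu_\eps^{-k}$ and integrated against $\mathbb F^{in}_n$.

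The key step, and the only genuinely delicate point, is to collapse the inner sum over injective maps. Here I would invoke the two symmetries stressed just before the statement: $\mathbb F^{in}_n$ is symmetric in its $n$ arguments, and the hard-sphere flow $S^{n,\eps}_t$ is equivariant under relabelling of the molecules (this is precisely where their indistinguishability enters). Given two injective maps $j,j'$, pick a permutation $\sigma$ of $\{1,\ldots,n\}$ with $j'=\sigma\circ j$; the change of variables $z_i\mapsto z_{\sigma(i)}$ leaves both $dZ_n$ and $\mathbb F^{in}_n$ invariant while carrying the $j$-term to the $j'$-term, so all $n!/(n-k)!$ terms equal the one for the identity embedding. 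The counting factor then combines into $\tfrac{\mu_\eps^n}{n!}\cdot\mu_\eps^{-k}\cdot\tfrac{n!}{(n-k)!}=\tfrac{\mu_\eps^{n-k}}{(n-k)!}$.

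It then remains to recognize $\mathbb F_{n:k}$ in the surviving integral $\int h_k\bigl((S^{n,\eps}_tZ_n)_1,\ldots,(S^{n,\eps}_tZ_n)_k\bigr)\mathbb F^{in}_n(Z_n)\,dZ_n$. I would push the flow off the argument of $h_k$ by the substitution $Z_n=S^{n,\eps}_{-t}W_n$: by the invariance of $m_N$ under $S^{n,\eps}_t$ stated in the Proposition, $dZ_n=dW_n$; the first $k$ arguments of $h_k$ become simply $w_1,\ldots,w_k$; and $\mathbb F^{in}_n(S^{n,\eps}_{-t}W_n)=\mathbb F_n(t,W_n)$. Integrating out $w_{k+1},\ldots,w_n$ reproduces exactly $\mathbb F_{n:k}(t,W_k)$ by its definition. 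Summing over $n\ge k$ and reinstating $1/\mathcal Z_\eps$ gives the asserted weak identity, and the arbitrariness of the symmetric $h_k$ identifies $F^\eps_k(t,\cdot)$ with $\tfrac1{\mathcal Z_\eps}\sum_{n\ge k}\tfrac{\mu_\eps^{n-k}}{(n-k)!}\mathbb F_{n:k}(t,\cdot)$, as claimed.
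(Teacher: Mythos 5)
Your proposal is correct and follows essentially the same route as the paper's proof: unfold the grand-canonical expectation, use the invariance of the Lebesgue measure under the hard-sphere flow to transfer $S^{n,\eps}_{t}$ onto the initial data, collapse the sum over injective maps via the indistinguishability of the molecules into the factor $n(n-1)\cdots(n-k+1)$, and integrate out the last $n-k$ variables to recognize $\mathbb F_{n:k}$. The only (immaterial) difference is the order of the two main steps — you collapse the sum over injections using the symmetry of $\mathbb F^{in}_n$ and the permutation-equivariance of the flow before changing variables, whereas the paper changes variables first and then invokes the propagated symmetry of $\mathbb F_n(t,\cdot)$.
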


\begin{proof}
En effet, pour toute fonction test $h_k\in C_b((\mathbf T^3\times\mathbf R^3)^k)$ symétrique, 
$$
\begin{aligned}
\mathbb E_\eps(\langle\rho^\eps_{k,t},h_k\rangle)=\frac1{\mathcal Z_\eps}\sum_{n\ge k}\frac{\mu_\eps^n}{n!}\int_{(\mathbf T^3\times\mathbf R^3)^{n-k}}\langle\rho^\eps_k[n,S^{n,\eps}_tZ_n],h_k\rangle
\mathbb F^{in}_n(Z_n)dZ_n
\\
=\frac1{\mathcal Z_\eps}\sum_{n\ge k}\frac{\mu_\eps^n}{n!}\int_{(\mathbf T^3\times\mathbf R^3)^{n-k}}\langle\rho^\eps_k[n,Z_n],h_k\rangle\mathbb F^{in}_n(S^{n,\eps}_{-t}Z_n)dZ_n
\\
=\frac1{\mathcal Z_\eps}\sum_{n\ge k}\frac{\mu_\eps^{n-k}}{n!}\int_{(\mathbf T^3\times\mathbf R^3)^{n-k}}\sum_{\genfrac{}{}{0pt}{2}{j:\{1,\ldots,k\}\to\{1,\ldots,N\}}{\text{ injective }}}h_k(z_{j(1)},\ldots,z_{j(k)})
\mathbb F^{in}_n(S^{n,\eps}_{-t}Z_n)dZ_n
\\
=\frac1{\mathcal Z_\eps}\sum_{n\ge k}\frac{\mu_\eps^{n-k}}{n!}n(n-1)\ldots(n-k+1)\int_{(\mathbf T^3\times\mathbf R^3)^{n-k}}h_k(z_1,\ldots,z_k)\mathbb F^{in}_n(S^{n,\eps}_{-t}Z_n)dZ_n
\\
=\frac1{\mathcal Z_\eps}\sum_{n\ge k}\frac{\mu_\eps^{n-k}}{(n-k)!}\int_{(\mathbf T^3\times\mathbf R^3)^{n-k}}h_k(z_1,\ldots,z_k)\mathbb F_{n:k}(t,Z_n)dZ_n&\,.
\end{aligned}
$$
La mesure de Lebesgue sur $\Gamma^\eps_n$ est invariante par $S^{n,\eps}_t$, d'où la seconde égalité. La quatrième égalité découle de la symétrie des fonctions $h_k$ et $\mathbb F_n(t,\cdot)$.
\end{proof}

\subsection{Équation de Liouville et corrélations}


La formule $\mathbb F_n(t,Z_n)=\mathbb F^{in}_n(S^{\eps,n}_{-t}Z_n)$ montre que $\mathbb F_n$ est constante sur les courbes intégrales de \eqref{Newton}-\eqref{Collxk}-\eqref{Collvkvj}, d'où l'on déduit 
l'équation de Liouville
\begin{equation}\label{LiouvilleBC}
\begin{aligned}
{}&\partial_t\mathbb F_n(t,Z_n)+\sum_{i=1}^nv_i\cdot\nabla_{x_i}\mathbb F_n(t,Z_n)=0\,,\qquad Z_n\!\in\!\Gamma^\eps_n\,,
\\
&\mathbb F_n(t,Z_n)=\mathbb F_n(t,\hat Z_n[i,j])\qquad\text{ si }\text{dist}(x_i(t),x_j(t))=\eps\,,
\end{aligned}
\end{equation}
où 
$$
\begin{aligned}
\hat Z_n[i,j]:=(x_1,v_1,\ldots,x_i,v'_i,\ldots,x_j,v'_j,\ldots,x_n,v_n)\,,\qquad\text{ lorsque }\text{dist}(x_i(t),x_j(t))=\eps\,,
\\
v'_i\!:=\!v_i-((v_i-v_j)\cdot n_{ji})n_{ji}\,,\quad v'_j\!:=\!v_j\!+((v_i-v_j)\cdot n_{ji})n_{ji}\,,\,\,\,\,\,\quad\qquad\text{ où }n_{ji}:=\tfrac{x_i-x_j}\eps\,.
\end{aligned}
$$
En particulier, la condition aux limites ajoutée à l'équation de Liouville découle de \eqref{Collxk}-\eqref{Collvkvj}. Il sera commode de remplacer le problème aux limites ci-dessus pour l'équation de Liouville, 
posé sur le domaine $\Gamma^\eps_n$, par une équation au sens des distributions sur $(\mathbf T^3\times\mathbf R^3)^n$. Par définition $\mathbb F_n(t,Z_n)=0$ dans 
$(\mathbf T^3\times\mathbf R^3)^n\setminus\Gamma^\eps_n$, de sorte que
\begin{equation}\label{LiouvilleDistrib}
\begin{aligned}
\partial_t\mathbb F_n(t,Z_n)+\sum_{i=1}^nv_i\cdot\nabla_{x_i}\mathbb F_n(t,Z_n)=&\sum_{1\le i<j\le n}\mathbb F_n\big|_{\partial^+\Gamma^{\eps}_n}(v_j-v_i)\cdot n_{ij}\delta_{\text{dist}(x_i,x_j)=\eps}
\\
=&\sum_{1\le i<j\le n}\mathbb F_n\big|_{\partial^+\Gamma^{\eps}_n}((v_j-v_i)\cdot n_{ij})_+\delta_{\text{dist}(x_i,x_j)=\eps}
\\
&-\sum_{1\le i<j\le n}\mathbb F_n\big|_{\partial^+\Gamma^{\eps}_n}((v_j-v_i)\cdot n_{ij})_-\delta_{\text{dist}(x_i,x_j)=\eps}
\end{aligned}
\end{equation}
au sens des distributions sur $(\mathbf T^3\times\mathbf R^3)^n$. Pour tout réel $r$, on pose $r_+=\max(r,0)$ et $r_-=\max(-r,0)$, tandis que la trace interne de $\mathbb F_n$ sur $\partial\Gamma^\eps_n$
est notée
$$
\mathbb F_n\big|_{\partial^+\Gamma^{\eps}_n}:=\lim_{\eta\to 0^+}\mathbb F_n\big|_{\partial\Gamma^{\eps+\eta}_n}\,.
$$
Enfin $\delta_{\text{dist}(x_i,x_j)=\eps}$ désigne la distribution de simple couche de densité $1$ portée par l'hypersurface de $(\mathbf T^3)^n$ d'équation $\text{dist}(x_i,x_j)=\eps$. (Pour la justification de \eqref{LiouvilleDistrib}, 
voir la formule (20) de \parencite{FG2014}, ainsi que la formule (II.3.1) de \parencite{Schwartz}). 

On utilise alors la condition aux limites de \eqref{LiouvilleBC} pour exprimer $\mathbb F_n\big|_{\partial^+\Gamma^{\eps}_n}$ aux endroits où $(v_j-v_i)\cdot n_{ij}>0$ (précaution absolument essentielle, 
comme on le verra): 
\begin{equation}\label{LiouvilleDistrib2}
\begin{aligned}
\partial_t\mathbb F_n(t,Z_n)+\sum_{i=1}^nv_i\cdot\nabla_{x_i}\mathbb F_n(t,Z_n)
\\
=\sum_{1\le i<j\le n}\mathbb F_n(t,\hat Z_n[i,j])\big|_{\partial^+\Gamma^{\eps}_n}((v_j-v_i)\cdot n_{ij})_+\delta_{\text{dist}(x_i,x_j)=\eps}
\\
-\sum_{1\le i<j\le n}\mathbb F_n(t,Z_n)\big|_{\partial^+\Gamma^{\eps}_n}((v_j-v_i)\cdot n_{ij})_-\delta_{\text{dist}(x_i,x_j)=\eps}&\,.
\end{aligned}
\end{equation}
En intégrant chaque membre de cette égalité par rapport aux variables $z_2,\ldots,z_n$, on aboutit à 
\begin{equation}\label{BBGKY2}
\partial_t\mathbb F_{n:1}(t,z_1)+v_1\cdot\nabla_{x_1}\mathbb F_{n:1}(t,z_1)=(n-1)\eps^2\mathcal B_\eps^{12}(\mathbb F_{n:2})(t,z_1)
\end{equation}
où
$$
\begin{aligned}
\mathcal B_\eps^{12}(\mathbb F_{n:2})(t,z_1)=&\iint_{\mathbf R^3\times\mathbf S^2}\mathbb F_{n:2}(t,x_1,v'_1,x_1-\eps\omega,v'_2)((v_1-v_2)\cdot\omega)_+dv_2d\omega
\\
&-\!\iint_{\mathbf R^3\times\mathbf S^2}\!\mathbb F_{n:2}(t,x_1,v_1,x_1\!+\!\eps\omega,v_2)((v_1\!-\!v_2)\cdot\omega)_+dv_2d\omega\,,
\end{aligned}
$$
où on a posé
$$
v'_1:=v_1-((v_1-v_2)\cdot\omega)\omega\,,\qquad v'_2:=v_2+((v_1-v_2)\cdot\omega)\omega\,.
$$

On renvoie le lecteur à \parencite{FG2014}, tout particulièrement à l'argument permettant de passer de \eqref{LiouvilleDistrib}, c'est-à-dire de l'égalité (20) de \parencite{FG2014}, à \eqref{BBGKY2}, autrement 
dit à l'égalité (22) et à la formule (23) de \parencite{FG2014}.

Multiplions maintenant chaque membre de \eqref{BBGKY2} par $\mu_\eps^{n-1}/(n-1)!$, et sommons les expressions ainsi obtenues pour $n\ge 1$. Comme $\eps^2\mu_\eps=1$, il vient
\begin{equation}\label{CorrEq1}
\begin{aligned}
(\partial_t+v_1\cdot\nabla_{x_1})F^\eps_1(t,z_1)=&(\partial_t+v_1\cdot\nabla_{x_1})\frac1{\mathcal Z_\eps}\sum_{n\ge 1}\frac{\mu_\eps^{n-1}}{(n-1)!}\mathbb F_{n:1}(t,z_1)
\\
=&\frac1{\mathcal Z_\eps}\sum_{n\ge 2}\frac{\mu_\eps^{n-2}}{(n-2)!}\mathcal B_\eps^{12}(\mathbb F_{n:2})(t,z_1)=\mathcal B_\eps^{12}(F^\eps_2)(t,z_1)\,.
\end{aligned}
\end{equation}

\subsection{L'hypothèse de chaos moléculaire et le théorème de Lanford}


L'égalité \eqref{CorrEq1} n'est pas vraiment une équation pour $F^\eps_1$, puisqu'elle fait intervenir $F^\eps_2$.

L'idée clé de Boltzmann lui permettant d'arriver à l'équation portant son nom est que deux molécules \textit{sur le point d'entrer en collision} sont statistiquement indépendantes, hypothèse dite du {\og chaos 
moléculaire\fg}  (voir par exemple les sections 8 et 11 de \parencite{Grad58}). Évidemment, deux molécules \textit{venant juste d'entrer en collision} ne peuvent pas être statistiquement indépendantes. 
Considérons alors l'intégrande du terme $\mathcal B_\eps^{12}(F^\eps_2)$, à savoir
$$
(F^\eps_2(t,x_1,v'_1,x_1-\eps\omega,v'_2)-F^\eps_2(t,x_1,v_1,x_1+\eps\omega,v_2))((v_1-v_2)\cdot\omega)_+\,.
$$
Dans le second terme de cette différence, on a
$$
(v_1-v_2)\cdot\omega=-\tfrac1\eps(v_1-v_2)\cdot(x_1-x_2)>0\,,
$$
de sorte que la molécule située en $x_1$ de vitesse $v_1$ s'approche de la molécule située en $x_2=x_1+\eps\omega$ et de vitesse $v_2$ (voir la figure \ref{F-Collision}). Comme ces deux molécules sont 
sur le point d'entrer en collision, l'hypothèse du chaos moléculaire entraîne que
$$
F^\eps_2(t,x_1,v_1,x_1+\eps\omega,v_2)\simeq F_1(t,x_1,v_1)F_1(t,x_1,v_2)\,,
$$
où
$$
F_1(t,x_1,v_1):=\lim_{\eps\to 0^+}F^\eps_1(t,x_1,v_1)\,.
$$
De même, dans le premier terme de cette différence, on a
$$
(v_1-v_2)\cdot\omega=\tfrac1\eps(v_1-v_2)\cdot(x_1-x_2)=-\tfrac1\eps(v'_1-v'_2)\cdot(x_1-x_2)>0\,,
$$
de sorte que la molécule située en $x_1$ de vitesse $v'_1$ s'approche de la molécule située en $x_2=x_1-\eps\omega$ et de vitesse $v'_2$ (voir la figure \ref{F-Collision}). Ces deux molécules sont donc 
elles aussi sur le point d'entrer en collision, et l'hypothèse du chaos moléculaire entraîne que
$$
F^\eps_2(t,x_1,v'_1,x_1-\eps\omega,v'_2)\simeq F_1(t,x_1,v'_1)F_1(t,x_1,v'_2)\,.
$$


\begin{figure}\label{F-Collision}

\begin{center}

\includegraphics[width=8cm]{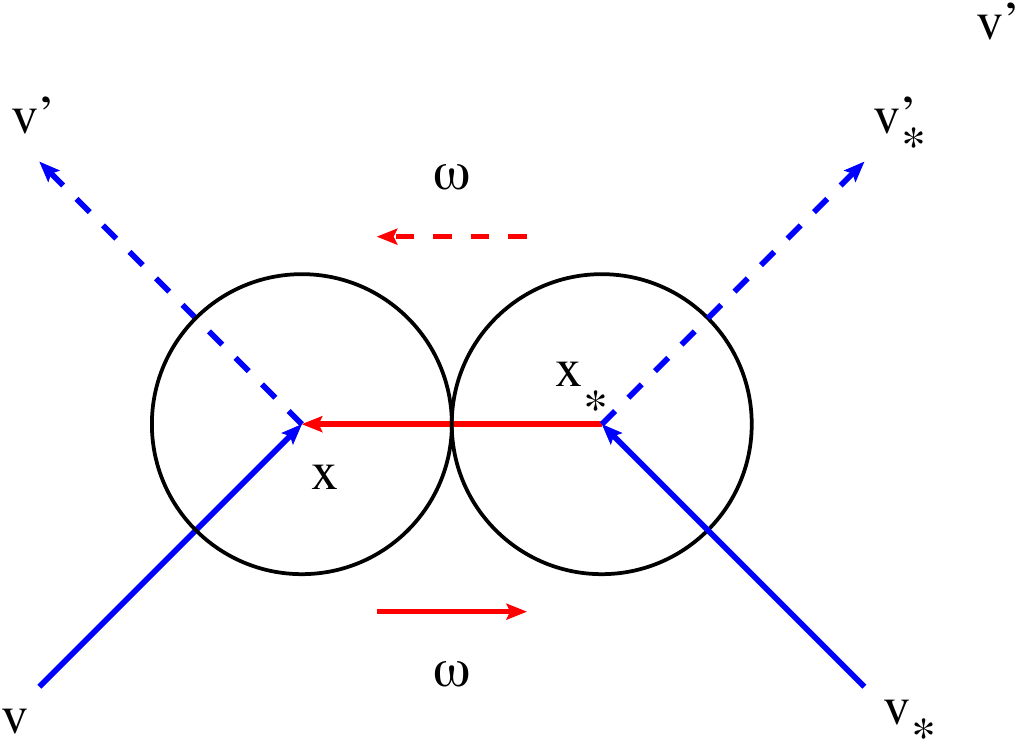}

\caption{Collision binaire. Avant la collision $(v-v_*)\cdot(x-x_*)<0$, après collision $(v'-v'_*)\cdot(x-x_*)=-(v-v_*)\cdot(x-x_*)>0$. On pose $\omega=-\frac{x-x_*}{|x-x_*|}$ avant collision, de sorte que 
$x_*=x+\eps\omega$ et $(v-v_*)\cdot\omega>0$, et $\omega=\frac{x-x_*}{|x-x_*|}$ après collision, de sorte que $x_*=x-\eps\omega$ et $(v-v_*)\cdot\omega>0$.}

\end{center}

\end{figure}


En passant formellement à la limite dans \eqref{CorrEq1}, et en tenant compte des implications de l'hypothèse du chaos moléculaire de Boltzmann mentionnées ci-dessus, on trouve que $F_1$ est 
solution de l'équation de Boltzmann
\begin{equation}\label{BoltzEq}
(\partial_t+v\cdot\nabla_x)F_1(t,z)=\mathcal B(F_1)(t,z)\,,
\end{equation}
où $\mathcal B(F_1)$ est l'intégrale de collision de Boltzmann, soit
\begin{equation}\label{CollInt}
\mathcal B(F_1)(t,z)\!:=\!\int_{\mathbf R^3\times\mathbf S^2}\!(F_1(t,x,v')F_1(t,x,v'_*)\!-\!F_1(t,x,v)F_1(t,x,v_*))((v-v_*)\cdot\omega)_+dv_*d\omega,
\end{equation}
avec la notation
\begin{equation}\label{LoiColl}
v'\equiv v'(v,v_*,\omega):=v-((v-v_*)\cdot\omega)\omega\,,\quad v'_*\equiv v'_*(v,v_*,\omega):=v_*+((v-v_*)\cdot\omega)\omega\,.
\end{equation}

Le raisonnement ci-dessus montre tout l'intérêt d'avoir utilisé la condition aux limites de \eqref{LiouvilleBC} dans la section précédente pour exprimer le terme $\mathbb F_n\big|_{\partial^+\Gamma^{\eps}_n}$ 
là où $(v_j-v_i)\cdot n_{ij}>0$. Si l'on néglige cette étape, on aboutit à l'équation 
$$
(\partial_t+v_1\cdot\nabla_{x_1})F^\eps_1(t,z_1)=\int_{\mathbf R^3\times\mathbf S^2}F^\eps_2(t,x_1,v_1,x_1-\eps\omega,v_2)(v_1-v_2)\cdot\omega dv_2d\omega\,,
$$
au lieu de \eqref{CorrEq1}. Si l'on suppose maintenant que 
$$
F^\eps_2(t,x_1,v_1,x_1-\eps\omega,v_2)\simeq F_1(t,x_1,v_1)F_1(t,x_1,v_2)
$$
dans l'intégrale au membre de droite \textit{indépendamment du signe de} $(v_1-v_2)\cdot\omega$, on trouve, en passant formellement à la limite dans l'équation ci-dessus, que
$$
(\partial_t+v_1\cdot\nabla_{x_1})F_1(t,z_1)=\int_{\mathbf R^3}F_1(t,x_1,v_1)F_1(t,x_1,v_2)\left(\int_{\mathbf S^2}(v_1-v_2)\cdot\omega d\omega\right)dv_2=0\,.
$$
Autrement dit, l'hypothèse de factorisation de $F^\eps_2(t,x_1,v_1,x_1-\eps\omega,v_2)$ \textit{indépendamment du signe de} $(v_1-v_2)\cdot\omega$, et donc en particulier pour des molécules venant juste d'entrer 
en collision, par conséquent fortement corrélées, nous amènerait à la conclusion inintéressante --- et expérimentalement fausse --- que $F_1$ évolue suivant l'équation de transport libre (sans intégrale de collision).

Cette discussion montre la subtilité de l'hypothèse de Boltzmann, ainsi que la difficulté à la formaliser mathématiquement. La limite $F^\eps_1(t,z_1)\to F_1(t,z_1)$ lorsque $\eps\to 0^+$ doit avoir lieu dans une 
topologie permettant de déduire que
$$
F^\eps_2(t,x_1,v_1,x_1+\eps\omega,v_2)((v_1-v_2)\cdot\omega)_+\to F_1(t,x_1,v_1)F_1(t,x_1,v_2)((v_1-v_2)\cdot\omega)_+\,,
$$
mais pas que
$$
F^\eps_2(t,x_1,v_1,x_1-\eps\omega,v_2)((v_1-v_2)\cdot\omega)_+\to F_1(t,x_1,v_1)F_1(t,x_1,v_2)((v_1-v_2)\cdot\omega)_+
$$
lorsque $\eps\to 0^+$. Comme ces convergences ont lieu sur des ensembles de mesure nulle, la convergence p.p. pour la mesure de Lebesgue de $(\mathbf T^3\times\mathbf R^3)^2$ est insuffisante.

On renvoie le lecteur aux textes de \textcite{Grad49,Grad58} et à l'appendice A1 du livre \parencite{So} pour une analyse plus détaillée de ces questions.

\smallskip
Passons maintenant à l'énoncé du théorème de Lanford, qui est la justification rigoureuse de l'équation de Boltzmann \eqref{BoltzEq} à partir du système \eqref{Newton}-\eqref{Collxk}-\eqref{Collvkvj}.

\begin{theo}[Lanford]\label{T-Lanford}
Soit $f^{in}\in C^1(\mathbf T^3\times\mathbf R^3)$, densité de probabilité telle que
\begin{equation}\label{BndCondIni}
f^{in}(x,v)+|\nabla_xf^{in}(x,v)|\le C_0e^{-\beta_0|v|^2}\,,\qquad x\in\mathbf T^3\,,\,\,v\in\mathbf R^3\,,
\end{equation}
où $C_0,\beta_0>0$. Considérons la famille des fonctions de corrélations $(F^\eps_k)_{k\ge 1,\eps>0}$ définies par \eqref{DefCorrel} pour l'ensemble grand-canonique où $\mu_\eps=\eps^{-2}$, selon la loi d'échelle 
de Boltzmann--Grad. Alors, il existe $T_0=T[C_0,\beta_0]>0$ tel que, lorsque $\eps\to 0^+$

\noindent
(1) $F^\eps_1(t,\cdot)$ converge uniformément sur tout compact de $\mathbf T^3\times\mathbf R^3$ vers $f(t,\cdot)$ pour tout $t\in[0,T_0]$, où $f$ est l'unique solution de l'équation de Boltzmann \eqref{BoltzEq} 
sur $\mathbf T^3\times\mathbf R^3$ vérifiant la condition initiale 
\begin{equation}\label{BoltzCondIn}
f(0,x,v)=f^{in}(x,v)\,,\qquad x\in\mathbf T^3\,,\,\,v\in\mathbf R^3\,;
\end{equation}
(2) pour tous $k\ge 2$ et $t\in[0,T_0]$, la famille $F^\eps_k(t,Z_k)$ des fonctions de corrélation à $k$ molécules converge pour presque tout $Z_k\in(\mathbf T^3\times\mathbf R^3)^k$ vers 
$$
f(t,\cdot)^{\otimes k}(Z_k):=\prod_{j=1}^kf(t,z_j)\,.
$$
\end{theo}

\smallskip
Cet énoncé appelle une remarque importante: si l'on compare les points (1) et (2) du théorème de Lanford, on voit que la notion de convergence utilisée pour $F^\eps_k$ avec $k\ge 2$ est plus faible que celle utilisée
pour $F^\eps_1$. Ce point particulier avait été prévu par \textcite{Grad58} (voir section 11, p. 223, dernier paragraphe) bien avant la démonstration de Lanford. 

On trouve notamment dans \parencite{Grad58} la phrase suivante {\og  [...] it is possible to specify the exceptional set on which $F^\eps_2(t,z_1,z_2)$ does not converge to $F_1(t,z_1)F_1(t,z_2)$ rather precisely [...]\fg}  . Ce 
point particulier est étudié en détail par \textcite{BGSRSToulouse}, et nous y renvoyons le lecteur intéressé par cette question.

\subsection{Ce qu'on ne doit pas ignorer à propos de l'équation de Boltzmann}\label{SS-BoltzMath}


Bien que le but de cet exposé soit d'aller {\og  au-delà de l'équation de Boltzmann\fg}, nous utiliserons à plusieurs reprises certaines propriétés mathématiques de base de cette équation, propriétés que nous 
allons rappeler ici.

L'équation de Boltzmann s'écrit en toute généralité
\begin{equation}\label{BoltzGen}
(\partial_t+v\cdot\nabla_x)f(t,x,v)+\mathbf a(t,x)\cdot\nabla_vf(t,x,v)=\mathcal B(f)(t,x,v)\,,\qquad (x,v)\in\mathbf T^3\times\mathbf R^3\,,
\end{equation}
où $\mathbf a$ est un champ d'accélération provenant d'une force extérieure --- comme la gravité par exemple --- tandis que $\mathcal B(f)$ désigne l'intégrale des collisions de Boltzmann et décrit la variation
en temps de la population de molécules de vitesse $v$ due à des collisions avec des molécules de vitesses différentes. L'inconnue $f$ est la {\og  fonction de distribution\fg}  (en vitesse) des molécules, à savoir 
la densité par rapport à la mesure de Lebesgue sur l'espace des phases $\mathbf T^3\times\mathbf R^3$ du nombre de molécules situées au point $x$ à l'instant $t$ et animées de la vitesse $v$. Dans tout cet 
exposé, on négligera systématiquement l'effet de la force extérieure, de sorte que $\mathbf a=0$, et que
$$
(\partial_t+v\cdot\nabla_x)f(t,x,v)=\mathcal B(f)(t,x,v)\,,\qquad (x,v)\in\mathbf T^3\times\mathbf R^3\,.
$$

Nous aurons besoin des propriétés essentielles suivantes de l'intégrale des collisions de Boltzmann. Comme le montre la définition \eqref{CollInt}, cette intégrale des collisions n'agit que sur la dépendance en $v$
de la fonction de distribution. Il suffit donc de l'étudier sur des fonctions constantes en $(t,x)$.

\smallskip
\noindent
\textbf{Formulation faible.} Pour tout $\phi\in C(\mathbf R^3)$ à décroissance rapide, $\mathcal B(\phi)\in C(\mathbf R^3)$ est à décroissance rapide, et, pour toute fonction test $\psi\in C(\mathbf R^3)$ à croissance 
polynomiale à l'infini, on a les identités suivantes (dont la preuve sera esquissée plus loin, et où $v'$ et $v'_*$ sont donnés en fonction de $v,v_*$ et $\omega$ par \eqref{LoiColl})
$$
\begin{aligned}
\int_{\mathbf R^3}\mathcal B(\phi)(v)\psi(v)dv=\int_{\mathbf R^3\times\mathbf R^3\times\mathbf S^2}(\psi(v')-\psi(v))\phi(v)\phi(v_*)((v-v_*)\cdot\omega)_+dvdv_*d\omega
\\
=\int_{\mathbf R^3\times\mathbf R^3\times\mathbf S^2}(\psi(v'_*)-\psi(v_*))\phi(v)\phi(v_*)((v-v_*)\cdot\omega)_+dvdv_*d\omega
\\
=\tfrac12\int_{\mathbf R^3\times\mathbf R^3\times\mathbf S^2}(\psi(v')+\psi(v'_*)-\psi(v)-\psi(v_*))\phi(v)\phi(v_*)((v-v_*)\cdot\omega)_+dvdv_*d\omega
\\
=\tfrac14\int_{\mathbf R^3\times\mathbf R^3\times\mathbf S^2}(\psi(v)+\psi(v_*)-\psi(v')-\psi(v'_*))(\phi(v')\phi(v'_*)-\phi(v)\phi(v_*))
\\
\times((v-v_*)\cdot\omega)_+dvdv_*d\omega&\,.
\end{aligned}
$$
\textbf{Lois de conservation locales.} Pour tout $\phi\in C(\mathbf R^3)$ à décroissance rapide, 
$$
\begin{aligned}
\int_{\mathbf R^3}\mathcal B(\phi)(v)dv=&0\quad\text{ (conservation locale de la masse), }
\\
\int_{\mathbf R^3}\mathcal B(\phi)(v)\tfrac12|v|^2dv=&0\quad\text{ (conservation locale de l'\'energie), }
\end{aligned}
$$
ainsi que
$$
\int_{\mathbf R^3}\mathcal B(\phi)(v)vdv=0\quad\text{ (conservation locale de l'impulsion). }
$$
Ces lois de conservation se déduisent de la formulation faible en observant que 
$$
\begin{aligned}
\psi(v)=\alpha_0+\sum_{i=1}^3\alpha_iv_i+\alpha_4|v|^2\implies
\\
\psi(v')+\psi(v'_*)-\psi(v)-\psi(v_*)=0\quad\text{ pour tous }v,v_*\in\mathbf R^3\text{ et }\omega\in\mathbf S^2\,,
\end{aligned}
$$
où les vitesses $v'$ et $v'_*$ sont données en fonction de $v,v_*,\omega$ par les relations \eqref{LoiColl}.

Évidemment, si $\phi\equiv\phi(t,x,v)$ est une solution de classe $C^1$ de l'équation de Boltzmann sur $[0,T[\times\mathbf T^3\times\mathbf R^3$ à décroissance rapide en $v$ ainsi que ses dérivées premières, 
on déduit des formules ci-dessus que
$$
\begin{aligned}
\partial_t\int_{\mathbf R^3}\phi(t,x,v)dv+\nabla_x\cdot\int_{\mathbf R^3}v\phi(t,x,v)dv=0\,,
\\
\partial_t\int_{\mathbf R^3}v_i\phi(t,x,v)dv+\nabla_x\cdot\int_{\mathbf R^3}vv_i\phi(t,x,v)dv=0\,,
\\
\partial_t\int_{\mathbf R^3}\tfrac12|v|^2\phi(t,x,v)dv+\nabla_x\cdot\int_{\mathbf R^3}v\tfrac12|v|^2\phi(t,x,v)dv=0\,,
\end{aligned}
$$
qui sont bien des lois de conservation locales, puisque chacune se met sous la forme $\mathrm{div}_{t,x}V(t,x)=0$, où $V$ est un champ de vecteurs sur $]0,T[\times\mathbf T^3$.

\smallskip
\noindent
\textbf{Théorème H de Boltzmann.} Soit $\phi\in C(\mathbf R^3)$ à décroissance rapide telle que $\phi>0$ et $\ln\phi$ soit à croissance polynomiale (par exemple $\phi(v)=e^{-P(|v|^2)}$ où $P$ est une fonction
polynomiale de coefficient dominant strictement positif). Alors
$$
\begin{aligned}
\int_{\mathbf R^3}\mathcal B(\phi)(v)\ln\phi(v)dv=-\tfrac14\int_{\mathbf R^3\times\mathbf R^3\times\mathbf S^2}(\phi(v')\phi(v'_*)-\phi(v)\phi(v_*))\ln\left(\frac{\phi(v')\phi(v'_*)}{\phi(v)\phi(v_*)}\right)
\\
\times((v-v_*)\cdot\omega)_+dvdv_*d\omega\le 0&\,,
\end{aligned}
$$
puisque $\ln$ est une fonction croissante. D'autre part
$$
\mathcal B(\phi)=0\iff\int_{\mathbf R^3}\mathcal B(\phi)(v)\ln\phi(v)dv=0\iff\phi\text{ est une maxwellienne,}
$$
c'est-à-dire que $\phi$ est de la forme
\begin{equation}\label{Maxw}
\mathcal M_{\rho,u,\theta}(v):=\frac{\rho}{(2\pi\theta)^{3/2}}e^{-|v-u|^2/2\theta}\,.
\end{equation}

\smallskip
De nouveau, si $\phi\equiv\phi(t,x,v)>0$ est une solution de classe $C^1$ de l'équation de Boltzmann sur $[0,T[\times\mathbf T^3\times\mathbf R^3$ à décroissance rapide en $v$ ainsi que ses dérivées premières, 
telle que $\ln\phi$ soit à croissance polynomiale en $v$, on en déduit que
$$
\frac{d}{dt}\int_{\mathbf T^3\times\mathbf R^3}\phi(t,x,v)\ln\phi(t,x,v)dxdv\le 0\,.
$$
Ceci permet en particulier de majorer la quantité
$$
\int_{\mathbf T^3\times\mathbf R^3}\phi(t,x,v)\ln\phi(t,x,v)dxdv
$$
en fonction de sa donnée initiale. Cette propriété est évidemment l'une des clés pour obtenir des solutions globales de l'équation de Boltzmann sans restriction de taille sur les données initiales \parencite{dPL,PG88}.

\smallskip
Le seul point un peu difficile dans les énoncés ci-dessus est la caractérisation des maxwelliennes par l'équation fonctionnelle $\ln(\phi(v)\phi(v_*))=\ln(\phi(v')\phi(v'_*))$: voir par exemple le Théorème 3.1.1 dans 
\parencite{CIP}. Tous les autres énoncés se déduisent de la formulation faible, dont nous allons maintenant expliquer la preuve.

D'abord, $\phi$ étant continue à décroissance rapide et $\psi$ étant à croissance polynomiale, la fonction
$$
(v,v_*)\mapsto|\phi(v')|+|\phi(v'_*)|
$$
est à décroissance rapide, tandis que la fonction 
$$
(v,v_*)\mapsto|\psi(v')|+|\psi(v'_*)|
$$
est à croissance polynomiale pour tout $\omega\in\mathbf S^2$, sachant que les vitesses $v',v'_*$ sont données en fonction de $v,v_*$ et $\omega$ par \eqref{LoiColl}. 

Puis, pour tout $\omega\in\mathbf S^2$, l'application $(v,v_*)\mapsto (v',v'_*)$  est une isométrie linéaire involutive de $\mathbf R^3\times\mathbf R^3$, et conserve donc la mesure de Lebesgue $dvdv_*$. Donc
$$
\begin{aligned}
\int_{\mathbf R^3\times\mathbf R^3\times\mathbf S^2}\psi(v)\phi(v')\phi(v'_*)((v-v_*)\cdot\omega)_+dvdv_*d\omega
\\
=\int_{\mathbf R^3\times\mathbf R^3\times\mathbf S^2}\psi(v')\phi(v)\phi(v_*)(-(v-v_*)\cdot\omega)_+dvdv_*d\omega
\\
=\int_{\mathbf R^3\times\mathbf R^3\times\mathbf S^2}\psi(v')\phi(v)\phi(v_*)((v-v_*)\cdot\omega)_+dvdv_*d\omega&\,,
\end{aligned}
$$
où la première égalité utilise que $(v'-v'_*)\cdot\omega=-(v-v_*)\cdot\omega$, tandis que la seconde est basée sur le fait que le changement de variables $\omega\mapsto-\omega$ laisse $v'$ et $v'_*$ invariants.
Ceci démontre la première égalité de la formulation faible. La deuxième découle du fait que, pour tout $\omega\in\mathbf S^2$ fixé, la symétrie $v\mapsto v_*$ échange $v'$ et $v'_*$. Les deux dernières égalités 
de la formulation faible en découlent aussitôt. On se reportera à la section 3.1 de \parencite{CIP} pour un traitement plus détaillé de ces propriétés.

\subsection{Le théorème de Lanford comme loi des grands nombres}


Après cette digression sur les propriétés mathématiques de l'équation de Boltzmann, revenons au théorème de Lanford.

Le fait que la fonction de corrélation à deux molécules $F^\eps_2$ se factorise dans la limite de Boltzmann--Grad, ce qui est le point (2) du théorème de Lanford, donne une information sur la famille $\rho^\eps_t$ 
des mesures empiriques définies en \eqref{DefRhoepst}.

\begin{coro}\label{C-Lanford}
Soit $h\in C_c(\mathbf T^3\times\mathbf R^3)$. Sous les hypothèses du théorème de Lanford, pour tout $\eta>0$
$$
\lim_{\eps\to 0^+}\mathbb P_\eps\left(\left\{\left|\langle\rho^\eps_t,h\rangle\!-\!\int_{\mathbf T^3\times\mathbf R^3}\!h(z)f(t,z)dz\right|\!>\!\eta\right\}\right)=0\,.
$$
\end{coro}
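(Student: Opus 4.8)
La stratégie que j'adopterais est la méthode du second moment. Posons $X_\eps:=\langle\rho^\eps_t,h\rangle$ et $m:=\int_{\mathbf T^3\times\mathbf R^3}h(z)f(t,z)dz$. Pour tout $\eta>0$, l'inégalité triangulaire combinée à l'inégalité de Bienaymé--Tchebychev fournit
$$
\mathbb P_\eps\big(|X_\eps-m|>\eta\big)\le\frac{4\,\mathrm{Var}_\eps(X_\eps)}{\eta^2}+\mathbf 1_{|\mathbb E_\eps(X_\eps)-m|>\eta/2}\,.
$$
Il suffit donc d'établir, lorsque $\eps\to 0^+$, les deux faits suivants: \textbf{(a)} $\mathbb E_\eps(X_\eps)\to m$, qui annule le second terme pour $\eps$ assez petit; et \textbf{(b)} $\mathrm{Var}_\eps(X_\eps)\to 0$, qui annule le premier.

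Pour \textbf{(a)}, j'observerais d'abord que $\rho^\eps_{1,t}=\rho^\eps_t$, de sorte que la définition \eqref{DefCorrel} donne directement $\mathbb E_\eps(X_\eps)=\int F^\eps_1(t,z)h(z)dz$. Comme $h\in C_c(\mathbf T^3\times\mathbf R^3)$ et que, d'après le point (1) du théorème \ref{T-Lanford}, $F^\eps_1(t,\cdot)$ converge uniformément vers $f(t,\cdot)$ sur le support compact de $h$, on obtient
$$
\big|\mathbb E_\eps(X_\eps)-m\big|\le\|h\|_{L^\infty}\,|\mathrm{supp}\,h|\,\sup_{z\in\mathrm{supp}\,h}|F^\eps_1(t,z)-f(t,z)|\longrightarrow 0\,,
$$
ce qui établit \textbf{(a)}. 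Le même argument appliqué à $h^2$ montre au passage que $\int F^\eps_1(t,z)h(z)^2dz$ reste borné.

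Pour \textbf{(b)}, j'écrirais $X_\eps=\tfrac1{\mu_\eps}\sum_{j=1}^Nh(z_j(t))$ puis développerais le carré en isolant les termes de la diagonale:
$$
X_\eps^2=\tfrac1{\mu_\eps}\langle\rho^\eps_t,h^2\rangle+\langle\rho^\eps_{2,t},h\otimes h\rangle\,.
$$
En prenant l'espérance grand-canonique et en appliquant \eqref{DefCorrel} pour $k=1$ puis $k=2$, il vient
$$
\mathrm{Var}_\eps(X_\eps)=\tfrac1{\mu_\eps}\int F^\eps_1(t,z)h(z)^2dz+\int F^\eps_2(t,Z_2)h(z_1)h(z_2)dZ_2-\Big(\int F^\eps_1(t,z)h(z)dz\Big)^2.
$$
Comme $\mu_\eps=\eps^{-2}\to+\infty$, le premier terme tend vers $0$ puisque $\int F^\eps_1(t,z)h(z)^2dz$ reste borné. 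D'autre part, le point (2) du théorème \ref{T-Lanford} donne $F^\eps_2(t,\cdot)\to f(t,\cdot)^{\otimes 2}$ presque partout, de sorte que $\int F^\eps_2(t,Z_2)h(z_1)h(z_2)dZ_2$ converge vers $\big(\int f(t,z)h(z)dz\big)^2$, tandis que le dernier terme converge vers la même limite par \textbf{(a)}. La différence de ces deux termes tend donc vers $0$, ce qui donne \textbf{(b)}.

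Le point délicat est précisément ce passage à la limite dans l'intégrale de $F^\eps_2$: contrairement au point (1), le point (2) ne fournit qu'une convergence \emph{presque partout}, insuffisante telle quelle pour passer à la limite sous l'intégrale. Pour conclure, j'invoquerais le théorème de convergence dominée, en m'appuyant sur les estimations a priori uniformes en $\eps$ sur les fonctions de corrélation issues de la démonstration du théorème \ref{T-Lanford} --- typiquement une majoration de la forme $0\le F^\eps_2(t,Z_2)\le C^2e^{-\beta(|v_1|^2+|v_2|^2)}$, héritée de l'hypothèse \eqref{BndCondIni} sur la donnée initiale. Le fait que $h$ soit à support compact rend alors la domination immédiate. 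C'est là que réside l'essentiel de la difficulté, la factorisation presque partout de $F^\eps_2$ (point (2)) étant le résultat substantiel emprunté au théorème de Lanford.
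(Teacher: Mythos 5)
Votre démonstration est correcte et suit essentiellement la même stratégie que celle du texte : inégalité de Bienaymé--Tchebychev, identité $\langle\rho^\eps_t,h\rangle^2=\tfrac1{\mu_\eps}\langle\rho^\eps_t,h^2\rangle+\langle\rho^\eps_{2,t},h\otimes h\rangle$, puis convergence des deux premiers moments via les points (1) et (2) du théorème de Lanford, avec une borne uniforme sur $F^\eps_2$ pour dominer la convergence presque partout (le texte renvoie pour cela à la Proposition \ref{P-BorneCum}). La seule différence, purement cosmétique, est la séparation biais/variance au lieu du développement direct du moment d'ordre deux de l'écart à $m$.
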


\begin{proof}
L'inégalité de Bienaymé--Tchebychev dit que
$$
\begin{aligned}
\mathbb P_\eps\left(\left\{\left|\langle\rho^\eps_t,h\rangle\!-\!\int_{\mathbf T^3\times\mathbf R^3}\!h(z)f(t,z)dz\right|\!>\!\eta\right\}\right)
\\
\le\frac1{\eta^2}\mathbb E_\eps\left(\left|\langle\rho^\eps_t,h\rangle\!-\!\int_{\mathbf T^3\times\mathbf R^3}\!h(z)f(t,z)dz\right|^2\right)&\,.
\end{aligned}
$$
Or
$$
\begin{aligned}
\mathbb E_\eps\left(\left|\langle\rho^\eps_t,h\rangle\!-\!\int_{\mathbf T^3\times\mathbf R^3}\!h(z)f(t,z)dz\right|^2\right)=\mathbb E_\eps(\langle\rho^\eps_t,h\rangle^2)
+\left(\int_{\mathbf T^3\times\mathbf R^3}\!h(z)f(t,z)dz\right)^2
\\
-2\int_{\mathbf T^3\times\mathbf R^3}\!h(z)F^\eps_1(t,z)dz\int_{\mathbf T^3\times\mathbf R^3}\!h(z)f(t,z)dz\,,
\end{aligned}
$$
et un calcul simple montre que
$$
\langle\rho^\eps_t,h\rangle^2=\langle\rho^\eps_t\otimes\rho^\eps_t,h\otimes h\rangle=\frac1{\mu_\eps}\langle\rho^\eps_t,h^2\rangle+\langle\rho^\eps_{2,t},h\otimes h\rangle\,,
$$
de sorte que
$$
\mathbb E_\eps(\langle\rho^\eps_t,h\rangle^2)=\frac1{\mu_\eps}\int_{\mathbf T^3\times\mathbf R^3}\!h(z)^2F^\eps_1(t,z)dz+\int_{(\mathbf T^3\times\mathbf R^3)^2}\!h(z_1)h(z_2)F^\eps_2(t,Z_2)dZ_2\,.
$$
Comme $h$ est à support compact, le point (1) du théorème de Lanford montre que
$$
\begin{aligned}
\frac1{\mu_\eps}\int_{\mathbf T^3\times\mathbf R^3}\!h(z)^2F^\eps_1(t,z)dz-2\int_{\mathbf T^3\times\mathbf R^3}\!h(z)F^\eps_1(t,z)dz\int_{\mathbf T^3\times\mathbf R^3}\!h(z)f(t,z)dz
\\
\to-2\left(\int_{\mathbf T^3\times\mathbf R^3}\!h(z)f(t,z)dz\right)^2&\,,
\end{aligned}
$$
tandis que le point (2) du théorème de Lanford et une borne sur $F_2^\eps$ sur laquelle on reviendra plus tard (voir la Proposition \ref{P-BorneCum} et le paragraphe qui la suit) montrent que
$$
\int_{(\mathbf T^3\times\mathbf R^3)^2}\!h(z_1)h(z_2)F^\eps_2(t,Z_2)dZ_2\to\left(\int_{\mathbf T^3\times\mathbf R^3}\!h(z)f(t,z)dz\right)^2
$$
lorsque $\eps\to 0^+$, d'où le résultat annoncé.
\end{proof}

\smallskip
Rappelons la loi (faible et forte) des grands nombres dans l'énoncé suivant.

\begin{theo}[Loi des grands nombres]\label{T-LLN}
Soit $(Y_n)_{n\ge 1}$, suite de variables aléatoires indépendantes et identiquement distribuées, telles que $\mathbb E(|Y_1|)<+\infty$. Alors 
$$
\tfrac1n(Y_1+\ldots+Y_n)\to \mathbb E(Y_1)\quad\text{ en probabilité et presque sûrement lorsque }n\to+\infty\,.
$$
\end{theo}

Soit $\mathcal Y:=(\mathbf T^3\times\mathbf R^3)^{\mathbf N^*}$ muni de la mesure de probabilité borélienne définie par
$$
\text{Prob}(\{\mathbf z\in \mathcal Y\text{ t.q. }(z_1,\ldots,z_n)\in A_1\times\ldots\times A_n\})=\prod_{i=1}^n\int_{A_i}w(z)dz\,,
$$
où $w$ est une densité de probabilité borélienne sur $\mathbf T^3\times\mathbf R^3$. Posons $Y_n(\mathbf z):=h(z_n)$; on vérifie sans peine que les variables aléatoires $Y_n$ sont indépendantes et de même 
distribution, puisque
$$
\text{Prob}(\{\mathbf z\in\mathcal Y\text{ t.q. }Y_n(\mathbf z)>y\})=\int_{\mathbf T^3\times\mathbf R^3}\mathbf 1_{h(z)>y}w(z)dz\,.
$$
Enfin $\mathbb E(|Y_1|)\le\|h\|_{L^\infty(\mathbf T^3\times\mathbf R^3)}<+\infty$. D'après la loi faible des grands nombres
$$
\lim_{n\to+\infty}\text{Prob}\left(\left\{\mathbf z\in \mathcal Y\text{ t.q. }\left|\left\langle\tfrac1n\sum_{i=1}^n\delta_{z_i},h\right\rangle-\int_{\mathbf T^3\times\mathbf R^3}h(z)w(z)dz\right|>\eta\right\}\right)=0\,.
$$

Cet exemple évoque évidemment la situation décrite dans le Corollaire \ref{C-Lanford}. Pour autant, il s'agit une analogie plutôt que d'une véritable application de la loi faible des grands nombres telle qu'énoncée 
dans le Théorème \ref{T-LLN}. En effet

\noindent
(a) les positions initiales des $N$ particules ne peuvent en aucun cas être considérées comme indépendantes pour $\eps>0$ à cause du facteur d'exclusion $\prod_{1\le i<j\le N}\mathbf 1_{\text{dist}(x_i,x_j)>\eps}$
dans
$$
\mathbb F^{in}_N(Z_N):=\prod_{1\le i<j\le N}\mathbf 1_{\text{dist}(x_i,x_j)>\eps}(f^{in})^{\otimes N}(Z_N)\,;
$$
(b) au fur et à mesure de l'évolution du gaz, le nombre de collisions binaires entre molécules augmente, de sorte que les positions et les vitesses de ces $N$ molécules deviennent {\og  de moins en moins 
indépendantes\fg}.

\smallskip
Néanmoins, il est utile de penser au théorème de Lanford comme à une sorte de loi des grands nombres appliquée à la suite de variables aléatoires $(h(z_n(t)))_{n\ge 1}$, où $z_n(t)\!=\!(x_n(t),v_n(t))$ est 
le couple position-vitesse de la $n$-ième molécule à l'instant $t$.


\section{Cumulants et équation de Hamilton--Jacobi}\label{S-HJ}


Pour aller plus loin, deux voies se présentent:

\noindent
(1) étendre l'intervalle de temps $[0,T[C_0,\beta_0]]$ sur lequel le théorème de Lanford permet de déduire l'équation de Boltzmann \eqref{BoltzEq} de la dynamique moléculaire \eqref{Newton}-\eqref{Collxk}-\eqref{Collvkvj};

\noindent
(2) puisque le théorème de Lanford peut s'interpréter comme une loi des grands nombres, transplanter dans le cadre de la limite de Boltzmann--Grad certains des théorèmes limites de la théorie des probabilités 
précisant la loi des grands nombres.

\smallskip
La démarche (1) présenterait un intérêt considérable. On sait que $T[C_0,\beta_0]$ est de l'ordre d'une fraction ($1/5$ pour être précis) du laps de temps moyen entre deux collisions successives subies par 
une même molécule prise au hasard dans le gaz: voir \parencite{LanfordAst} p. 117, Remarque 3 p. 132. Comme expliqué dans \parencite{FG2014} à la fin de la section 4.1, le temps de validité du théorème 
de Lanford a été étendu à $+\infty$ par \textcite{IllnerPulvi} dans le cas d'un domaine spatial $\mathbf R^3$ au lieu de $\mathbf T^3$, et pour des données initiales correspondant à un degré de raréfaction 
du gaz tel que le libre parcours moyen est très grand, et augmente même au cours de l'évolution grâce à l'effet de dispersion dû à l'opérateur de transport libre $\partial_t+v\cdot\nabla_x$. À ce jour, l'extension 
du temps de validité du théorème de Lanford pour une classe assez générale de données initiales, même au temps d'existence d'une solution classique maximale de l'équation de Boltzmann, demeure un 
problème ouvert, peut-être impossible à résoudre avec les méthodes de démonstrations employées jusqu'ici.

C'est pourquoi une partie importante des travaux de Bodineau, Gallagher, Saint-Raymond et Simonella adoptent la démarche (2). On verra toutefois que cela leur permet d'apporter quelques réponses partielles 
--- mais d'un grand intérêt --- au problème (1).

\subsection{Grandes déviations et théorème central limite}


Revenons au cadre idéal du Théorème \ref{T-LLN}. La loi des grands nombres nous dit que la moyenne {\og  empirique\fg}   des variables aléatoires $Y_1,\ldots,Y_n$ tend vers leur espérance mathématique commune 
$\mathbb E(Y_1)$. Une question naturelle consiste donc à étudier les fluctuations de la moyenne empirique autour de sa limite, c'est-à-dire 
$$
\sqrt{n}\left(\frac{Y_1+\ldots+Y_n}n-\mathbb E(Y_1)\right)=\tfrac1{\sqrt{n}}\sum_{i=1}^n(Y_i-\mathbb E(Y_i))\,.
$$
Le choix de dilater les fluctuations par le facteur $\sqrt{n}$ est justifié par le théorème central limite, rappelé ci-dessous.

\begin{theo}[Théorème central limite]
Soit $(Y_n)_{n\ge 1}$, suite de variables aléatoires réelles indépendantes et identiquement distribuées, telles que $\mathbb E(Y_1^2)<+\infty$. Alors
$$
\tfrac1{\sqrt{n}}\sum_{i=1}^n(Y_i-\mathbb E(Y_i))\to\mathcal N(0,\sigma^2)\quad\text{ en loi lorsque }n\to+\infty\,,
$$
où $\mathcal N(0,\sigma^2)$ est la loi normale centrée de variance $\sigma^2:=\mathbb E(Y_1^2)-\mathbb E(Y_1)^2$, de densité gaussienne $e^{-y^2/2\sigma^2}/\sigma\sqrt{2\pi}$ par rapport à la mesure de 
Lebesgue sur $\mathbf R$.
\end{theo}

\smallskip
(Voir par exemple le chapitre IV, section 4.3 de \parencite{Malliav}, où ce résultat porte le nom de {\og  Théorème de Laplace\fg}  ).

\smallskip
Comme on l'a dit plus haut, le problème de la limite de Boltzmann--Grad ne peut pas se réduire à l'étude d'une suite de variables aléatoires indépendantes. Toutefois, le théorème central limite suggère d'étudier 
les fluctuations de la mesure empirique autour de la fonction $F^\eps_1$, dont le théorème de Lanford montre qu'elle converge vers la solution de l'équation de Boltzmann. Spécifiquement, on considèrera
\begin{equation}\label{DefFluct}
\zeta^\eps_t:=\sqrt{\mu_\eps}(\rho^\eps_t-F^\eps_1(t,\cdot))\,,\qquad t\ge 0\,,\quad\eps>0\,,
\end{equation}
que l'on peut réécrire sous la forme
\begin{equation}\label{DefFluct2}
\langle\zeta^\eps_t,h\rangle:=\sqrt{\mu_\eps}\left(\langle\rho^\eps_t,h\rangle-\mathbb E_\eps(\langle\rho^\eps_t,h\rangle)\right)\,,\qquad t\ge 0\,,\quad\eps>0\,,
\end{equation}
\smallskip
pour toute fonction $h\in C_b(\mathbf T^3\times\mathbf R^3)$. D'une certaine manière, la fluctuation $\zeta^\eps_t$ décrit une correction d'ordre supérieur $1/\sqrt{\mu_\eps}=\eps$ à la limite $f$ décrite par 
le théorème de Lanford, laquelle évolue suivant l'équation de Boltzmann.

\smallskip
Revenons encore au cadre du Théorème \ref{T-LLN}. Comme le théorème central limite porte sur la convergence en loi des fluctuations de la moyenne empirique des variables aléatoires $Y_1,\ldots,Y_n$ 
autour de $\mathbb E(Y_1)$, on va étudier $W_n$, la loi de cette mesure empirique $\tfrac1n(Y_1+\ldots+Y_n)$. Notons $w$ la loi commune aux variables aléatoires $Y_i$ pour $i\ge 1$. L'indépendance 
des $Y_i$ entraîne que $W_n$ est la mesure image du produit de convolution à $n$ termes $w\star\ldots\star w$ par l'homothétie de rapport $1/n$ (voir par exemple le Corollaire 4.2.2 de \parencite{Malliav}).

\begin{theo}[Théorème de Cramér]
Soit $(Y_n)_{n\ge 1}$ une suite de variables aléatoires réelles indépendantes et identiquement distribuées de loi $w$, telles que, pour tout $\tau\in\mathbf R$, l'on ait $\mathbb E(\exp(\tau Y_1))<+\infty$. 
Pour tout entier $n\ge 1$, notons $W_n$ la loi de la moyenne empirique $\frac1n(Y_1+\ldots+Y_n)$. Alors, pour tout ouvert $\mathcal O$ et tout fermé $\mathcal F$ de $\mathbf R$
$$
\varlimsup_{n\to+\infty}\tfrac1n\ln W_n(\mathcal F)\le-\inf_{y\in\mathcal F}K(y)\quad\text{ et }\quad\varliminf_{n\to+\infty}\tfrac1n\ln W_n(\mathcal O)\ge-\inf_{y\in\mathcal O}K(y)\,,
$$
où $K$ est la transformée de Legendre du logarithme de la transformée de Laplace de $w$, à savoir
$$
K(\theta):=\sup_{\tau\in\mathbf R}(\theta\tau-\ln\mathbb E(\exp(\tau Y_1)))=\sup_{\tau\in\mathbf R}\left(\theta\tau-\ln\int_\mathbf R e^{\tau y}w(dy)\right)\,.
$$
\end{theo}

\smallskip
Le théorème de Cramér est l'un des énoncés fondamentaux de la théorie des {\og  grandes déviations\fg}   (voir par exemple \parencite{Varadhan} pour plus de détails, ainsi que pour une preuve de ce théorème). Soit 
$\bar Y:=\mathbb E(Y_1)$; l'inégalité de Jensen montre que 
$$
\ln\mathbb E(\exp(\tau Y_1))\ge\tau\bar Y\quad\text{ pour tout }\tau\in\mathbf R\,,
$$
de sorte que $K(\bar Y)\le 0$. D'autre part la fonction $\tau\mapsto\theta\tau-\ln\mathbb E(\exp(\tau Y_1))$ s'annule pour $\tau=0$, de sorte que $K(\theta)\ge 0$ pour tout $\theta\in\mathbf R$ par définition, 
si bien que 
$$
K(\bar Y)=0=\min_{\theta\in\mathbf R}K(\theta)\,.
$$
Comme la fonction $K$ est convexe par construction, elle décroît sur $]-\infty,\bar Y]$ et croît sur $[\bar Y,+\infty[$. Ainsi
$$
\begin{aligned}
a>\bar Y\implies\lim_{n\to+\infty}\tfrac1n\ln W_n(]a,+\infty[)=\lim_{n\to+\infty}\tfrac1n\ln W_n([a,+\infty[)=-K(a)\,,
\\
a<\bar Y\implies\lim_{n\to+\infty}\tfrac1n\ln W_n(]\!-\!\infty,a[)=\!\lim_{n\to+\infty}\tfrac1n\ln W_n(]\!-\!\infty,a])=-K(a)\,.
\end{aligned}
$$
Autrement dit, la probabilité que la moyenne empirique s'écarte de $\bar Y$ d'une distance $\eta>0$ décroît exponentiellement lorsque $n\to+\infty$ dès que $K(a+\eta)+K(a-\eta)>0$. 

Le théorème de Cramér apporte donc une information quant à la précision de l'approximation de la moyenne empirique par l'espérance dans la loi des grands nombres.

Bien que le problème de la limite de Boltzmann--Grad ne puisse pas se réduire à l'étude d'une suite de variables aléatoires indépendantes comme expliqué plus haut, le théorème de Cramér suggère donc 
cependant de considérer la quantité
\begin{equation}\label{DefK0}
\mathcal K_\eps(t,h):=\frac1{\mu_\eps}\ln\mathbb E_\eps(\exp(\mu_\eps\langle\rho^\eps_t,h\rangle))\,,\quad h\in C_b(\mathbf T^3\times\mathbf R^3)\,,\,\,t\ge 0\,.
\end{equation}

\subsection{Cumulants}


La famille des cumulants d'ordre $k$ est définie à partir de la famille des fonctions de corrélations $F^\eps_k$ comme suit:
\begin{equation}\label{DefCumul}
f^\eps_k:=\mu_\eps^{k-1}\sum_{j=1}^k\sum_{\sigma\in\mathcal P_k^j}(-1)^{j-1}(j-1)!F^\eps_\sigma\,,
\end{equation}
où $\mathcal P_k^j$ désigne l'ensemble des partitions $\sigma:=\{\sigma_1,\ldots,\sigma_j\}$ de $\{1,\ldots,k\}$ en $j$ sous-ensembles, et où
$$
F^\eps_\sigma(t,Z_k):=\prod_{i=1}^jF^\eps_{\sigma_i}(t,Z_{\sigma_i})\quad\text{ avec }F^\eps_{\sigma_i}(t,Z_{\sigma_i}):=F^\eps_{|\sigma_i|}(t,z_{l_1},\ldots,z_{l_{|\sigma_i|}})\,,
$$
les indices $l_1,\ldots,l_{|\sigma_i|}$ étant définis par l'égalité $\sigma_i=\{l_1,\ldots,l_{|\sigma_i|}\}$. On vérifie que
$$
f^\eps_1=F^\eps_1\,,\quad f^\eps_2(t,\cdot)=\mu_\eps(F^\eps_2(t,\cdot)-F^\eps_1(t,\cdot)\otimes F^\eps_1(t,\cdot))
$$
puis que
$$
\begin{aligned}
f^\eps_3(t,\cdot)=&\mu_\eps^2(F^\eps_3(t,\cdot)-F^\eps_{\{1,2\}}(t,\cdot)F^\eps_{\{3\}}(t,\cdot)-F^\eps_{\{2.3\}}(t,\cdot)F^\eps_{\{1\}}(t,\cdot)
\\
&-F^\eps_{\{1,3\}}(t,\cdot)F^\eps_{\{2\}}(t,\cdot)+2F^\eps_1(t,\cdot)\otimes F^\eps_1(t,\cdot)\otimes F^\eps_1(t,\cdot))\,.
\end{aligned}
$$

Le second cumulant $f^\eps_2$ est évidemment crucial pour la limite de Boltzmann--Grad, puisque l'hypothèse du chaos moléculaire de Boltzmann, qui est la clé du raisonnement permettant de déduire l'équation 
de Boltzmann comme approximation de l'identité \eqref{CorrEq1}  reliant $F^\eps_1$ à $F^\eps_2$, s'écrit simplement
$$
\lim_{\eps\to 0^+}\frac1{\mu_\eps}f^\eps_2(t,x,v_1,x+\eps\omega,v_2)=0\quad\text{ lorsque }|\omega|=1\text{ et }(v_2-v_1)\cdot\omega<0\,.
$$

Un travail important de \textcite{PulviSim} montre d'ailleurs comment des estimations sur les cumulants permettent de quantifier l'erreur correspondant à l'hypothèse de chaos moléculaire, et utilise ces estimations 
pour préciser la limite de Boltzmann--Grad (voir les Théorèmes 2.4 et 2.5 de \parencite{PulviSim}).

\smallskip
Dans ce qui va suivre, la notion de cumulant va être utilisée de manière différente. Le lemme ci-dessous permet d'abord de montrer que la notion de cumulant est reliée à la fonctionnelle $\mathcal K^\eps$ 
dont l'étude est suggérée par le théorème de Cramér.

\begin{lemm}\label{L-FnGenCum}
Pour tout $\eps>0$, tout $t\ge 0$ et tout $h\in C_b(\mathbf T^3\times\mathbf R^3)$, on a
$$
\mathcal K^\eps(t,h)=\sum_{k\ge 1}\frac{1}{k!}\left\langle f^\eps_k(t,\cdot),(e^h-1)^{\otimes k}\right\rangle\,.
$$
\end{lemm}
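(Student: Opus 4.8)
The plan is to start from the definition \eqref{DefK0} and peel off the three operations in turn: first expand the exponential in terms of the empirical measures $\rho^\eps_{k,t}$, then take the grand-canonical average, and only at the end take the logarithm. Write $g:=e^h-1\in C_b(\mathbf T^3\times\mathbf R^3)$. Since $\mu_\eps\langle\rho^\eps_t,h\rangle=\sum_{j=1}^Nh(z_j(t))$, one has $\exp(\mu_\eps\langle\rho^\eps_t,h\rangle)=\prod_{j=1}^N(1+g(z_j(t)))$. For fixed $\eps>0$ the exclusion constraint forces $N\le 3/(4\pi\eps^3)$, so this is a finite product; expanding it and grouping the terms according to the number $k$ of selected factors $g$, one recognizes, by the definition of $\rho^\eps_k$ and of \eqref{DefRhoepst},
$$
\prod_{j=1}^N(1+g(z_j(t)))=\sum_{k\ge 0}\frac1{k!}\sum_{\genfrac{}{}{0pt}{2}{j:\{1,\ldots,k\}\to\{1,\ldots,N\}}{\text{ injective }}}g(z_{j(1)}(t))\cdots g(z_{j(k)}(t))=\sum_{k\ge 0}\frac{\mu_\eps^k}{k!}\langle\rho^\eps_{k,t},g^{\otimes k}\rangle\,,
$$
the $k=0$ term being $1$ and all terms with $k>N$ vanishing, since a subset of size $k$ corresponds to $k!$ injective maps.

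Next I would take the grand-canonical expectation and invoke the very definition \eqref{DefCorrel} of the correlation functions, namely $\mathbb E_\eps(\langle\rho^\eps_{k,t},g^{\otimes k}\rangle)=\langle F^\eps_k(t,\cdot),g^{\otimes k}\rangle$ (legitimate since $g^{\otimes k}$ is symmetric and lies in $C_b$), to arrive at the ``moment generating functional''
$$
\mathbb E_\eps(\exp(\mu_\eps\langle\rho^\eps_t,h\rangle))=1+\sum_{k\ge 1}\frac{\mu_\eps^k}{k!}\langle F^\eps_k(t,\cdot),g^{\otimes k}\rangle=:M^\eps(t,g)\,.
$$
It then remains to identify $\frac1{\mu_\eps}\ln M^\eps(t,g)$ with the announced series. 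Setting $m_k:=\mu_\eps^kF^\eps_k$ and $\mathcal P_k:=\bigcup_{j=1}^k\mathcal P_k^j$, the assertion is precisely the exponential (cumulant) formula
$$
\sum_{k\ge 0}\frac1{k!}\langle m_k,g^{\otimes k}\rangle=\exp\Big(\sum_{k\ge 1}\frac1{k!}\langle c_k,g^{\otimes k}\rangle\Big)\,,\qquad c_k:=\sum_{\sigma\in\mathcal P_k}(-1)^{|\sigma|-1}(|\sigma|-1)!\prod_{B\in\sigma}m_{|B|}\,,
$$
together with the remark that, factoring $\mu_\eps^{\sum_{B}|B|}=\mu_\eps^k$ out of the product, one has $c_k=\mu_\eps f^\eps_k$ with $f^\eps_k$ exactly as in \eqref{DefCumul}; dividing $\ln M^\eps=\sum_{k\ge1}\frac1{k!}\langle c_k,g^{\otimes k}\rangle$ by $\mu_\eps$ then yields the claim.

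The hard part will be the combinatorial identity above, which is the classical passage from moments to connected (Ursell) correlations. I would prove it by expanding the right-hand exponential,
$$
\exp\Big(\sum_{k\ge 1}\frac1{k!}\langle c_k,g^{\otimes k}\rangle\Big)=\sum_{j\ge 0}\frac1{j!}\sum_{k_1,\ldots,k_j\ge 1}\frac1{k_1!\cdots k_j!}\langle c_{k_1}\otimes\cdots\otimes c_{k_j},g^{\otimes(k_1+\cdots+k_j)}\rangle\,,
$$
and collecting the terms of total degree $n=k_1+\cdots+k_j$. Because $g^{\otimes n}$ is symmetric, the degree-$n$ contribution equals $\frac1{n!}\big\langle\sum_{\sigma\in\mathcal P_n}\prod_{B\in\sigma}c_{|B|},\,g^{\otimes n}\big\rangle$, the combinatorial factor $\frac{n!}{j!\,k_1!\cdots k_j!}$ counting exactly the ordered set-partitions of $\{1,\ldots,n\}$ into $j$ blocks of the prescribed sizes. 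This recovers $\frac1{n!}\langle m_n,g^{\otimes n}\rangle$ if and only if $m_n=\sum_{\sigma\in\mathcal P_n}\prod_{B\in\sigma}c_{|B|}$, which is the Möbius inversion on the partition lattice of the defining relation for $c_k$. I expect the only real difficulty here to be bookkeeping --- keeping the powers of $\mu_\eps$, the signs $(-1)^{j-1}$, and the weights $(j-1)!$ consistent between \eqref{DefCumul} and the Möbius formula. Convergence causes no trouble: for fixed $\eps>0$ the number $N$ is bounded and $g$ is bounded, so every series above has only finitely many nonzero terms and all interchanges of summation are legitimate.
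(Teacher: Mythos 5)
Your proposal is correct and follows essentially the same two-step route as the paper: first the moment identity $\mathbb E_\eps(\exp(\mu_\eps\langle\rho^\eps_t,h\rangle))=1+\sum_{k\ge1}\frac{\mu_\eps^k}{k!}\langle F^\eps_k(t,\cdot),(e^h-1)^{\otimes k}\rangle$, then the moment-to-cumulant passage via partition combinatorics; your two variants (factorizing $\exp(\sum_jh(\mathbf z_j(t)))=\prod_j(1+g(\mathbf z_j(t)))$ rather than expanding the exponential series and regrouping repeated indices, and verifying $\exp(\mu_\eps\sum_{k}\frac1{k!}\langle f^\eps_k,g^{\otimes k}\rangle)=M^\eps(t,g)$ by M\"obius inversion on the partition lattice rather than expanding $\ln(1+X)$) are cosmetic, and arguably cleaner. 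One small correction: the cumulant series is \emph{not} finitely supported --- the singleton partition contributes $(-1)^{k-1}(k-1)!\,(F^\eps_1)^{\otimes k}$ to $f^\eps_k$ for every $k$ --- so the degree-by-degree rearrangement in your last display is exactly as formal as the paper's own term-by-term treatment of $\ln(1+X)$, no more and no less.
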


\begin{proof}
D'abord
$$
\begin{aligned}
\mathbb E_\eps(\exp(\mu_\eps\langle\rho^\eps_t,h\rangle))-1=&\sum_{m\ge 1}\frac{\mu_\eps^m}{m!}\mathbb E_\eps(\langle\rho^\eps_t,h\rangle^m)
\\
=&\sum_{m\ge 1}\frac1{m!}\mathbb E_\eps\left(\sum_{j_1,\ldots,j_m=1}^Nh(z_{j_1}(t))\ldots h(z_{j_m}(t))\right)
\\
=&\sum_{m\ge 1}\frac1{m!}\sum_{n=1}^m\mu_\eps^n\sum_{\sigma\in\mathcal P_m^n}\int_{(\mathbf T^3\times\mathbf R^3)^n}\prod_{j=1}^nh(z_j)^{|\sigma_j|}F^\eps_n(t,Z_n)dZ_n\,.
\end{aligned}
$$
Or le nombre de partitions de $\{1,\ldots,m\}$ en $n$ ensembles à $\nu_1,\ldots,\nu_n$ éléments vaut
$$
\frac1{n!}{m\choose n_1}{m-\nu_1\choose \nu_2}\ldots{m-\nu_1-\ldots-\nu_{p-2}\choose \nu_{p-1}}=\frac1{n!}\frac{m!}{\nu_1!\ldots \nu_n!}\,,
$$
de sorte que
$$
\sum_{m\ge n}\frac{1}{m!}\sum_{\sigma\in\mathcal P_m^n}\prod_{j=1}^nh(z_j)^{|\sigma_j|}=\frac1{n!}\prod_{j=1}^n\sum_{\nu_j\ge 1}\frac{h(z_j)^{\nu_j}}{\nu_j!}=\frac1{n!}\prod_{j=1}^n(e^{h(z_j)}-1)\,.
$$
En échangeant l'ordre des sommations en $m$ et en $n$, on trouve donc que
$$
\begin{aligned}
\mathbb E_\eps(\exp(\mu_\eps\langle\rho^\eps_t,h\rangle))-1
=&\sum_{n\ge 1}\mu_\eps^n\int_{(\mathbf T^3\times\mathbf R^3)^n}\sum_{m\ge n}\frac1{m!}\sum_{\sigma\in\mathcal P_m^n}\prod_{j=1}^nh(z_j)^{|\sigma_j|}F^\eps_n(t,Z_n)dZ_n
\\
=&\sum_{n\ge 1}\frac{\mu_\eps^n}{n!}\langle F^\eps_n(t,\cdot),(e^h-1)^{\otimes n}\rangle\,.
\end{aligned}
$$
Puis
$$
\begin{aligned}
\ln\mathbb E_\eps(\exp(\mu_\eps\langle\rho^\eps_t,h\rangle))=&\sum_{p\ge 1}\frac{(-1)^{p-1}}{p}\left(\sum_{n\ge 1}\frac{\mu_\eps^n}{n!}\langle F^\eps_n(t,\cdot),(e^h-1)^{\otimes n}\rangle\right)^p
\\
=&\sum_{p\ge 1}\frac{(-1)^{p-1}}{p}\sum_{n_1,\ldots,n_p\ge 1}\frac{\mu_\eps^{n_1+\ldots+n_p}}{n_1!\ldots n_p!}\prod_{k=1}^p\langle F^\eps_{n_k},(e^h-1)^{\otimes n_k}\rangle\,.
\end{aligned}
$$
Grâce au dénombrement des partitions à $p$ éléments de $\{1,\ldots,n\}$ rappelé ci-dessus 
$$
\begin{aligned}
\ln\mathbb E_\eps(\exp(\mu_\eps\langle\rho^\eps_t,h\rangle))=&\sum_{n\ge 1}\frac{\mu_\eps^n}{n!}\sum_{p=1}^n\frac{(-1)^{p-1}}{p}p!\sum_{\sigma\in\mathcal P_n^p}\prod_{k=1}^p\langle F^\eps_{|\sigma_k|},(e^h-1)^{\otimes |\sigma_k|}\rangle
\\
=&\mu_\eps\sum_{n\ge 1}\frac1{n!}\langle f^\eps_k,(e^h-1)^{\otimes k}\rangle\,,
\end{aligned}
$$
d'où le résultat annoncé.
\end{proof}

\smallskip
Le Lemme \ref{L-FnGenCum} explique donc pourquoi la fonctionnelle $\mathcal K^\eps$ est appelée {\og  fonction génératrice des cumulants\fg}.

\subsection{Equation de Hamilton--Jacobi fonctionnelle: approche formelle}


L'un des résultats principaux de \textcite{BoGalSRSim1} est que, dans la limite de Boltzmann--Grad, c'est-à-dire lorsque $\eps\to 0^+$, la fonction génératrice des cumulants est solution d'une équation de 
Hamilton--Jacobi {\og  fonctionnelle\fg}. Le but de cette section est de présenter un calcul purement formel, inspiré de la section 3.3 de \parencite{BGSRSicm} aboutissant à cette équation. Les énoncés rigoureux 
feront l'objet de la section suivante.

\subsubsection{Dérivées fonctionnelles de $\mathcal K^\eps(t,\cdot)$ d'ordre un et deux}


Soient trois fonctions test $h,\phi,\psi\in C_b(\mathbf T^3\times\mathbf R^3)$. On commence par calculer
$$
\begin{aligned}
\left\langle\frac{\partial\mathcal K^\eps}{\partial h}(t,h),\phi\right\rangle=\lim_{\eta\to 0}\frac{\mathcal K^\eps(t,h+\eta\phi)-\mathcal K^\eps(t,h)}{\eta}
\\
=\frac1{\mu_\eps\mathbb E_\eps(\exp(\mu_\eps\langle\rho^\eps_t,h\rangle))}
\lim_{\eta\to 0}\mathbb E_\eps\left(\frac{\exp(\mu_\eps\langle\rho^\eps_t,h+\eta\phi\rangle-\exp(\mu_\eps\langle\rho^\eps_t,h\rangle))}{\eta}\right)
\\
=\frac{\mathbb E_\eps(\langle\rho^\eps_t,\phi\rangle\exp(\mu_\eps\langle\rho^\eps_t,h\rangle))}{\mathbb E_\eps(\exp(\mu_\eps\langle\rho^\eps_t,h\rangle))}&\,.
\end{aligned}
$$
Passons maintenant au calcul de la dérivée seconde:
$$
\begin{aligned}
\frac{\partial^2\mathcal K^\eps}{\partial h^2}(t,h)\cdot(\phi,\psi)
=\lim_{\eta\to 0}\frac1\eta\left(\left\langle\frac{\partial\mathcal K^\eps}{\partial h}(t,h+\eta\phi),\psi\right\rangle-\left\langle\frac{\partial\mathcal K^\eps}{\partial h}(t,h),\psi\right\rangle\right)
\\
=\lim_{\eta\to 0}\frac1\eta\left(\frac{\mathbb E_\eps(\langle\rho^\eps_t,\psi\rangle\exp(\mu_\eps\langle\rho^\eps_t,h+\eta\phi\rangle))}{\mathbb E_\eps(\exp(\mu_\eps\langle\rho^\eps_t,h+\eta\phi\rangle))}
-\frac{\mathbb E_\eps(\langle\rho^\eps_t,\psi\rangle\exp(\mu_\eps\langle\rho^\eps_t,h\rangle))}{\mathbb E_\eps(\exp(\mu_\eps\langle\rho^\eps_t,h\rangle))} \right)
\\
=\lim_{\eta\to 0}\frac1\eta\left(\frac{\mathbb E_\eps(\langle\rho^\eps_t,\psi\rangle\exp(\mu_\eps\langle\rho^\eps_t,h+\eta\phi\rangle))
	-\mathbb E_\eps(\langle\rho^\eps_t,\psi\rangle\exp(\mu_\eps\langle\rho^\eps_t,h\rangle))}{\mathbb E_\eps(\exp(\mu_\eps\langle\rho^\eps_t,h+\eta\phi\rangle))}\right)
\\
+\mathbb E_\eps(\langle\rho^\eps_t,\psi\rangle\exp(\mu_\eps\langle\rho^\eps_t,h\rangle))\lim_{\eta\to 0}\frac1\eta\left(\frac1{\mathbb E_\eps(\exp(\mu_\eps\langle\rho^\eps_t,h+\eta\phi\rangle))}
-\frac1{\mathbb E_\eps(\exp(\mu_\eps\langle\rho^\eps_t,h\rangle))}\right)
\\
=\mu_\eps\frac{\mathbb E_\eps(\langle\rho^\eps_t,\phi\rangle\langle\rho^\eps_t,\psi\rangle\exp(\mu_\eps\langle\rho^\eps_t,h\rangle))}{\mathbb E_\eps(\exp(\mu_\eps\langle\rho^\eps_t,h\rangle))}
\\
-\mu_\eps\frac{\mathbb E_\eps(\langle\rho^\eps_t,\psi\rangle\exp(\mu_\eps\langle\rho^\eps_t,h\rangle))\mathbb E_\eps(\langle\rho^\eps_t,\phi\rangle\exp(\mu_\eps\langle\rho^\eps_t,h\rangle))}
{\mathbb E_\eps(\exp(\mu_\eps\langle\rho^\eps_t,h\rangle))^2}\,,
\end{aligned}
$$
ce que l'on peut réécrire sous la forme
$$
\begin{aligned}
\frac1{\mu_\eps}\frac{\partial^2\mathcal K^\eps}{\partial h^2}(t,h)\cdot(\phi,\psi)
=&\frac{\mathbb E_\eps(\langle\rho^\eps_t,\phi\rangle\langle\rho^\eps_t,\psi\rangle\exp(\mu_\eps\langle\rho^\eps_t,h\rangle))}{\mathbb E_\eps(\exp(\mu_\eps\langle\rho^\eps_t,h\rangle))}
\\
&-\left\langle\frac{\partial\mathcal K^\eps}{\partial h}(t,h),\phi\right\rangle\left\langle\frac{\partial\mathcal K^\eps}{\partial h}(t,h),\psi\right\rangle\,,
\end{aligned}
$$
ou encore, de manière plus condensée,
$$
\frac1{\mu_\eps}\frac{\partial^2\mathcal K^\eps}{\partial h^2}(t,h)+\frac{\partial\mathcal K^\eps}{\partial h}(t,h)\otimes\frac{\partial\mathcal K^\eps}{\partial h}(t,h)
=\frac{\mathbb E_\eps(\exp(\mu_\eps\langle\rho^\eps_t,h\rangle)\langle\rho^\eps_t\otimes\rho^\eps_t,\cdot\rangle)}{\mathbb E_\eps(\exp(\mu_\eps\langle\rho^\eps_t,h\rangle))}\,.
$$

\subsubsection{Évolution de $\mathcal K^\eps(t,\cdot)$}


Compte-tenu de la définition \eqref{DefK0} de $\mathcal K^\eps$ (qui est un logarithme), il sera commode de calculer
$$
\begin{aligned}
\exp(\mu_\eps\mathcal K^\eps(t,h))\mu_\eps\partial_t\mathcal K^\eps(t,h)=\partial_t\exp(\mu_\eps\mathcal K^\eps(t,h))=\partial_t\mathbb E_\eps(\exp\langle\mu_\eps\rho^\eps_t,h\rangle)
\\
=\frac1{\mathcal Z_\eps}\sum_{n\ge 1}\frac{\mu_\eps^n}{n!}\partial_t\int_{(\mathbf T^3\times\mathbf R^3)^n}\exp\left(\sum_{i=1}^nh(z_i(t))\right)\mathbb F^{in}_n(Z_n)dZ_n
\\
=\frac1{\mathcal Z_\eps}\sum_{n\ge 1}\frac{\mu_\eps^n}{n!}\int_{(\mathbf T^3\times\mathbf R^3)^n}\exp\left(\sum_{i=1}^nh(z_i)\right)\partial_t\mathbb F_n(t,Z_n)dZ_n&\,.
\end{aligned}
$$
On supposera d'autre part que la fonction test $h$ est suffisamment régulière --- par exemple, on pourra évidemment se restreindre au cas où $h\in C^\infty_c(\mathbf T^3\times\mathbf R^3)$, 
mais on verra plus loin qu'il n'est pas nécessaire que $h$ soit indéfiniment différentiable.

On exprime alors $\partial_t\mathbb F_n(t,\cdot)$ au moyen de l'équation de Liouville écrite au sens des distributions sous la forme \eqref{LiouvilleDistrib2}:
$$
\begin{aligned}
\partial_t\mathbb F_n(t,Z_n)=-\sum_{i=1}^nv_i\cdot\nabla_{x_i}\mathbb F_n(t,Z_n)
\\
+\sum_{1\le i<j\le n}\mathbb F_n(t, \hat Z_n[i,j])\big|_{\partial^+\Gamma^{\eps}_n}((v_j-v_i)\cdot n_{ij})_+\delta_{\text{dist}(x_i,x_j)=\eps}
\\
\sum_{1\le i<j\le n}\mathbb F_n(t,Z_n)\big|_{\partial^+\Gamma^{\eps}_n}((v_j-v_i)\cdot n_{ij})_-\delta_{\text{dist}(x_i,x_j)=\eps}&\,.
\end{aligned}
$$
Donc
$$
\begin{aligned}
\exp(\mu_\eps\mathcal K^\eps(t,h))\mu_\eps\partial_t\mathcal K^\eps(t,h)=-\frac1{\mathcal Z_\eps}\sum_{n\ge 1}\frac{\mu_\eps^n}{n!}
\left\langle\sum_{i=1}^nv_i\cdot\nabla_{x_i}\mathbb F_n(t,\cdot),\exp\left(\sum_{i=1}^nh(z_i)\right)\right\rangle
\\
+\frac1{\mathcal Z_\eps}\sum_{n\ge 1}\frac{\mu_\eps^n}{n!}\left\langle\sum_{1\le j<k\le n}\mathbb F_n(t,\hat Z_n[j,k])\big|_{\partial^+\Gamma^{\eps}_n}
((v_k-v_j)\cdot n_{jk})_+\delta_{\text{dist}(x_j,x_k)=\eps},e^{\sum_{i=1}^nh(z_i)}\right\rangle
\\
-\frac1{\mathcal Z_\eps}\sum_{n\ge 1}\frac{\mu_\eps^n}{n!}\left\langle\sum_{1\le j<k\le n}\mathbb F_n(t,\cdot)\big|_{\partial^+\Gamma^{\eps}_n}
((v_k-v_j)\cdot n_{jk})_-\delta_{\text{dist}(x_j,x_k)=\eps},e^{\sum_{i=1}^nh(z_i)}\right\rangle
\\
=:T_1+T_2-T_3&.
\end{aligned}
$$
Le terme $T_1$ se réécrit comme suit:
$$
\begin{aligned}
T_1=&\frac1{\mathcal Z_\eps}\sum_{n\ge 1}\frac{\mu_\eps^n}{n!}\left\langle\mathbb F_n(t,\cdot),\sum_{i=1}^nv_i\cdot\nabla_{x_i}\exp\left(\sum_{i=1}^nh(z_i)\right)\right\rangle
\\
=&\frac1{\mathcal Z_\eps}\sum_{n\ge 1}\frac{\mu_\eps^n}{n!}\left\langle\mathbb F_n(t,\cdot),\exp\left(\sum_{i=1}^nh(z_i)\right)\sum_{i=1}^nv_i\cdot\nabla_xh(z_i)\right\rangle
\\
=&\mathbb E_\eps(\exp(\langle\mu_\eps\rho^\eps_t,h\rangle)\langle\mu_\eps\rho^\eps_t,v\cdot\nabla_xh\rangle)
\\
=&\mathbb E_\eps(\exp(\mu_\eps\langle\rho^\eps_t,h\rangle))\mu_\eps\left\langle\frac{\partial\mathcal K^\eps}{\partial h}(t,h),v\cdot\nabla_xh\right\rangle
\\
=&\exp(\mu_\eps\mathcal K^\eps(t,h))\mu_\eps\left\langle\frac{\partial\mathcal K^\eps}{\partial h}(t,h),v\cdot\nabla_xh\right\rangle\,.
\end{aligned}
$$
Passons au terme $T_3$:
$$
\begin{aligned}
T_3=&\frac1{\mathcal Z_\eps}\sum_{n\ge 1}\frac{\mu_\eps^n}{n!}\left\langle\sum_{1\le j<k\le n}\mathbb F_n(t,\cdot)\big|_{\partial^+\Gamma^{\eps}_n}
((v_k-v_j)\cdot n_{jk})_-\delta_{\text{dist}(x_j,x_k)=\eps},e^{\sum_{i=1}^nh(z_i)}\right\rangle
\\
=&\tfrac12\mu_\eps^2\mathbb E_\eps(\exp(\langle\mu_\eps\rho^\eps_t,h\rangle)\langle\rho^\eps_t\otimes\rho^\eps_t,((v-v_*)\cdot\tfrac{x-x_*}\eps)_+\delta_{\text{dist}(x,x_*)=\eps+0}\rangle)\,.
\end{aligned}
$$
Quant au terme $T_2$, il vient:
$$
\begin{aligned}
T_2=&\frac1{\mathcal Z_\eps}\sum_{n\ge 1}\frac{\mu_\eps^n}{n!}\left\langle\sum_{1\le j<k\le n}\!\!\mathbb F_n(t,\hat Z_n[j,k])\big|_{\partial^+\Gamma^{\eps}_n}
\!((v_k\!-\!v_j)\!\cdot\! n_{jk})_+\delta_{\text{dist}(x_j,x_k)=\eps},e^{\sum_{i=1}^nh(z_i)}\right\rangle
\\
=&\frac1{2\mathcal Z_\eps}\sum_{n\ge 1}\frac{\mu_\eps^n}{n!}\left\langle\sum_{j,k=1}^n\mathbb F_n(t,\cdot)
((v'_k\!-\!v'_j)\!\cdot\! n_{jk})_+\delta_{\text{dist}(x_j,x_k)=\eps+0},e^{\mathbf Dh(z_j,z_k,n_{kj})\!+\!\sum_{i=1}^n\!h(z_i)}\right\rangle
\\
=&\tfrac12\mu_\eps^2\mathbb E_\eps(\exp(\langle\mu_\eps\rho^\eps_t,h\rangle)\langle\rho^\eps_t\otimes\rho^\eps_t,e^{\mathbf Dh(z,z_*,\frac{x-x_*}\eps)}
((v-v_*)\cdot\tfrac{x-x_*}\eps)_-\delta_{\text{dist}(x,x_*)=\eps+0}\rangle)\,,
\end{aligned}
$$
où on a noté
\begin{equation}\label{DefbD}
\mathbf Dh(z,z_*,\omega):=h(x,v-((v-v_*)\cdot\omega)\omega)+h(x_*,v_*+((v-v_*)\cdot\omega)\omega)-h(z)-h(z_*)\,.
\end{equation}
Posons, en observant que $\mathbf Dh(z,z_*,\omega)=\mathbf Dh(z,z_*,-\omega)$,
\begin{equation}\label{Jeps}
J_\eps[h](z,z_*):=\int_{\mathbf S^2}\left(e^{\mathbf Dh(z,z_*,\omega)}-1\right)((v-v_*)\cdot\omega)_+\delta_{\eps\omega}(x-x_*)d\omega\,.
\end{equation}
Observons que
$$
\begin{aligned}
\int_{\mathbf T^3\times\mathbf T^3}\psi(x,x_*)\delta_{\text{dist}(x,x_*)=\eps}=\int_{\mathbf T^3\times\mathbf S^2}\psi(x,x+\eps\omega)\eps^2dxd\omega
\\
=\int_{\mathbf T^3\times\mathbf T^3}\psi(x,x_*)\left(\eps^2\int_{\mathbf S^2}\delta_{\eps\omega}(x-x_*)d\omega\right)dxdx_*&\,.
\end{aligned}
$$
Donc
$$
\begin{aligned}
T_2-T_3
\\
=\!\tfrac12\mu_\eps^2\mathbb E_\eps(\exp(\langle\mu_\eps\rho^\eps_t,h\rangle))\langle\rho^\eps_t\otimes\rho^\eps_t,(e^{\mathbf Dh(z,z_*,\frac{x-x_*}\eps)}\!-\!1)
((v\!-\!v_*)\cdot\tfrac{x\!-\!x_*}\eps)_-\delta_{\text{dist}(x,x_*)=\eps+0}\rangle)
\\
=\!\tfrac12\mu_\eps^2\mathbb E_\eps(\exp(\langle\mu_\eps\rho^\eps_t,h\rangle))\left(\frac1{\mu_\eps}\frac{\partial^2\mathcal K^\eps}{\partial h^2}(t,h)
+\frac{\partial\mathcal K^\eps}{\partial h}(t,h)\otimes\frac{\partial\mathcal K^\eps}{\partial h}(t,h)\right)
\cdot \eps^2J_\eps[h]
\\
=\tfrac12\exp(\mu_\eps\mathcal K^\eps(t,h))\mu_\eps\left(\frac1{\mu_\eps}\frac{\partial^2\mathcal K^\eps}{\partial h^2}(t,h)
+\frac{\partial\mathcal K^\eps}{\partial h}(t,h)\otimes\frac{\partial\mathcal K^\eps}{\partial h}(t,h)\right)\cdot J_\eps[h]&\,.
\end{aligned}
$$
On trouve donc que
$$
\begin{aligned}
\exp(\mu_\eps\mathcal K^\eps(t,h))\mu_\eps\partial_t\mathcal K^\eps(t,h)=\exp(\mu_\eps\mathcal K^\eps(t,h))\mu_\eps\left\langle\frac{\partial\mathcal K^\eps}{\partial h}(t,h),v\cdot\nabla_xh\right\rangle
\\
+\tfrac12\exp(\mu_\eps\mathcal K^\eps(t,h))\mu_\eps\left(\frac1{\mu_\eps}\frac{\partial^2\mathcal K^\eps}{\partial h^2}(t,h)
+\frac{\partial\mathcal K^\eps}{\partial h}(t,h)\otimes\frac{\partial\mathcal K^\eps}{\partial h}(t,h)\right)\cdot J_\eps[h]\,,
\end{aligned}
$$
c'est-à-dire que
\begin{equation}\label{HJeps}
\begin{aligned}
\partial_t\mathcal K^\eps(t,h)=\left\langle\frac{\partial\mathcal K^\eps}{\partial h}(t,h),v\cdot\nabla_xh\right\rangle
\\
+\tfrac12\left(\frac1{\mu_\eps}\frac{\partial^2\mathcal K^\eps}{\partial h^2}(t,h)+\frac{\partial\mathcal K^\eps}{\partial h}(t,h)\otimes\frac{\partial\mathcal K^\eps}{\partial h}(t,h)\right)\cdot J_\eps[h]&\,.
\end{aligned}
\end{equation}
Passons à la limite formellement dans cette équation, en supposant que $\mathcal K^\eps(t,h)$ converge vers $\mathcal K(t,h)$ lorsque $\eps\to 0^+$: il vient
\begin{equation}\label{HJ1}
\partial_t\mathcal K(t,h)=\left\langle\frac{\partial\mathcal K}{\partial h}(t,h),v\cdot\nabla_xh\right\rangle+\tfrac12\frac{\partial\mathcal K}{\partial h}(t,h)\otimes\frac{\partial\mathcal K}{\partial h}(t,h)\cdot J_0[h]\,,
\end{equation}
avec
\begin{equation}\label{J0}
J_0[h]:=\left(\int_{\mathbf S^2}\left(e^{\mathbf Dh(z,z_*,\omega)}-1\right)((v-v_*)\cdot\omega)_+d\omega\right)\delta_0(x-x_*)\,.
\end{equation}
L'équation vérifiée par $\mathcal K$ est une équation aux dérivées partielles {\og  fonctionnelle\fg}   (puisque la {\og  variable\fg}   $h$ appartient à un espace fonctionnel de dimension infinie), de type Hamilton--Jacobi, 
puisque l'équation obtenue fait intervenir une fonctionnelle non linéaire ne dépendant que de la dérivée première $\partial\mathcal K/\partial h$.

\smallskip
Terminons cette section par une mise en garde: la dérivée seconde ${\partial^2\mathcal K^\eps(t,h)}/{\partial h^2}$ multipliée par le petit paramètre $1/\mu_\eps$ dans \eqref{HJeps} pourrait évoquer
une limite à {\og  viscosité\fg}   petite, et faire de la limite (formelle) de \eqref{HJeps} vers \eqref{HJ1} un résultat analogue à la limite des solution $u_\eps$ de l'équation de Burgers
$$
\partial_tu_\eps+u_\eps\partial_xu_\eps=\eps\partial_x^2u_\eps
$$
vers les solutions {\og  entropiques\fg}   de l'équation de Hopf
$$
\partial_tu+\partial_x\left(\frac{u^2}2\right)=0\,.
$$
Voir par exemple \parencite{Hopf50}. La définition \eqref{DefK0} est d'ailleurs formellement tout à fait analogue à la transformation de Cole-Hopf, formule (3) de \parencite{Hopf50}. Mais le {\og  coefficient\fg}   
$J_\eps[h]/\mu_\eps$ de la dérivée seconde ${\partial^2\mathcal K^\eps}/{\partial h^2}(t,h)$ n'est ni positif, ni même de signe constant, puisque $\mathbf Dh$ prend en général des valeurs positives 
et négatives. En effet, on vérifiera facilement que
$$
\mathbf Dh(x,v,x,v_*,\omega)>0\implies\mathbf Dh(x,v',x,v'_*,-\omega)=-\mathbf Dh(x,v,x,v_*,\omega)<0\,,
$$
par le même argument que celui utilisé dans la section \ref{SS-BoltzMath} pour démontrer la formulation faible de l'intégrale de collision de Boltzmann.

\subsection{L'équation de Hamilton--Jacobi: principaux résultats}


Soient $t,\eps>0$. Dans toute la suite, on notera $\mathbf Z_n([0,t])$ la restriction à l'intervalle $[0,t]$ du chemin $\mathbf R\ni\tau\mapsto Z_n(\tau):=S^{n,\eps}_\tau(Z^{in}_n)\in(\mathbf T^3\times\mathbf R^3)^n$. 
On notera $\mathbf z_j([0,t])$ la $j$-ième composante de $\mathbf Z_n([0,t])$. On notera également $\mathbf D_n([0,t])$ l'espace des chemins càdlàg\footnote{Acronyme de {\og  continu à droite, [avec] limite à gauche\fg}.} définis sur $[0,t]$ à valeurs dans $(\mathbf T^3\times\mathbf R^3)^n$, muni de la topologie de Skorokhod (\parencite{Billing}, chapitre 3, section 12).

Commençons par donner une borne sur la fonction génératrice des cumulants. 

\begin{prop}\label{P-BorneCum}
On suppose que la fonction de distribution initiale $f^{in}$ vérifie la borne \eqref{BndCondIni}. Il existe $C,T_0>0$ tels que, pour tout $h:\,D_1([0,+\infty[)\to\mathbf R$ continue vérifiant
$$
h((x,v)([0,t]))\le\alpha+\tfrac14\beta_0\sup_{0\le s\le t}|v(s)|^2\,,
$$
l'on a
$$
\begin{aligned}
\left|\frac1{\mu_\eps}\ln\mathbb E_\eps\left(\exp\left(\sum_{j=1}^Nh(\mathbf z_j([0,t])\right)\right)\right|\le \frac{CC_0e^\alpha}{\beta_0^2}\sum_{n\ge 1}\left(\frac{CC_0e^\alpha}{\beta_0^2}\right)^{n-1}(t+\eps)^{n-1}
\\
=\frac{CC_0e^\alpha}{\beta_0^2-CC_0e^\alpha(t+\eps)}
\end{aligned}
$$
pour $t+\eps<\min\left(T_0,\frac{\beta_0^2e^{-\alpha}}{CC_0}\right):=T_\alpha[C_0,\beta_0]$.
\end{prop}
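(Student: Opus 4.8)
The plan is to recognise the left-hand side as a cumulant generating functional on \emph{path space}, to reduce the estimate to a bound on each dynamical cumulant separately, and then to control those cumulants through the connected collision structure they encode. The quantity to be estimated has exactly the form of the generating function of Lemma~\ref{L-FnGenCum}, with the phase space $\mathbf T^3\times\mathbf R^3$ replaced by the path space $D_1([0,t])$: writing the empirical measure on trajectories
$$
\tilde\rho^\eps_t:=\frac1{\mu_\eps}\sum_{j=1}^N\delta_{\mathbf z_j([0,t])}\,,\qquad\sum_{j=1}^Nh(\mathbf z_j([0,t]))=\mu_\eps\langle\tilde\rho^\eps_t,h\rangle\,,
$$
the computation of Lemma~\ref{L-FnGenCum} applies verbatim once $F^\eps_n$ and $f^\eps_n$ are replaced by their path-space analogues $F^\eps_n([0,t])$, defined as in \eqref{DefCorrel} from the whole trajectory $\mathbf Z_n([0,t])$ rather than from the endpoint $S^{n,\eps}_tZ_n$, and by the dynamical cumulants $f^\eps_n([0,t])$ built from them via \eqref{DefCumul}. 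This yields
$$
\frac1{\mu_\eps}\ln\mathbb E_\eps\left(\exp\left(\sum_{j=1}^Nh(\mathbf z_j([0,t]))\right)\right)=\sum_{n\ge1}\frac1{n!}\left\langle f^\eps_n([0,t]),(e^h-1)^{\otimes n}\right\rangle\,,
$$
so that it suffices to prove, for every $n\ge1$, the per-cumulant bound
$$
\frac1{n!}\left|\left\langle f^\eps_n([0,t]),(e^h-1)^{\otimes n}\right\rangle\right|\le\left(\frac{CC_0e^\alpha}{\beta_0^2}\right)^{\!n}(t+\eps)^{n-1}\,;
$$
the announced estimate then follows by summing the geometric series, which converges precisely when $t+\eps<\beta_0^2e^{-\alpha}/(CC_0)$.

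\textbf{The connected collision-tree structure.} The content of this bound is that $f^\eps_n$ is a \emph{connected} correlation: inserting \eqref{DefCumul} together with the grand-canonical expansion of the $F^\eps_\sigma$, the inclusion–exclusion signs $(-1)^{j-1}(j-1)!$ cancel all disconnected contributions, leaving an object supported on configurations in which the $n$ tagged particles are linked to one another by the backward hard-sphere flow. I would make this precise through the graphical (cluster) expansion of the dynamics, in which connectivity is realised by collisions. The minimal connected graphs on $n$ vertices are spanning trees, bearing exactly $n-1$ collision edges; each edge is localised on a surface $\{\mathrm{dist}(x_i,x_j)=\eps\}$ and therefore contributes a factor $\eps^2$ through the cross-section $((v_i-v_j)\cdot\omega)_+\,d\omega$, compensated \emph{exactly} by the weight $\mu_\eps=\eps^{-2}$ attached to the corresponding particle in \eqref{DefCumul}. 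After this cancellation each edge leaves one collision-time integral; ordering the $n-1$ collision times in $[0,t+\eps]$ produces $(t+\eps)^{n-1}/(n-1)!$. The factor $(t+\eps)^{n-1}$ is the one appearing in the statement, while the residual combinatorics — the count $n^{n-2}$ of labelled trees together with the prefactors $1/n!$ and $1/(n-1)!$ — combine into a quantity bounded by $C^n$.

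\textbf{Velocity weights and conclusion.} It remains to carry the velocity weights through the tree. By hypothesis each factor obeys $|e^h-1|\le e^\alpha\exp(\tfrac14\beta_0\sup_{0\le s\le t}|v_j(s)|^2)$, accounting for the factor $e^{n\alpha}$, while the initial weight is $\prod_jf^{in}(z_j^{in})\le C_0^ne^{-\beta_0\sum_j|v_j^{in}|^2}$ by \eqref{BndCondIni}. Because the collisions are elastic, at each tree edge the colliding pair satisfies $|v'_i|^2+|v'_j|^2=|v_i|^2+|v_j|^2$, so a collision integral of the form $\int_{\mathbf S^2}((v_i-v_j)\cdot\omega)_+\exp(\tfrac14\beta_0(|v'_i|^2+|v'_j|^2))\,d\omega$ equals $\pi|v_i-v_j|\exp(\tfrac14\beta_0(|v_i|^2+|v_j|^2))$: the exponential weight passes through each collision unchanged. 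Propagating this from the root of the tree to its leaves, and using that the margin $\tfrac14<1$ keeps the accumulated weight strictly below the budget $\beta_0\sum_j|v_j^{in}|^2$ supplied by the initial data, one integrates out the velocities (one relative velocity per edge, one velocity per vertex) against the surviving Gaussian, each integration contributing a power of $\beta_0^{-1}$; collecting these yields $\beta_0^{-2n}$ and hence the claimed per-cumulant bound.

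\textbf{Main obstacle.} The genuinely difficult point is to make this collision-tree expansion rigorous \emph{uniformly in} $\eps$, and it is here that the short-time restriction $t+\eps<T_\alpha[C_0,\beta_0]$ becomes unavoidable. One must show that the surviving connected configurations really do reduce to $n-1$ bona fide collisions, by controlling \emph{recollisions}, which otherwise both spoil the clean $\eps^2\leftrightarrow\mu_\eps$ compensation and manufacture spurious connectivity; and one must propagate the velocity weights through trees of unbounded depth without letting the kinetic energy concentrate on a single particle — the very mechanism that forces short times, since only then does a typical particle undergo $O(1)$ collisions. This combined geometric, combinatorial, and weighted control is precisely the technical core of \parencite{BoGalSRSim1}.
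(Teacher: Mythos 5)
The paper does not prove this proposition itself — it defers to chapter 8 of \parencite{BoGalSRSim1} — but the strategy it sketches in its final section (\og Esquisse de preuve pour le Théorème \ref{T-HJ}\fg) is exactly the one you describe: path-space cumulant generating identity, reduction to a per-cumulant bound summed as a geometric series, the $\mu_\eps^{k-1}$ scaling compensated by $(\eps^2\, t\,(|v_1|+\ldots+|v_k|))^{k-1}$ from the collision cross-sections and time-ordered integrals, the Cayley count $k^{k-2}$ of labelled trees times $2^{k-1}$ signs, and the Gaussian decay of the initial data to absorb the relative velocities. So your proposal matches the paper's account of the argument, and you are right to flag that the uniform-in-$\eps$ cluster expansion and the control of recollisions constitute the actual technical core, which neither you nor the exposé carries out.

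One point where your picture is slightly too clean: connectivity of the $n$ tagged particles in the dynamical cumulant is not realised only by genuine collisions. As the paper's discussion of $f^\eps_2$ makes explicit, the inclusion–exclusion in \eqref{DefCumul} leaves three kinds of linking contributions — external recollisions, \og recouvrements\fg\ (overlaps without velocity change), and the static exclusion constraint $\mathbf 1_{\mathrm{dist}(x_i,x_j)\le\eps}$ inherited from $\mathbb F^{in}_n$ — and each of these, not just the collision edges of a spanning tree, must be shown to carry an $\eps^2$ (hence $\mu_\eps^{-1}$) smallness in total variation. Treating recollisions purely as an obstacle to be excluded, rather than as one of the connecting mechanisms to be estimated, would leave the cancellation of disconnected terms incomplete; otherwise your outline is faithful to the proof the paper points to.
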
 

Cette borne supérieure est démontrée dans le chapitre 8 de \parencite{BoGalSRSim1} (Théorème 10, voir aussi le Théorème 4 du chapitre 4 de cette même référence). Elle implique une borne sur $F^\eps_2$ 
utilisée dans la preuve du Corollaire \ref{C-Lanford}.

\smallskip
À partir de là, on va modifier légèrement la définition de la fonctionnelle $\mathcal K^\eps$ de façon à y inclure le terme de transport libre $\left\langle\frac{\partial\mathcal K}{\partial h}(t,h),v\cdot\nabla_xh\right\rangle$. 
Pour cela, on considère des fonctions test $h$ de la forme
$$
h(z([0,t])):=g(t,z(t))-\int_0^t(\partial_t+v\cdot\nabla_x)g(s,z(s))ds\,.
$$
Soit
$$
\begin{aligned}
\mathbf B_\alpha:=\{g\in C^1([0,T_\alpha]\times\mathbf T^3\times\mathbf R^3;\mathbf C)\text{ t.q. }|g(t,z)|\le (1-\tfrac{t}{2T_\alpha})(\alpha+\tfrac18\beta_0|v|^2)
\\
\text{ et }\sup_{0\le t\le T_\alpha}|(\partial_t+v\cdot\nabla_x)g(t,z)|\le\tfrac1{2T_\alpha}(\alpha+\tfrac18\beta_0|v|^2)\}&\,.
\end{aligned}
$$
Évidemment, si $g\in\mathbf B_\alpha$, on a
$$
|h(z([0,t]))|\le|g(t,z(t))|+\int_0^t|(\partial_t+v\cdot\nabla_x)g(s,z(s))|ds\le(\alpha+\tfrac18\beta_0|v|^2)\,,
$$
de sorte que la borne uniforme de la proposition ci-dessus s'applique.

\begin{prop}\label{P-DefbK}
On suppose que la fonction de distribution initiale $f^{in}$ vérifie la borne \eqref{BndCondIni}. Soit $\alpha>0$; pour tout $t\in[0,T_\alpha[$ (où $T_\alpha$ est comme dans la Proposition \ref{P-BorneCum}) et tout $g\in\mathbf B_\alpha$, on pose
$$
\mathbf K(t,g):=\lim_{\eps\to 0^+}\frac1{\mu_\eps}\ln\mathbb E_\eps\left(\exp\left(\sum_{j=1}^Nh(\mathbf z_j([0,t])\right)\right)\,,
$$
où 
$$
h(z([0,t]))=g(t,z(t))-\int_0^t(\partial_t+v\cdot\nabla_x)g(s,z(s))ds\,.
$$
Pour tous $t\in[0,T_\alpha[$ et $g\in\mathbf B_\alpha$, la dérivée fonctionnelle $\partial\mathbf K(t,g)/\partial g(t)$ s'identifie à une fonction continue sur $\mathbf T^3$ à valeurs dans l'espace des mesures de Radon 
sur $\mathbf R^3$, et pour tout $t\in[0,T_\alpha[$, il existe $C(t)>0$ tel que
$$
\sup_{0\le s\le t}\sup_{x\in\mathbf T^3}\left\|(1+|v|)\exp(\tfrac18\beta_0|v|^2)\frac{\partial\mathbf K(s,g)}{\partial g(s)}(x,\cdot)\right\|_{VT}\le C(t)\,.
$$
\end{prop}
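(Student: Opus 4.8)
The plan is to read off both assertions from the uniform estimate of Proposition~\ref{P-BorneCum} and from the formula for the first functional derivative computed just before \eqref{HJeps}. The starting observation is that, for the trajectory functional
$$h(z([0,t]))=g(t,z(t))-\int_0^t(\partial_t+v\cdot\nabla_x)g(s,z(s))\,ds,$$
varying the endpoint slice $g(t,\cdot)$ by $\eta\phi$ changes $h$ only through its first term, $\delta h(z([0,t]))=\phi(z(t))$, since the upper endpoint contributes nothing to the integral. Consequently $\partial\mathbf K(t,g)/\partial g(t)$ is exactly the $\eps\to0^+$ limit of the \emph{tilted} first correlation, i.e. for $\phi\in C_b(\mathbf T^3\times\mathbf R^3)$,
$$\left\langle\frac{\partial\mathbf K(t,g)}{\partial g(t)},\phi\right\rangle=\lim_{\eps\to0^+}\frac{\mathbb E_\eps\!\left(\langle\rho^\eps_t,\phi\rangle\,\exp\!\left(\sum_{j=1}^Nh(\mathbf z_j([0,t]))\right)\right)}{\mathbb E_\eps\!\left(\exp\!\left(\sum_{j=1}^Nh(\mathbf z_j([0,t]))\right)\right)}.$$
I denote by $G^\eps_1(t,\cdot)$ the density of the numerator, so that the claim amounts to a weighted total-variation bound on $G^\eps_1$, uniform in $\eps$, which then survives in the limit.

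To see that the defining limit exists, note first that for $g\in\mathbf B_\alpha$ one has $|h(z([0,t]))|\le\alpha+\tfrac18\beta_0\sup_{0\le s\le t}|v(s)|^2$, as recorded before the statement, so Proposition~\ref{P-BorneCum} applies (with $\tfrac18$ in place of $\tfrac14$) and bounds $\mu_\eps^{-1}\ln\mathbb E_\eps(\exp(\sum_jh(\mathbf z_j([0,t]))))$ uniformly in $\eps$ for $t<T_\alpha$. Expanding this quantity by the trajectory analogue of Lemma~\ref{L-FnGenCum} as a series $\sum_{k\ge1}\tfrac1{k!}\langle f^\eps_k,(\cdots)^{\otimes k}\rangle$ in the rescaled trajectory cumulants, the uniform bound provides an $\eps$-independent summable majorant, so it suffices to pass to the limit term by term. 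Convergence of each rescaled cumulant in the Boltzmann--Grad limit (the content of \parencite{BoGalSRSim1}) then yields the existence of $\mathbf K(t,g)$.

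For the weighted bound I would re-run the estimate behind Proposition~\ref{P-BorneCum} on the functional tilted by $e^{\sum_jh}$ with one extra particle marked by $\phi$. The crucial point is the margin built into $\mathbf B_\alpha$: since the tilt costs at most $e^{\alpha+\frac18\beta_0\sup_s|v(s)|^2}$ per particle, it consumes only $\tfrac18\beta_0$ of the $\tfrac14\beta_0$ of exponential speed-growth that the collision estimates can absorb, leaving a budget $\tfrac18\beta_0$ that can be charged to the marked particle as the surviving weight $e^{\frac18\beta_0|v|^2}$, with one spare power of $(1+|v|)$. This gives
$$\sup_{0\le s\le t}\sup_{x\in\mathbf T^3}\left\|(1+|v|)e^{\frac18\beta_0|v|^2}G^\eps_1(s,x,\cdot)\right\|_{VT}\le C(t)$$
uniformly in $\eps$. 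Passing to the limit $\eps\to0^+$ identifies $\partial\mathbf K(s,g)/\partial g(s)$ with a Radon measure in $v$ satisfying the same bound, and its continuity in $(s,x)$ follows from the continuity of the limiting correlation together with the uniform-in-$\eps$ equicontinuity furnished by the regularity of $g$ and of $f^{in}$.

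The main obstacle is precisely this last transfer of estimates: upgrading the a priori distributional object $\partial\mathbf K(t,g)/\partial g(t)$ to a genuine $x$-continuous, measure-valued function carrying the Gaussian-weighted total-variation bound, uniformly in $\eps$ and in the limit. This is not formal, because the $\tfrac18\beta_0$ weight must be propagated through the entire Boltzmann--Grad analysis of the tilted correlations — not merely at the level of the generating functional — and the bookkeeping ensuring that a full factor $(1+|v|)e^{\frac18\beta_0|v|^2}$ survives is delicate. It is here that the deliberate slack in the definition of $\mathbf B_\alpha$ (coefficients $\tfrac18$ rather than $\tfrac14$) is consumed.
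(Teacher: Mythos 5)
Le texte de l'exposé ne démontre pas la Proposition~\ref{P-DefbK} : elle est énoncée telle quelle et renvoyée, comme la Proposition~\ref{P-BorneCum} qui la précède, aux chapitres 8 et 9 de \parencite{BoGalSRSim1}. Il n'y a donc pas de preuve interne à laquelle comparer votre tentative. Cela étant, votre schéma est cohérent avec le cadre mis en place : l'identification de $\partial\mathbf K(t,g)/\partial g(t)$ avec la limite de la première corrélation \og inclinée \fg\ est bien le pendant du calcul de $\partial\mathcal K^\eps/\partial h$ fait dans la section sur l'approche formelle, et elle respecte la convention du Lemme~\ref{L-ChgDerT} selon laquelle $g(t,\cdot)$ et $(\partial_t+v\cdot\nabla_x)g$ sont des variables indépendantes (la variation de $g(t,\cdot)$ n'affecte que le terme de bord de $h$). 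Vous repérez aussi correctement la marge $\tfrac18\beta_0$ contre $\tfrac14\beta_0$ dans la définition de $\mathbf B_\alpha$, qui est précisément ce qui permet de faire survivre le poids $(1+|v|)e^{\beta_0|v|^2/8}$ dans la borne en variation totale.

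Deux réserves. D'abord, une inversion logique : vous déduisez un majorant sommable de la série des cumulants à partir de la borne de la Proposition~\ref{P-BorneCum} sur la fonctionnelle génératrice, alors que c'est l'inverse qui est vrai --- cette borne est obtenue en sommant des estimations terme à terme sur les cumulants de trajectoires (chapitre 8 de \parencite{BoGalSRSim1}), et ce sont ces estimations individuelles, uniformes en $\eps$, dont vous avez besoin pour passer à la limite terme à terme; une borne sur la somme ne contrôle pas chaque terme. Ensuite, et vous le reconnaissez vous-même, les deux étapes porteuses --- la convergence de chaque cumulant de trajectoires dans la limite de Boltzmann--Grad, et la propagation du poids gaussien à travers le développement en amas incliné pour obtenir la borne pondérée uniforme en $\eps$ ainsi que la continuité en $x$ --- sont exactement ce qui est renvoyé à la référence. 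Votre texte est donc un guide de lecture plausible plutôt qu'une démonstration : il localise correctement les difficultés sans les résoudre, ce qui est au fond le même statut que celui de l'énoncé dans l'exposé.
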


\smallskip
Les énoncés ci-dessus nous permettent alors de préciser l'équation de Hamilton--Jacobi obtenue à la section précédente par un calcul formel.

\begin{theo}\label{T-HJ}
On suppose que la fonction de distribution initiale $f^{in}$ vérifie la borne \eqref{BndCondIni}. Soit $\alpha>0$; la fonction $\mathbf K$ définie sur $[0,T_\alpha[\times\mathbf B_\alpha$ est une solution de la forme 
intégrée en $t$ de l'équation de Hamilton--Jacobi
\begin{equation}\label{HJ2}
\left\{
\begin{aligned}
{}&\partial_t\mathbf K(t,g)=\mathcal H\left(\frac{\partial\mathbf K(t,g)}{\partial g(t)},g(t)\right)\,,\qquad (t,g)\in[0,T_\alpha[\times\mathbf B_\alpha\,,
\\
&\mathbf K(0,g)=\int_{\mathbf T^3\times\mathbf R^3}(e^{g(0,z)}-1)f^{in}(z)dz\,,
\end{aligned}
\right.
\end{equation}
où le hamiltonien $\mathcal H$ est défini par la formule suivante:
\begin{equation}\label{DefHamilt}
\begin{aligned}
\mathcal H(p,q):=\tfrac12\int_{\mathbf T^3\times(\mathbf R^3)^2\times\mathbf S^2}\left(e^{\mathbf D q(x,v,x_*v_*,\omega)}-1\right)((v-v_*)\cdot\omega)_+ d\omega\,p(x,dv)p(x,dv_*)dx
\end{aligned}
\end{equation}
pour toute fonction $q\in C(\mathbf T^3\times\mathbf R^3)$ vérifiant
$$
\sup_{x\in\mathbf T^3}|q(x,v)|\le c+\tfrac18\beta_0|v|^2\,,\qquad v\in\mathbf R^3\,,
$$
et tout $p\in C(\mathbf T^3;\mathcal M(\mathbf R^3))$ vérifiant
$$
\sup_{x\in\mathbf T^3}\left\|(1+|v|)\exp(\tfrac18\beta_0|v|^2)\frac{\partial\mathbf K(s,g)}{\partial g(s)}(x,\cdot)\right\|_{VT}<+\infty\,.
$$
On rappelle enfin que la notation $\mathbf D q$ est définie dans \eqref{DefbD}.
\end{theo}

\smallskip
Ce théorème est l'un des résultats majeurs de \textcite{BoGalSRSim1}. Nous verrons plus loin qu'il a des applications importantes, d'une part à l'étude des fluctuations autour de l'équation de Boltzmann, et d'autre part 
à celle des grandes déviations dans la limite de Boltzmann--Grad. On pourrait s'interroger sur l'intérêt de remplacer la dynamique à $N$ corps \eqref{Newton}-\eqref{Collxk}-\eqref{Collvkvj}, posée sur un l'espace des 
phases $(\mathbf T^3\times\mathbf R^3)^N$ de dimension $6N$, grande mais finie, par l'équation \eqref{HJ2}, posée sur un espace de dimension infinie. Comme elle est obtenue après passage à la limite $\eps\to 0^+$, 
on peut toutefois espérer que la dynamique sous-jacente à \eqref{HJ2} soit plus simple que la dynamique à $N$ corps.

\subsection{De l'équation de Hamilton--Jacobi à l'équation de Boltzmann}


Une première observation est que l'équation de Boltzmann peut se déduire de l'équation de Hamilton--Jacobi \eqref{HJ2}.

\begin{prop}\label{P-HJBoltz}
On suppose que la fonction de distribution initiale $f^{in}$ vérifie la borne \eqref{BndCondIni}. Soit
$$
f(t,\cdot):=\frac{\partial\mathbf K(t,g)}{\partial g(t)}\Big|_{g=0}\in C(\mathbf T^3;\mathcal M(\mathbf R^3))\,.
$$
Alors $f$ est solution faible de l'équation de Boltzmann sous forme intégrée en temps avec donnée initiale $f^{in}$.
\end{prop}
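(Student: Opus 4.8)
The plan is to differentiate the time-integrated form of the Hamilton--Jacobi equation \eqref{HJ2} once with respect to $g$ at $g=0$, and to recognize the resulting identity as the time-integrated weak formulation of the Boltzmann equation. Writing the equation of Theorem \ref{T-HJ} in integrated form,
\[
\mathbf K(t,g)=\mathbf K(0,g)+\int_0^t\mathcal H\left(\frac{\partial\mathbf K(s,g)}{\partial g(s)},g(s)\right)ds\,,
\]
I would extract the term linear in $g$ on each side, evaluating all functional derivatives at $g=0$.

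For the left-hand side, I would start from the defining formula for $\mathbf K(t,g)$ in Proposition \ref{P-DefbK}. Expanding $\exp(\sum_j h(\mathbf z_j([0,t])))$ to first order in $h$ and using that $\tfrac1{\mu_\eps}\mathbb E_\eps(\sum_j\Psi(z_j(t)))=\langle F^\eps_1(t,\cdot),\Psi\rangle$ (which follows from \eqref{DefCorrel} with $k=1$), the linear term of $\mathbf K(t,g)$ is the $\eps\to0^+$ limit of $\langle F^\eps_1(t,\cdot),g(t,\cdot)\rangle-\int_0^t\langle F^\eps_1(s,\cdot),(\partial_s+v\cdot\nabla_x)g(s,\cdot)\rangle ds$, namely $\langle f(t,\cdot),g(t,\cdot)\rangle-\int_0^t\langle f(s,\cdot),(\partial_s+v\cdot\nabla_x)g(s,\cdot)\rangle ds$; in particular this confirms that the endpoint slice derivative equals $f(t,\cdot)=\partial\mathbf K(t,g)/\partial g(t)|_{g=0}$, which is well defined by Proposition \ref{P-DefbK}. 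The initial condition in \eqref{HJ2} contributes, to first order, $\langle f^{in},g(0,\cdot)\rangle$.

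The decisive computation concerns the linearization of the Hamiltonian. Since $\mathbf Dq\equiv0$ when $q=0$, we have $\mathcal H(p,0)=0$ for every $p$; hence, upon varying $g$ around $0$, the contribution from the variation of the first slot $\partial\mathbf K(s,g)/\partial g(s)$ vanishes, and only the variation of the second slot $q=g(s)$ survives. Differentiating \eqref{DefHamilt} and using $\tfrac{d}{d\lambda}(e^{\lambda\mathbf D\phi}-1)|_{\lambda=0}=\mathbf D\phi$ together with $x_*=x$, with $\phi=g(s,\cdot)$ I would obtain
\[
\tfrac12\int_{\mathbf T^3}\!\int_{(\mathbf R^3)^2\times\mathbf S^2}\!\big(\phi(x,v')+\phi(x,v'_*)-\phi(x,v)-\phi(x,v_*)\big)((v-v_*)\cdot\omega)_+\,d\omega\,f(s,x,dv)f(s,x,dv_*)dx\,.
\]
By the symmetrized (third) identity of the weak formulation of the collision integral established in \S\ref{SS-BoltzMath} (read with $f(s,x,\cdot)$ in place of $\phi$ and $g(s,x,\cdot)$ in place of $\psi$), this equals $\langle\mathcal B(f(s,\cdot)),g(s,\cdot)\rangle$, the Boltzmann operator \eqref{CollInt} acting locally in $x$.

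Equating the linear terms then yields exactly $\langle f(t,\cdot),g(t,\cdot)\rangle-\int_0^t\langle f(s,\cdot),(\partial_s+v\cdot\nabla_x)g(s,\cdot)\rangle ds=\langle f^{in},g(0,\cdot)\rangle+\int_0^t\langle\mathcal B(f(s,\cdot)),g(s,\cdot)\rangle ds$, which is precisely the time-integrated weak formulation of $(\partial_t+v\cdot\nabla_x)f=\mathcal B(f)$ with datum $f^{in}$, the transport term integrating by parts on $\mathbf T^3$ without boundary contribution. The main obstacle is not this formal algebra but its rigorous justification: one must show that $\mathbf K(\cdot,g)$ is genuinely G\^ateaux-differentiable in $g$, that the $g$-differentiation commutes with the time integral, and that the $\eps\to0^+$ limits may be taken inside the nonlinear functional $\mathcal H$. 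This is exactly where the quantitative bounds of Propositions \ref{P-BorneCum} and \ref{P-DefbK} enter: the weighted estimate on $(1+|v|)\exp(\tfrac18\beta_0|v|^2)\,\partial\mathbf K/\partial g$ guarantees absolute convergence of the velocity integral defining $\mathcal H(f,\cdot)$ and controls, uniformly in $\eps$, the remainder of the first-order expansion.
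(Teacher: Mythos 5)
Votre démonstration est correcte et repose sur les deux mêmes ingrédients essentiels que celle du texte : linéariser la forme intégrée en temps de \eqref{HJ2} au point $g=0$, puis exploiter le fait que $\frac{\partial\mathcal H}{\partial p}(p,0)=0$ pour éliminer le terme faisant intervenir la dérivée seconde de $\mathbf K$, de sorte que seul survit $\frac{\partial\mathcal H}{\partial q}(f(s,\cdot),0)$, que la troisième formulation faible de l'intégrale des collisions identifie à $\mathcal B(f(s,\cdot))$. La différence réelle tient au choix de la dérivée fonctionnelle. Le texte ne dérive que par rapport à la tranche terminale $g(t,\cdot)$, à $(\partial_t+v\cdot\nabla_x)g$ constant ; la règle de dérivation des fonctions composées du Lemme \ref{L-ChgDerT} fait alors apparaître le groupe d'advection $A_{s-t}$ appliqué à la fonction test, et l'identité \eqref{EvolDbK} spécialisée en $g=0$ donne la forme de Duhamel (mild) le long des caractéristiques du transport libre. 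Vous perturbez au contraire le chemin $g$ tout entier et lisez le terme linéaire de $\mathbf K(t,g)$ directement sur sa définition probabiliste, ce qui conduit à la formulation faible espace-temps où l'opérateur $\partial_s+v\cdot\nabla_x$ porte sur la fonction test ; les deux identités sont équivalentes. Votre voie a le mérite de rendre visible que les deux variables conjuguées $g(t,\cdot)$ et $(\partial_t+v\cdot\nabla_x)g$ portent toutes deux la densité $f$ (avec des signes opposés), conformément à ce qu'affirme la proposition sur $\hat{\mathbf K}$ ; elle exige en contrepartie de justifier, uniformément en $\eps$, le développement au premier ordre de $\frac1{\mu_\eps}\ln\mathbb E_\eps(\exp(\sum_jh(\mathbf z_j([0,t]))))$ et la convergence $F^\eps_1(s,\cdot)\to f(s,\cdot)$ pour tous les temps intermédiaires $s$, point que la voie du texte contourne en ne calculant jamais directement la dérivée par rapport au terme source. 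L'apport analytique nécessaire (Propositions \ref{P-BorneCum} et \ref{P-DefbK}) est le même dans les deux cas, et vous le signalez correctement ; seule votre remarque finale sur une intégration par parties en $x$ est superflue, la formulation obtenue étant déjà faible.
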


Commençons par un calcul formel à partir de l'équation \eqref{HJ1}.

\begin{proof}[Argument formel]
Revenons à la fonctionnelle $\mathcal K$. Dérivons chaque membre de \eqref{HJ1} en $h$ dans la direction $\phi$:
$$
\begin{aligned}
\partial_t\left\langle\frac{\partial\mathcal K}{\partial h}(t,h),\phi\right\rangle=&\frac{\partial^2\mathcal K}{\partial h^2}(t,h)\cdot(\phi,v\cdot\nabla_xh)+\left\langle\frac{\partial\mathcal K}{\partial h}(t,h),v\cdot\nabla_x\phi\right\rangle
\\
&+\tfrac12\frac{\partial^2\mathcal K}{\partial h^2}(t,h)\otimes\frac{\partial\mathcal K}{\partial h}(t,h)\cdot(\phi,J_0[h])
\\
&+\tfrac12\frac{\partial\mathcal K}{\partial h}(t,h)\otimes\frac{\partial^2\mathcal K}{\partial h^2}(t,h)\cdot(J_0[h],\phi)
\\
&+\tfrac12\frac{\partial\mathcal K}{\partial h}(t,h)\otimes\frac{\partial\mathcal K}{\partial h}(t,h)\cdot\left(\frac{dJ_0}{dh}[h]\cdot\phi\right)\,,
\end{aligned}
$$
puis faisons $h=0$ dans cette équation. Comme $J_0[0]=0$, il vient
$$
\begin{aligned}
\partial_t\left\langle\frac{\partial\mathcal K}{\partial h}(t,0),\phi\right\rangle=&\left\langle\frac{\partial\mathcal K}{\partial h}(t,0),v\cdot\nabla_x\phi\right\rangle
\\
&+\tfrac12\frac{\partial\mathcal K}{\partial h}(t,0)\otimes\frac{\partial\mathcal K}{\partial h}(t,0)\cdot\left(\frac{dJ_0}{dh}[0]\cdot\phi\right)\,.
\end{aligned}
$$
Un calcul simple montre que 
$$
\frac{dJ_0}{dh}[0]\cdot\phi=\left(\int_{\mathbf S^2}\mathbf D\phi(z,z_*,\omega)((v-v_*)\cdot\omega)_+d\omega\right)\delta_0(x-x_*)\,.
$$
Posons $m(t,\cdot):=\frac{\partial\mathcal K}{\partial h}(t,0)$, que nous allons manipuler comme une fonction continue sur $\mathbf T^3$ à valeurs dans les mesures de Radon sur $\mathbf R^3$. Alors
$$
\begin{aligned}
\left\langle m(t)\otimes m(t),\frac{dJ_0}{dh}[0]\cdot\phi\right\rangle
\\
=\int_{\mathbf T^3\times(\mathbf R^3)^2\times\mathbf S^2}\mathbf D\phi(x,v,x,v_*,\omega)((v-v_*)\cdot\omega)_+d\omega m(t,x,dv)m(t,x,dv_*)dx&\,.
\end{aligned}
$$
On est arrivé ainsi à l'identité satisfaite par $m(t,\cdot)$ pour toute fonction test $\phi$ dans un espace bien choisi (par exemple $C^\infty_c(\mathbf T^3\times\mathbf R^3)$):
$$
\begin{aligned}
\langle\partial_tm(t),\phi\rangle=\langle m(t),v\cdot\nabla_x\phi\rangle
\\
+\tfrac12\int_{\mathbf T^3\times(\mathbf R^3)^2\times\mathbf S^2}\mathbf D\phi(x,v,x,v_*,\omega)((v-v_*)\cdot\omega)_+d\omega m(t,x,dv)m(t,x,dv_*)dx&\,,
\end{aligned}
$$
qui n'est rien d'autre que la formulation faible de l'équation de Boltzmann \eqref{BoltzEq}.
\end{proof}

Passons à l'argument rigoureux basé sur le Théorème \ref{T-HJ}. Commençons par une observation élémentaire, basée sur l'idée suivante: dans la fonctionnelle $\mathbf K(t,g)$, on doit considérer $g(t,\cdot)$ 
et $(\partial_t+v\cdot\nabla_x)g(\tau,\cdot)\big|_{\tau\in[0,t]}$ comme deux {\og  variables\fg}   indépendantes --- on reviendra d'ailleurs plus loin sur ce point précis.

\begin{lemm}\label{L-ChgDerT}
Soient $\alpha>0$ et $\sigma,\tau\in[0,T_\alpha]$. L'application $\mathbf B_\alpha\ni g\mapsto g(s,\cdot)\in\mathbf C$ est de classe $C^1$, et sa dérivée partielle par rapport à $g(\tau,\cdot)$ 
{\og  à $(\partial_t+v\cdot\nabla_x)g$ constant\fg}   est donnée par la formule\footnote{On s'est résolu faute de mieux à employer ici la notation regrettable en usage dans les traités classiques de thermodynamique,
à savoir $(\partial f/\partial x)_y$ pour désigner la dérivée partielle de $f$ par rapport à la variable $x$, la variable $y$ restant constante dans la prise de dérivée.}
$$
\left(\frac{\partial g(\sigma,\cdot)}{\partial g(\tau,\cdot)}\right)_{(\partial_t+v\cdot\nabla_x)g}=A_{\sigma-\tau}\,,
$$
où $A_\theta:=e^{-\theta v\cdot\nabla_x}$ est le groupe à un paramètre engendré par l'opérateur d'advection, donné par la formule $(A_\theta\psi)(x,v):=\psi(x-\theta v,v)$.
\end{lemm}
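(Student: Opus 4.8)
Le plan consiste à établir une représentation de type Duhamel de $g(\sigma,\cdot)$ en fonction de la valeur de référence $g(\tau,\cdot)$ et de la dérivée de transport libre $(\partial_t+v\cdot\nabla_x)g$, puis à en déduire directement la dérivée partielle annoncée. Je commencerais par intégrer $g$ le long des caractéristiques de l'opérateur de transport libre. En fixant $(x,v)\in\mathbf T^3\times\mathbf R^3$ et en considérant la courbe $s\mapsto x-(\sigma-s)v$, dont la vitesse spatiale vaut exactement $v$, la règle de dérivation des fonctions composées donne
$$
\frac{d}{ds}\,g\big(s,x-(\sigma-s)v,v\big)=(\partial_t+v\cdot\nabla_x)g\big(s,x-(\sigma-s)v,v\big)\,.
$$
En intégrant de $\tau$ à $\sigma$ et en rappelant que $(A_\theta\psi)(x,v)=\psi(x-\theta v,v)$, de sorte que $g(\tau,x-(\sigma-\tau)v,v)=\big(A_{\sigma-\tau}g(\tau,\cdot)\big)(x,v)$, on aboutit à la formule de Duhamel
$$
g(\sigma,\cdot)=A_{\sigma-\tau}\,g(\tau,\cdot)+\int_\tau^\sigma A_{\sigma-s}\big((\partial_t+v\cdot\nabla_x)g\big)(s,\cdot)\,ds\,.
$$

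J'observerais ensuite que cette identité est précisément le changement de variables sous-jacent à l'énoncé: le couple $\big(g(\tau,\cdot),(\partial_t+v\cdot\nabla_x)g\big)$ détermine $g$ de manière unique, et réciproquement, de sorte qu'il constitue un système légitime de {\og variables\fg} indépendantes sur $\mathbf B_\alpha$. Dans ces variables, la formule est \emph{affine}: le terme intégral ne dépend que de $(\partial_t+v\cdot\nabla_x)g$, tandis que la dépendance en $g(\tau,\cdot)$ est linéaire, via l'opérateur borné $A_{\sigma-\tau}$. Le caractère $C^1$ (et même indéfiniment différentiable) de l'application $g\mapsto g(\sigma,\cdot)$ en résulte immédiatement, une application affine entre espaces de Banach étant automatiquement différentiable; on utilise ici que chaque $A_\theta$ agit comme une isométrie de l'espace à poids considéré, puisque $A_\theta$ ne fait que translater la variable $x$ en laissant $v$ inchangée, et commute donc avec le poids gaussien $e^{\tfrac18\beta_0|v|^2}$.

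Enfin, la dérivation à $(\partial_t+v\cdot\nabla_x)g$ constant devient transparente: le terme intégral est maintenu fixe par cette prescription, et la dépendance linéaire restante en $g(\tau,\cdot)$ fournit
$$
\left(\frac{\partial g(\sigma,\cdot)}{\partial g(\tau,\cdot)}\right)_{(\partial_t+v\cdot\nabla_x)g}=A_{\sigma-\tau}\,,
$$
comme annoncé. La difficulté principale n'est pas le calcul mais la mise en place rigoureuse du cadre: il faut d'abord justifier que $\big(g(\tau,\cdot),(\partial_t+v\cdot\nabla_x)g\big)$ forme bien un couple de variables indépendantes paramétrant $\mathbf B_\alpha$, afin que la dérivée partielle {\og à $(\partial_t+v\cdot\nabla_x)g$ constant\fg} ait un sens. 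Une fois la formule de Duhamel ci-dessus établie, ce point devient clair, et l'identification de la dérivée à $A_{\sigma-\tau}$ est immédiate.
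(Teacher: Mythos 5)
Votre démonstration est correcte et suit essentiellement la même démarche que celle du texte : on intègre $g$ le long des caractéristiques du transport libre pour obtenir la représentation $g(\sigma,\cdot)=A_{\sigma-\tau}g(\tau,\cdot)+\int_\tau^\sigma A_{\sigma-\theta}(\partial_t+v\cdot\nabla_x)g(\theta,\cdot)\,d\theta$, d'où la formule annoncée découle aussitôt par linéarité en $g(\tau,\cdot)$ à $(\partial_t+v\cdot\nabla_x)g$ fixé. Vos remarques supplémentaires sur le caractère affine du changement de variables et sur l'action isométrique de $A_\theta$ sont des précisions bienvenues mais ne modifient pas l'argument.
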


\begin{proof}
Exprimons $g(\sigma,\cdot)$ en fonction de $g(\tau,\cdot)$ et de $(\partial_t+v\cdot\nabla_x)g(\tau,\cdot)\big|_{t\in[0,T_\alpha]}$ en observant que
$$
g(\sigma,x,v)=g(t,x+(\tau-\sigma)v,v)-\int_\sigma^\tau(\partial_t+v\cdot\nabla_x)g(\theta,x+(\theta-\sigma)v,v)d\theta\,,
$$
c'est-à-dire que
$$
g(\sigma,\cdot)=A_{\sigma-\tau}g(\tau,\cdot)-\int_\sigma^\tau A_{\sigma-\theta}(\partial_t+v\cdot\nabla_x)g(\theta,\cdot)d\theta\,.
$$
La formule annoncée en découle aussitôt.
\end{proof}

\begin{proof}
On rappelle la condition initiale de \eqref{HJ2}:
$$
\mathbf K(0,g)=\int_{\mathbf T^3\times\mathbf R^3}(e^{g(0,z)}-1)f^{in}(z)dz\,,
$$
d'où l'on tire immédiatement que
$$
\left\langle\frac{\partial\mathbf K(0,g)}{\partial g(0)},\psi\right\rangle=\int_{\mathbf T^3\times\mathbf R^3}e^{g(0,z)}\psi(z)f^{in}(z)dz\,.
$$
Puis, d'après le lemme ci-dessus et la règle de dérivation des fonctions composées
\begin{equation}\label{HJBoltzCondIn}
\left\langle\frac{\partial\mathbf K(0,g)}{\partial g(t)},\psi\right\rangle=\left\langle\frac{\partial\mathbf K(0,g)}{\partial g(0)},A_{-t}\psi\right\rangle=\int_{\mathbf T^3\times\mathbf R^3}e^{g(0,z)}A_{-t}\psi(z)f^{in}(z)dz\,.
\end{equation}
Ensuite, on écrit la forme intégrée en $t$ de \eqref{HJ2}:
$$
\mathbf K(t,g)=\mathbf K(0,g)+\int_0^t\mathcal H\left(\frac{\partial\mathbf K(s,g)}{\partial g(s)},g(s)\right)ds\,,
$$
et on dérive chaque membre de cette égalité par rapport à $g(t,\cdot)$ dans la direction $\psi$, en utilisant de nouveau la règle de dérivation des fonctions composées et  le Lemme \ref{L-ChgDerT} pour trouver 
que
\begin{equation}\label{EvolDbK}
\begin{aligned}
\left\langle\frac{\partial\mathbf K(t,g)}{\partial g(t)},\psi\right\rangle=&\int_{\mathbf T^3\times\mathbf R^3}e^{g(0,z)}A_{-t}\psi(z)f^{in}(z)dz
\\
&+\int_0^t\left\langle\frac{\partial\mathcal H}{\partial p}\left(\frac{\partial\mathbf K(s,g)}{\partial g(s)},g(s)\right),\frac{\partial^2\mathbf K(s,g)}{\partial g(t)\partial g(s)}\cdot(\psi,\cdot)\right\rangle ds
\\
&+\int_0^t\left\langle\frac{\partial\mathcal H}{\partial q}\left(\frac{\partial\mathbf K(s,g)}{\partial g(s)},g(s)\right),A_{s-t}\psi\right\rangle ds\,.
\end{aligned}
\end{equation}
La formule \eqref{DefHamilt} implique que
\begin{equation}\label{dHdp}
\left\langle\frac{\partial\mathcal H}{\partial p}(p,q),\tilde p\right\rangle
=\int_{\mathbf T^3\times(\mathbf R^3)^2\times\mathbf S^2}\left(e^{\mathbf D q(x,v,x_*v_*,\omega)}\!-\!1\right)((v\!-\!v_*)\cdot\omega)_+ d\omega\,p(x,dv)\tilde p(x,dv_*)dx
\end{equation}
d'où l'on tire en particulier que
\begin{equation}\label{Hp(p,0)=0}
\left\langle\frac{\partial\mathcal H}{\partial p}(p,0),\tilde p\right\rangle=0\,.
\end{equation}
D'autre part
\begin{equation}\label{dHdq}
\begin{aligned}
2\left\langle\frac{\partial\mathcal H}{\partial q}(p,q),\psi\right\rangle
\\
=\int_{\mathbf T^3\times(\mathbf R^3)^2\times\mathbf S^2}e^{\mathbf D q(x,v,x_*v_*,\omega)}\mathbf D\psi(x,v,x_*v_*,\omega)((v\!-\!v_*)\cdot\omega)_+ d\omega\,p(x,dv)p(x,dv_*)dx&,
\end{aligned}
\end{equation}
d'où
$$
\left\langle\frac{\partial\mathcal H}{\partial q}(p,0),\psi\right\rangle
=\tfrac12\int_{\mathbf T^3\times(\mathbf R^3)^2\times\mathbf S^2}\mathbf D\psi(x,v,x_*v_*,\omega)((v\!-\!v_*)\cdot\omega)_+ d\omega\,p(x,dv)p(x,dv_*)dx\,.
$$
Spécialisons \eqref{EvolDbK} à $g=0$: il vient
$$
\langle f(t,\cdot),\psi\rangle=\left\langle\frac{\partial\mathbf K(t,g)}{\partial g(t)}\Big|_{g=0},\psi\right\rangle+\int_0^t\left\langle\frac{\partial\mathcal H}{\partial q}\left(f(s,\cdot),0\right),A_{s-t}\psi\right\rangle ds\,,
$$
c'est-à-dire
$$
\begin{aligned}
\langle f(t,\cdot),\psi\rangle=\int_{\mathbf T^3\times\mathbf R^3}\psi(z)A_tf^{in}(z)dz+\tfrac12\int_0^t\int_{\mathbf T^3\times(\mathbf R^3)^2\times\mathbf S^2}\mathbf D\psi(x,v,x_*v_*,\omega)
\\
\times A_{t-s}f(s,x,v)A_{t-s}f(s,x,v_*)((v\!-\!v_*)\cdot\omega)_+ d\omega\,dvdv_*dx&\,.
\end{aligned}
$$
On reconnaît dans cette identité une formulation faible intégrée en temps de l'équation de Boltzmann --- cette écriture utilise la troisième formulation faible de l'intégrale des collisions de Boltzmann rappelée 
dans la section \ref{SS-BoltzMath}, ainsi que la formule de Duhamel pour traiter l'opérateur d'advection dans l'équation de Boltzmann \eqref{BoltzEq}.
\end{proof}

\section{Grandes déviations}


Dans cette section, on va voir comment la fonctionnelle $\mathbf K$ et l'équation de Hamilton--Jacobi qu'elle vérifie interviennent dans l'étude des grandes déviations pour un système de sphères dures dans la limite 
de Boltzmann--Grad.

On va considérer une variante à poids de l'équation de Boltzmann:
\begin{equation}\label{BoltzPoids}
\begin{aligned}
(\partial_t+v\cdot\nabla_x)\phi(t,x,v)
\\
=\!\int_{\mathbf R^3\times\mathbf S^2}\!\!\left(\phi(t,x,v'_*)\phi(t,x,v')e^{-\mathbf Dq(t,x,v,x,v_*,\omega)}\!-\!\phi(t,x,v_*)\phi(t,x,v)e^{\mathbf Dq(t,x,v,x,v_*,\omega)}\right)
\\
\times((v-v_*)\cdot\omega)_+ dv_*d\omega&\,,
\end{aligned}
\end{equation}
avec condition initiale
\begin{equation}\label{CondinBoltzPoids}
\phi(0,x,v)=f^{in}(x,v)e^{q(0,x,v)}\,.
\end{equation}
On supposera que la fonction $q\equiv q(t,x,v)$ est lipschitzienne sur $[0,T]\times\mathbf T^3\times\mathbf R^3$. Pour tous $r,T>0$, on considère
$$
\begin{aligned}
\mathcal R_{r,T}:=\{\phi:\,[0,T]\!\times\!\mathbf T^3\!\times\!\mathbf R^3\to\mathbf R_+\text{ solution forte de \eqref{BoltzPoids}-\eqref{CondinBoltzPoids} pour une fonction }
\\
q\text{ telle que }\|q\|_{W^{1,\infty}([0,T]\times\mathbf T^3\times\mathbf R^3)}\le r&\}\,.
\end{aligned}
$$

Maintenant, on va regarder la mesure empirique $t\mapsto\rho^\eps_t$ comme élément de l'espace $D([0,T],\mathcal M^1_+(\mathbf T^3\times\mathbf R^3))$ des chemins càdlàg à valeurs dans l'espace des mesures 
positives sur $\mathbf T^3\times\mathbf R^3$ de masse finie muni de la topologie faible. L'espace  $D([0,T],\mathcal M^1_+(\mathbf T^3\times\mathbf R^3))$ est muni de la topologie de Skorokhod. Soient $\gamma_1$ 
et $\gamma_2$ deux éléments de $D([0,T],\mathcal M^1_+(\mathbf T^3\times\mathbf R^3))$; posons
$$
d_{[0,T]}(\gamma_1,\gamma_2):=\inf_{\alpha\in\mathcal A_T}\max\left(\sup_{0\le t\le T}|\alpha(t)-t|,\sup_{0\le t\le T}\mathbf d(\gamma_1(t),\gamma_2(\alpha(t)))\right)
$$
où $\mathcal A_T$ est l'ensemble des bijections croissantes de $[0,T]$ dans lui-même, et où
$$
\mathbf d(m_1,m_2)=\sum_{n\ge 1}2^{-n}|\langle m_1-m_2,\chi_n\rangle|\,,
$$
sachant que $(\chi_n)_{n\ge 1}$ est une suite dense dans l'espace $C_0(\mathbf T^3\times\mathbf R^3)$ des fonctions continues tendant vers $0$ à l'infini. La distance $d_{[0,T]}$ définit la topologie de Skorokhod sur
l'espace $D([0,T],\mathcal M^1_+(\mathbf T^3\times\mathbf R^3))$ (voir \parencite{Billing}, chapitre 3, section 12).

Le principaux résultats concernant les grandes déviations de la mesure empirique d'un système de sphères dures dans la limite de Boltzmann--Grad sont résumés dans le théorème suivant, qui regroupe
les Théorèmes 3 et 9 de \textcite{BoGalSRSim1}.

\begin{theo}\label{T-GD}
Supposons que $f^{in}$ vérifie \eqref{BndCondIni}. Pour tout $r>0$, il existe $\alpha\equiv\alpha[r,\beta_0,C_0]>0$ et $T_r\in]0,T_\alpha[$ tels que, dans la limite de Boltzmann--Grad

\noindent
(a) pour tout fermé $\mathcal G\subset D([0,T_r],\mathcal M^1_+(\mathbf T^3\times\mathbf R^3))$ pour la distance $d_{[0,T_r]}$
$$
\varlimsup_{\eps\to 0^+}\frac1{\mu_\eps}\ln\mathbb P_\eps(\rho^\eps\in\mathcal G)\le-\inf_{\phi\in\mathcal G}\mathbf K^*(T_r,\phi)\,,
$$
(b) pour tout ouvert $\mathcal O\subset D([0,T_r],\mathcal M^1_+(\mathbf T^3\times\mathbf R^3))$ pour la distance $d_{[0,T_r]}$
$$
\varliminf_{\eps\to 0^+}\frac1{\mu_\eps}\ln\mathbb P_\eps(\rho^\eps\in\mathcal O)\ge-\inf_{\phi\in\mathcal O\cap\mathcal R_{r,T}}\mathbf K^*(T_r,\phi)\,.
$$
Ici $\mathbf K^*$ est la transformée de Legendre de $\mathbf K$: pour tout $\phi\in D([0,T],\mathcal M^1_+(\mathbf T^3\times\mathbf R^3))$
$$
\mathbf K^*(T,\phi):=\sup_{g\in\mathbf B_\alpha}\left(\langle\phi(T,\cdot),g(T,\cdot)\rangle-\llangle\phi,(\partial_t+v\cdot\nabla_x)g\rrangle-\mathbf K(T,g)\right)\,,
$$
où on a noté
$$
\llangle\phi,q\rrangle:=\int_0^T\langle\phi(t,\cdot),q(t,\cdot)\rangle dt\,.
$$
En particulier, pour tout $\phi\in\mathcal R_{r,T_r}$
$$
\begin{aligned}
\lim_{\eta\to 0^+}\varlimsup_{\eps\to 0^+}\frac1{\mu_\eps}\ln\mathbb P_\eps\left(d_{[0,T_r]}(\rho^\eps,\phi)\le\eta\right)=-\mathbf K^*(T_r,\phi)\,,
\\
\lim_{\eta\to 0^+}\varliminf_{\eps\to 0^+}\frac1{\mu_\eps}\ln\mathbb P_\eps\left(d_{[0,T_r]}(\rho^\eps,\phi)<\eta\right)=-\mathbf K^*(T_r,\phi)\,.
\end{aligned}
$$
\end{theo}

Ce résultat fait l'objet du chapitre 7 de \parencite{BoGalSRSim1}. Sa démonstration met en jeu quelques propriétés importantes sur l'équation de Hamilton--Jacobi de la section précédente, dont nous allons donner 
une idée.

Précisons que ce type de résultat avait été conjecturé auparavant par Rezakhanlou dans \parencite{Reza} et par \textcite{Bouchet} --- avec toutefois une formule apparemment différente de celle du Théorème \ref{T-GD}
pour la fonctionnelle des grandes déviations $\mathbf K^*(T,\phi)$: on y reviendra dans la suite.

\subsection{Méthode des caractéristiques pour l'équation de Hamilton--Jacobi}


Soit $t\in[0,T_\alpha]$. Considérons le système hamiltonien
\begin{equation}\label{HamiltSyst}
\begin{aligned}
{}&(\partial_s+v\cdot\nabla_x)\phi_t(s,\cdot)=\frac{\partial\mathcal H}{\partial q}(\phi_t(s,\cdot),q_t(s,\cdot))\,,\qquad\phi_t(0,\cdot)=f^{in}e^{q(0,\cdot)}\,,
\\
&(\partial_s+v\cdot\nabla_x)(q_t-g)(s,\cdot)=-\frac{\partial\mathcal H}{\partial p}(\phi_t(s,\cdot),q_t(s,\cdot))\,,\quad q_t(t,\cdot)=g(t,\cdot)\,.
\end{aligned}
\end{equation}
Grâce à la formule \eqref{dHdq}, on vérifie sans peine que la première de ces deux équations n'est autre que \eqref{BoltzPoids}-\eqref{CondinBoltzPoids}: ce point sera établi plus loin (voir section \ref{SS-Legendre}). 

\begin{prop}
Soit $\alpha>0$ et $g\in\mathbf B_\alpha$. Soit $(\phi_t,q_t)$ solution du système \eqref{HamiltSyst}. Posons
\begin{equation}\label{DefHatK}
\hat{\mathbf K}(t,g):=\langle f^{in},e^{q_t(0,\cdot)}-1\rangle+\llangle\phi_t,(\partial_s+v\cdot\nabla_x)(q_t-g)\rrangle+\int_0^t\mathcal H(\phi_t(s,\cdot),q_t(s,\cdot))ds\,.
\end{equation}
Alors $\hat{\mathbf K}$ est solution de \eqref{HJ2} sur $[0,t]$, et vérifie en outre
$$
\frac{\partial\hat{\mathbf K}(t,g)}{\partial g(t)}=\phi_t(t)\,,\qquad\frac{\partial\hat{\mathbf K}(t,g)}{\partial(\partial_tg+v\cdot\nabla_xg)}=-\phi_t\,.
$$
\end{prop}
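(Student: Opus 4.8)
The plan is to treat this as the classical method of characteristics for the (infinite-dimensional) Hamilton--Jacobi equation \eqref{HJ2}, in which $\phi_t$ plays the role of the momentum $p$, $q_t$ the role of the position $q$, and $\hat{\mathbf K}(t,g)$ is the associated action functional. The two-point structure of \eqref{HamiltSyst} is exactly the characteristic boundary-value problem: the momentum is pinned at the initial time $s=0$ to $\phi_t(0,\cdot)=f^{in}e^{q_t(0,\cdot)}$, which is precisely the gradient $\partial\mathbf K(0,\cdot)/\partial g(0)$ of the initial datum, while the position is pinned at the terminal time $s=t$ to $g(t,\cdot)$. I will assume, as the statement does, that \eqref{HamiltSyst} is well posed on $[0,T_\alpha[$ for $g\in\mathbf B_\alpha$, with $(\phi_t,q_t)$ depending differentiably on $g$ in the relevant weighted spaces.

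First I would compute the functional derivative. Perturbing $g$ into $g+\lambda\psi$ induces variations $(\delta\phi,\delta q)$ of the characteristics, with $\delta q(t)=\psi(t)$ (terminal condition) and $\delta\phi(0)=f^{in}e^{q_t(0)}\delta q(0)$ (initial condition). Differentiating \eqref{DefHatK}, the term $\llangle\delta\phi,(\partial_s+v\cdot\nabla_x)(q_t-g)\rrangle$ cancels $\int_0^t\langle\partial\mathcal H/\partial p,\delta\phi\rangle\,ds$ by the second equation of \eqref{HamiltSyst}. I then integrate by parts in $\llangle\phi_t,(\partial_s+v\cdot\nabla_x)(\delta q-\psi)\rrangle$ using the first equation $(\partial_s+v\cdot\nabla_x)\phi_t=\partial\mathcal H/\partial q$: the boundary term at $s=t$ vanishes since $\delta q(t)=\psi(t)$, and the boundary term at $s=0$ cancels the variation $\langle f^{in}e^{q_t(0)},\delta q(0)\rangle$ of the first summand precisely because $\phi_t(0)=f^{in}e^{q_t(0)}$. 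A second integration by parts (again via the first equation) turns the surviving $\int_0^t\langle\partial\mathcal H/\partial q,\psi\rangle\,ds$ into boundary terms, leaving
$$\delta\hat{\mathbf K}=\langle\phi_t(t),\psi(t)\rangle-\llangle\phi_t,(\partial_s+v\cdot\nabla_x)\psi\rrangle.$$
Since Lemma \ref{L-ChgDerT} guarantees that $\psi(t)=\delta g(t)$ and $(\partial_s+v\cdot\nabla_x)\psi=\delta\big((\partial_t+v\cdot\nabla_x)g\big)$ may be prescribed independently, reading off the two coefficients gives $\partial\hat{\mathbf K}(t,g)/\partial g(t)=\phi_t(t)$ and $\partial\hat{\mathbf K}(t,g)/\partial(\partial_tg+v\cdot\nabla_xg)=-\phi_t$.

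Next I would verify \eqref{HJ2}. The initial condition is immediate: at $t=0$ the two integrals in \eqref{DefHatK} vanish and $q_0(0)=g(0)$, so $\hat{\mathbf K}(0,g)=\langle f^{in},e^{g(0)}-1\rangle$. For the equation itself I differentiate $\hat{\mathbf K}(t,g)$ in $t$ at fixed $g$. The upper-limit contributions at $s=t$ give, via the second Hamiltonian equation together with the Euler identity $\langle p,\partial\mathcal H/\partial p\rangle=2\mathcal H$ (valid because $\mathcal H$ is homogeneous of degree two in $p$, as \eqref{dHdp} shows), the combination $-2\mathcal H+\mathcal H=-\mathcal H$; the interior $t$-variations of $(\phi_t,q_t)$ cancel by the very same mechanism as in the computation of $\delta\hat{\mathbf K}$ (now with $\psi=0$), leaving a single boundary term $\langle\phi_t(t),\dot q(t)\rangle$ with $\dot q(t):=\partial_tq_t(s)\big|_{s=t}$. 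Differentiating the terminal constraint $q_t(t)=g(t)$ totally in $t$ and using the second Hamiltonian equation at $s=t$ identifies $\dot q(t)=\partial\mathcal H/\partial p(\phi_t(t),q_t(t))$, whence $\langle\phi_t(t),\dot q(t)\rangle=2\mathcal H$ by Euler once more. Adding up, $\partial_t\hat{\mathbf K}=-\mathcal H+2\mathcal H=\mathcal H(\phi_t(t),q_t(t))$, and since $q_t(t)=g(t)$ and $\phi_t(t)=\partial\hat{\mathbf K}/\partial g(t)$, this is exactly \eqref{HJ2}.

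I expect the main obstacle to be not the algebra above --- which is standard symplectic bookkeeping, every interior variation being killed by Hamilton's equations and the characteristic boundary conditions --- but the functional-analytic rigor underlying it. One must establish that \eqref{HamiltSyst}, a coupled forward/backward transport system with a quadratic nonlinearity, is well posed on $[0,T_\alpha[$ in spaces carrying the Gaussian weights of \eqref{BndCondIni} and $\mathbf B_\alpha$, so that all the pairings, the identity $\langle p,\partial\mathcal H/\partial p\rangle=2\mathcal H$, and the integrations by parts in $x$ (harmless on $\mathbf T^3$) and in $s$ genuinely converge and are legitimate; and that $g\mapsto(\phi_t,q_t)$ is differentiable, so that the linearized variations $(\delta\phi,\delta q)$ and the $t$-derivatives exist in those spaces. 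The moving terminal boundary $s=t$ in the $t$-differentiation also requires care, but reduces to the elementary rule $\partial_t\int_0^tF(s,t)\,ds=F(t,t)+\int_0^t\partial_tF\,ds$ once this differentiability is secured.
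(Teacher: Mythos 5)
Your proof is correct and follows exactly the route the paper has in mind: the exposé states the proposition as an instance of the classical fact that the action along the characteristics of \eqref{HamiltSyst} solves the Hamilton--Jacobi equation (deferring the details, in particular the well-posedness of \eqref{HamiltSyst} via the abstract Cauchy--Kowalevski theorem, to the section 7.2 of \textcite{BoGalSRSim1}), and your computation carries out precisely that verification, with the correct use of the quadratic homogeneity of $\mathcal H$ in $p$ and of the mixed boundary conditions at $s=0$ and $s=t$. The functional-analytic caveats you flag at the end are exactly the ones the paper also leaves to the original reference.
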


La théorie classique de l'équation de Hamilton--Jacobi dit que le graphe de la différentielle de la solution de l'équation de Hamilton--Jacobi en la variable d'espace est invariant par le flot des équations de Hamilton.
C'est bien ce qu'exprime la proposition ci-dessus, à condition de considérer que les soi-disant variables $g(t,\cdot)$ et $(\partial_s+v\cdot\nabla_x)g$ sont non seulement indépendantes comme on l'a dit plus haut, 
mais encore conjuguées au sens de la théorie des systèmes hamiltoniens.

Cette proposition suggère d'étudier (a) l'unicité de la solution de l'équation d'Hamilton--Jacobi \eqref{HJ2}, et (b) l'existence et unicité pour le système des équations de Hamilton \eqref{HamiltSyst} (cf. section 7.2 
dans le chapitre 7 de \parencite{BoGalSRSim1}). Cette étude est largement basée sur une variante du théorème de Cauchy-Kowalevski de \textcite{Nir} et \textcite{Ov} (voir Appendice de \parencite{BoGalSRSim1}), 
déjà utilisée dans la preuve du théorème de Lanford (voir section 5.3 dans \parencite{FG2014}). Au terme de cette étude, on aboutit à l'énoncé suivant.

\begin{prop}
Soit $\alpha>0$. Il existe $T^*_\alpha\in]0,T_\alpha]$ tel que la fonctionnelle $\hat{\mathbf K}$ soit définie sur $[0,T^*_\alpha]\times\mathbf B_\alpha$, et 
$$
\mathbf K(t,g)=\hat{\mathbf K}(t,g)\,,\quad\text{ pour tout }(t,g)\in[0,T^*_\alpha]\times\mathbf B_\alpha\,.
$$
\end{prop}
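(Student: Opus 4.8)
La stratégie consiste à identifier $\mathbf K$ et $\hat{\mathbf K}$ comme étant toutes deux l'unique solution d'un même problème de Cauchy pour l'équation de Hamilton--Jacobi \eqref{HJ2}, dans une échelle convenable d'espaces de fonctionnelles analytiques. En effet, d'après le Théorème \ref{T-HJ}, $\mathbf K$ est solution de la forme intégrée en temps de \eqref{HJ2}; d'après la proposition précédente, $\hat{\mathbf K}$ l'est également dès qu'elle est bien définie, et ces deux fonctionnelles partagent la même donnée initiale $g\mapsto\langle f^{in},e^{g(0,\cdot)}-1\rangle$. Il suffit donc (i) de résoudre le système hamiltonien \eqref{HamiltSyst} sur un petit intervalle $[0,T^*_\alpha]$ avec $T^*_\alpha\le T_\alpha$, ce qui définit $\hat{\mathbf K}$ via \eqref{DefHatK}, puis (ii) de vérifier que \eqref{HJ2} admet au plus une solution dans une classe contenant ces deux fonctionnelles.

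Pour le point (i), on considérerait une échelle décroissante d'espaces de Banach $(X_\lambda)_{0<\lambda\le\lambda_0}$ de fonctions sur $\mathbf T^3\times\mathbf R^3$ munies de poids gaussiens $e^{\lambda|v|^2}$, le paramètre $\lambda$ codant à la fois la décroissance en vitesse et un rayon d'analyticité. Les formules \eqref{dHdp} et \eqref{dHdq} pour $\partial\mathcal H/\partial p$ et $\partial\mathcal H/\partial q$ montrent que ces applications perdent un facteur de poids: l'intégrale en vitesse du noyau $((v-v_*)\cdot\omega)_+$ contre deux facteurs à décroissance gaussienne définit un opérateur borné $X_\lambda\to X_{\lambda'}$ de norme contrôlée par $C/(\lambda-\lambda')$, suivant l'estimation habituelle de {\og  perte de dérivée\fg}. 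Le système \eqref{HamiltSyst} étant couplé avant--arrière --- $\phi_t$ est prescrite en $s=0$, et $q_t-g$ en $s=t$ --- on le reformulerait comme un point fixe pour le couple $(\phi_t,q_t)$, en intégrant chaque équation à partir de son extrémité propre le long des caractéristiques du transport libre $A_\theta$, puis on invoquerait la variante du théorème de Cauchy-Kowalevski de \textcite{Nir} et \textcite{Ov} (cf. l'Appendice de \parencite{BoGalSRSim1}) pour obtenir, pour $T^*_\alpha$ assez petit, une unique solution dans l'échelle. Ceci définit $\hat{\mathbf K}$ et, par la proposition précédente, en fait une solution de \eqref{HJ2}.

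Pour le point (ii), le même théorème de Cauchy-Kowalevski abstrait s'applique directement à \eqref{HJ2} lue comme l'évolution $\partial_t\mathbf K=\mathcal H(\partial\mathbf K/\partial g(t),g(t))$, la dérivée fonctionnelle $\partial/\partial g$ rendant compte de la perte de rayon d'analyticité. Le point clé est que $\mathbf K$ appartient bien à cette classe analytique: la borne de la Proposition \ref{P-BorneCum}, qui majore les termes du développement en série de type Duhamel de $\frac1{\mu_\eps}\ln\mathbb E_\eps(\cdots)$ par une série entière convergente en $t+\eps$ de rayon explicite, montre que $\mathbf K$ est analytique en temps à valeurs dans un espace à poids gaussien, c'est-à-dire un point de l'échelle $(X_\lambda)$; la fonctionnelle $\hat{\mathbf K}$ jouit de la même régularité par construction. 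Puisque $T^*_\alpha\le T_\alpha$, cette borne reste en vigueur, et comme les deux fonctionnelles résolvent \eqref{HJ2} avec la même donnée initiale, la partie unicité du théorème de Cauchy-Kowalevski impose $\mathbf K=\hat{\mathbf K}$ sur $[0,T^*_\alpha]\times\mathbf B_\alpha$, quitte à réduire encore $T^*_\alpha$.

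L'obstacle principal tient à la conjonction du cadre fonctionnel de dimension infinie et de la perte de poids (ou d'analyticité) propre à l'opérateur de collision, à laquelle s'ajoute le couplage avant--arrière de \eqref{HamiltSyst}. Il faut choisir l'échelle $(X_\lambda)$ de sorte que les estimations bilinéaires sur $\partial\mathcal H/\partial p$ et $\partial\mathcal H/\partial q$ soient uniformes sur $\mathbf B_\alpha$, tout en garantissant que la borne a priori de la Proposition \ref{P-BorneCum} place effectivement $\mathbf K$ dans cette échelle: c'est la conciliation de ces deux exigences, plus que la mécanique abstraite de Cauchy-Kowalevski, qui concentre la difficulté réelle.
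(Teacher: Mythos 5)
Votre stratégie coïncide pour l'essentiel avec celle que le texte esquisse : existence et unicité pour le système hamiltonien \eqref{HamiltSyst} (malgré son couplage avant--arrière) et unicité pour l'équation de Hamilton--Jacobi \eqref{HJ2}, toutes deux obtenues par la variante abstraite du théorème de Cauchy-Kowalevski de Nirenberg et Ovsyannikov dans une échelle d'espaces à poids gaussiens, puis identification de $\mathbf K$ et $\hat{\mathbf K}$ comme solutions du même problème de Cauchy. Votre argument est donc correct et suit la même démarche que la référence (chapitre 7, section 7.2 et Proposition 7.2.3 de Bodineau--Gallagher--Saint-Raymond--Simonella), le seul ingrédient que vous ne mentionnez pas explicitement étant le changement de variables $P=pe^{-q}$, $Q=e^q$ qui ramène \eqref{HamiltSyst} à deux équations de Boltzmann à poids de sens de propagation opposés.
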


\subsection{Transformée de Legendre}\label{SS-Legendre}


Soit $\bar\phi$, solution de \eqref{BoltzPoids} avec condition initiale \eqref{CondinBoltzPoids} pour une fonction poids lipschitzienne $\bar q$ telle que 
$\|\bar q\|_{W^{1,\infty}([0,T_0]\times\mathbf T^3\times\mathbf R^3)}<r$. Rappelons que, d'après la formule \eqref{dHdq},
$$
\begin{aligned}
\left\langle\frac{\partial\mathcal H}{\partial q}(p,q),\psi\right\rangle
\\
=\!\tfrac12\int_{\mathbf T^3\times(\mathbf R^3)^2\times\mathbf S^2}e^{\mathbf D q(x,v,x,v_*,\omega)}\mathbf D\psi(x,v,x,v_*,\omega)((v\!-\!v_*)\cdot\omega)_+d\omega p(x,v)p(x,v_*)dvdv_*dx
\\
=\int_{\mathbf T^3\times(\mathbf R^3)^2\times\mathbf S^2}e^{\mathbf D q(x,v,x,v_*,\omega)}(\psi(x,v')\!-\!\psi(x,v))((v\!-\!v_*)\cdot\omega)_+d\omega p(x,v)p(x,v_*)dvdv_*dx
\\
=\int_{\mathbf T^3\times(\mathbf R^3)^2\times\mathbf S^2}\left(e^{-\mathbf D q(x,v,x,v_*,\omega)}p(x,v')p(x,v'_*)-e^{\mathbf D q(x,v,x,v_*,\omega)}p(x,v)p(x,v_*)\right)
\\
\times((v-v_*)\cdot\omega)_+d\omega dv_*\psi(x,v)dvdx&.
\end{aligned}
$$
La deuxième égalité ci-dessus s'obtient par la même démonstration que celle de la troisième égalité dans la formulation faible de l'intégrale des collisions de Boltzmann de la section \ref{SS-BoltzMath}
(par échange de $v$ et $v_*$). La troisième égalité s'obtient comme la seconde égalité dans la formulation faible de l'intégrale des collisions de Boltzmann de la section \ref{SS-BoltzMath} (en utilisant
la transformation $(v,v_*)\mapsto(v',v'_*)$ à $\omega$ fixé définie par \eqref{LoiColl}). Ainsi
$$
\begin{aligned}
\frac{\partial\mathcal H}{\partial q}(p,q)=\int_{\mathbf R^3\times\mathbf S^2}\left(e^{-\mathbf D q(x,v,x,v_*,\omega)}p(x,v')p(x,v'_*)-e^{\mathbf D q(x,v,x,v_*,\omega)}p(x,v)p(x,v_*)\right)
\\
\times((v-v_*)\cdot\omega)_+d\omega dv_*&\,,
\end{aligned}
$$
ce qui montre que l'équation de Boltzmann à poids \eqref{BoltzPoids} avec condition initiale \eqref{CondinBoltzPoids} coïncide avec la première équation du système hamiltonien \eqref{HamiltSyst},
comme annoncé plus haut.

En utilisant de nouveau une variante du théorème de Cauchy-Kowalevski de \textcite{Nir} et \textcite{Ov}, on montre que le problème de Cauchy \eqref{BoltzPoids}-\eqref{CondinBoltzPoids} admet une 
unique solution sur l'intervalle $[0,T_0e^{-5r}]$, vérifiant la borne
$$
\sup_{0\le t\le T_0e^{-5r}}\|\bar\phi(t,\cdot)\exp(\tfrac14\beta_0|v|^2)\|_{L^\infty(\mathbf T^3\times\mathbf R^3)}\le 4C_0e^r\,.
$$
On vérifie de même que pour l'équation de Boltzmann originale \eqref{BoltzEq} que les solutions de l'équation de Boltzmann à poids \eqref{BoltzPoids} sont de masse, d'impulsion et d'énergie constantes:
$$
\langle(\partial_s+v\cdot\nabla_x)\bar\phi(s,\cdot),1\rangle=\langle(\partial_s+v\cdot\nabla_x)\bar\phi(s,\cdot),v_i\rangle=\langle(\partial_s+v\cdot\nabla_x)\bar\phi(s,\cdot),|v|^2\rangle=0
$$
pour $i=1,2,3$. La démonstration suit de près celle des lois de conservation locales établies dans la section \ref{SS-BoltzMath} pour l'équation de Boltzmann originale.

\begin{prop}\label{P-CalculLegendre}
Supposons que $f^{in}$ vérifie \eqref{BndCondIni}. Pour tout $r>0$, il existe $\alpha\equiv\alpha[r,\beta_0,C_0]>0$ et $T_r\in]0,T_\alpha[$ tels que
$$
\begin{aligned}
\mathbf K^*(t,\phi)=&\int_{\mathbf T^3\times\mathbf R^3}\left(\phi(0,z)\ln\left(\frac{\phi(0,z)}{f^{in}(z)}\right)-\phi(0,z)+f^{in}(z)\right)dz
\\
&+\sup_{q\in L^\infty([0,t]\times\mathbf T^3\times\mathbf R^3)}\left(\llangle(\partial_s+v\cdot\nabla_x)\phi,q\rrangle-\int_0^t\mathcal H(\phi(s,\cdot),q(s,\cdot))ds\right)\,.
\end{aligned}
$$
\end{prop}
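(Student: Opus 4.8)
The plan is to read the claimed identity as the statement that the Legendre transform of the Hamilton--Jacobi value functional $\mathbf K$ is the action functional built from the dual Lagrangian, the initial datum of $\mathbf K$ being converted into a relative entropy by an elementary pointwise Legendre transform. Two structural tools drive the computation. First, Lemma~\ref{L-ChgDerT} lets me treat the terminal value $g(t,\cdot)$ and the function $\Pi:=(\partial_s+v\cdot\nabla_x)g$ as two independent (conjugate) variables parametrising $g\in\mathbf B_\alpha$. Second, the preceding proposition gives the characteristics representation $\mathbf K=\hat{\mathbf K}$ of \eqref{DefHatK}, together with the envelope identities $\partial\hat{\mathbf K}(t,g)/\partial g(t)=\phi_t(t)$ and $\partial\hat{\mathbf K}(t,g)/\partial(\partial_tg+v\cdot\nabla_xg)=-\phi_t$, where $(\phi_t,q_t)$ solves the Hamiltonian system \eqref{HamiltSyst}.

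First I would record two convexity facts. Since $h$ depends linearly on $g$ and $\mathbf K(t,g)=\lim_\eps\tfrac1{\mu_\eps}\ln\mathbb E_\eps(\exp(\sum_jh(\mathbf z_j([0,t]))))$ is a limit of cumulant generating functions of a linear functional, $g\mapsto\mathbf K(t,g)$ is convex; hence the objective $\langle\phi(t,\cdot),g(t,\cdot)\rangle-\llangle\phi,(\partial_s+v\cdot\nabla_x)g\rrangle-\mathbf K(t,g)$ is concave in $(g(t,\cdot),\Pi)$, so its supremum is governed by the first-order (envelope) conditions. Likewise, because $\mathbf Dq$ is linear in $q$ while $((v-v_*)\cdot\omega)_+$ and $p(x,dv)p(x,dv_*)$ are nonnegative, formula \eqref{DefHamilt} exhibits $q\mapsto\mathcal H(p,q)$ as convex, so the inner functional $q\mapsto\llangle(\partial_s+v\cdot\nabla_x)\phi,q\rrangle-\int_0^t\mathcal H(\phi(s,\cdot),q(s,\cdot))\,ds$ is concave and its supremum too is characterised by a single stationarity equation.

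Then I would carry out the $g$-supremum. By concavity the maximiser is the $g$ whose envelope derivatives match the prescribed linear form, i.e. $\phi(t,\cdot)=\partial\mathbf K/\partial g(t)=\phi_t(t)$ and $\phi=-\partial\mathbf K/\partial\Pi=\phi_t$; in other words the maximising trajectory $\phi_t$ of \eqref{HamiltSyst} coincides with the prescribed path $\phi$, with associated weight $q:=q_t$. Substituting \eqref{DefHatK} and writing $(\partial_s+v\cdot\nabla_x)(q-g)=(\partial_s+v\cdot\nabla_x)q-\Pi$, the two $\llangle\,\cdot\,,\Pi\rrangle$ contributions cancel; an integration by parts in time and in $x$ (the torus has no boundary) turns $\llangle\phi,(\partial_s+v\cdot\nabla_x)q\rrangle$ into $\langle\phi(t),q(t)\rangle-\langle\phi(0),q(0)\rangle-\llangle(\partial_s+v\cdot\nabla_x)\phi,q\rrangle$, and the terminal condition $q(t,\cdot)=g(t,\cdot)$ of \eqref{HamiltSyst} kills the terminal pairings, leaving
$$\mathbf K^*(t,\phi)=\Big(\langle\phi(0,\cdot),q(0,\cdot)\rangle-\langle f^{in},e^{q(0,\cdot)}-1\rangle\Big)+\llangle(\partial_s+v\cdot\nabla_x)\phi,q\rrangle-\int_0^t\mathcal H(\phi(s,\cdot),q(s,\cdot))\,ds.$$
It remains to recognise the two blocks. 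The initial condition $\phi(0,\cdot)=f^{in}e^{q(0,\cdot)}$ of \eqref{HamiltSyst} gives $q(0,\cdot)=\ln(\phi(0,\cdot)/f^{in})$, so the first bracket equals $\int(\phi(0)\ln(\phi(0)/f^{in})-\phi(0)+f^{in})\,dz$, which is exactly the pointwise Legendre transform $\sup_a(\langle\phi(0),a\rangle-\langle f^{in},e^a-1\rangle)$ of the initial datum of \eqref{HJ2}, attained at $a=q(0)$. For the remaining block, the stationarity equation of the concave inner functional is $(\partial_s+v\cdot\nabla_x)\phi=\partial\mathcal H/\partial q(\phi,q)$, which by \eqref{dHdq} is precisely the weighted Boltzmann equation \eqref{BoltzPoids}--\eqref{CondinBoltzPoids} solved by $(\phi_t,q_t)$; hence $q=q_t$ is its maximiser and the block equals $\sup_q(\llangle(\partial_s+v\cdot\nabla_x)\phi,q\rrangle-\int_0^t\mathcal H(\phi,q)\,ds)$, giving the announced formula.

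The main obstacle is the rigour of the two variational steps rather than the algebra. One must guarantee that the supremum over $g$ is genuinely attained inside $\mathbf B_\alpha$, not on its boundary: this is where $\alpha$ and $T_r$ enter, $\alpha$ being chosen large enough that any weight $q$ with $\|q\|_{W^{1,\infty}}\le r$ produces, through \eqref{HamiltSyst} and Lemma~\ref{L-ChgDerT}, an admissible $g\in\mathbf B_\alpha$, and $T_r$ small enough for the Cauchy--Kowalevski solvability of \eqref{HamiltSyst} and of \eqref{BoltzPoids}--\eqref{CondinBoltzPoids} and for the uniform bounds of Propositions~\ref{P-BorneCum} and \ref{P-DefbK} to hold. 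One also has to justify interchanging the $t$-integral with the supremum over $q$ (so the global sup reduces to a pointwise-in-$s$ sup), the differentiations underlying the envelope identities, and the integrations by parts within the Gaussian-weighted spaces where $\phi$ and $q$ live. Finally the identity should be read on the class $\mathcal R_{r,T_r}$ of trajectories of the weighted Boltzmann equation, where the inner supremum is finite and attained; off this class the stationarity equation admits no admissible solution and the right-hand side is $+\infty$, consistently with the variational upper bound.
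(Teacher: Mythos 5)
Your proposal is correct and follows essentially the route the paper itself indicates: the exposé does not spell out a proof of this proposition (it defers the details to chapter~7 of the original article), but the remarks surrounding it — that the first equation of the Hamiltonian system \eqref{HamiltSyst} is exactly the stationarity condition of the inner variational problem, and that $q\mapsto\mathcal H(\bar\phi,q)$ is convex because $\bar\phi\ge 0$ — are precisely the two pivots of your argument. Your computation of the outer supremum via the characteristics representation $\hat{\mathbf K}$ of \eqref{DefHatK}, the envelope identities, the cancellation after integration by parts, and the identification of the initial bracket with the relative entropy through $q(0,\cdot)=\ln(\phi(0,\cdot)/f^{in})$ is a faithful and correct elaboration of that sketch.
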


\smallskip
Or la première équation du système hamiltonien \eqref{HamiltSyst} exprime précisément le fait que la fonction $\bar q$, c'est-à-dire la fonction poids dans \eqref{BoltzPoids}-\eqref{CondinBoltzPoids}, est 
un point critique pour le problème variationnel de la proposition ci-dessus. Comme d'autre part la solution $\bar\phi$ vérifie $\bar\phi\ge 0$, on déduit de la formule \eqref{DefHamilt} que 
$$
q\mapsto\mathcal H(\bar\phi(s,\cdot),q)
$$
est convexe. Par conséquent
$$
\begin{aligned}
\sup_{q\in L^\infty([0,t]\times\mathbf T^3\times\mathbf R^3)}\left(\llangle(\partial_s+v\cdot\nabla_x)\phi,q\rrangle-\int_0^t\mathcal H(\phi(s,\cdot),q(s,\cdot))ds\right)
\\
=\llangle(\partial_s+v\cdot\nabla_x)\bar\phi,\bar q\rrangle-\int_0^t\mathcal H(\bar\phi(s,\cdot),\bar q(s,\cdot))ds&\,.
\end{aligned}
$$
Cette formule permet donc de calculer {\og  explicitement\fg}   la fonctionnelle des grandes déviations intervenant dans le Théorème \ref{T-GD} par la méthode des caractéristiques, c'est-à-dire en résolvant le système 
hamiltonien \eqref{HamiltSyst}.

La Proposition \ref{P-CalculLegendre} montre que la fonctionnelle des grandes déviations obtenue dans le Théorème \ref{T-GD} coïncide bien avec celle conjecturée dans les travaux de Rezakhanlou (Conjecture 4
du texte de Rezakhanlou dans \parencite{Reza}), et de Bouchet (formule (27) de \parencite{Bouchet}).

\smallskip
Terminons par une remarque intéressante --- qui joue un rôle dans la démonstration des résultats de cette section (voir \parencite{BoGalSRSim1}, chapitre 7, section 7.2).

Il est commode de changer les variables $(p,q)$ du hamiltonien \eqref{DefHamilt} en posant $P=pe^{-q}$ et $Q=e^q$, d'où
\begin{equation}\label{DefHamiltPrime}
\begin{aligned}
\mathcal H(p,q)=\tfrac12\int_{\mathbf T^3\times(\mathbf R^3)^2\times\mathbf S^2}\left(Q(x,v')Q(x,v'_*)-Q(x,v)Q(x,v_*)\right)&
\\
\times((v-v_*)\cdot\omega)_+ d\omega\,P(x,dv)P(x,dv_*)dx&=\mathcal H'(P,Q)\,.
\end{aligned}
\end{equation}
Un calcul simple montre que
$$
\begin{aligned}
\frac{\partial H'}{\partial P}(P,Q)=\int_{\mathbf R^3\times\mathbf S^2}\left(Q(x,v')Q(x,v'_*)-Q(x,v)Q(x,v_*)\right)P(x,v_*)((v-v_*)\cdot\omega)_+ dv_*d\omega&\,,
\\
\frac{\partial H'}{\partial Q}(P,Q)
=\int_{\mathbf R^3\times\mathbf S^2}\left(P(x,v')P(x,v'_*)-P(x,v)P(x,v_*)\right)Q(x,v_*)((v-v_*)\cdot\omega)_+ dv_*d\omega&\,.
\end{aligned}
$$
On peut alors réécrire le système \eqref{HamiltSyst} au moyen des nouvelles fonctions inconnues $\gamma_t:=\phi_te^{-q_t}$ et $\eta_t=e^{q_t}$: on trouve sans difficulté\footnote{En notant $f'=f(t,x,v')$ 
et $f'_*=f(t,x,v'_*)$, tandis que $f_*=f(t,x,v_*)$, et que les vitesses $v',v'_*$ sont données par \eqref{LoiColl} en fonction de $v,v_*$ et $\omega$.} que
$$
\begin{aligned}
(\partial_s\!+\!v\cdot\nabla_x)\gamma_t\!+\!\gamma_t(\partial_s\!+\!v\cdot\nabla_x)g=&\frac{\partial H'}{\partial Q}(\gamma_t,\eta_t)
\\
=&+\int_{\mathbf R^3\times\mathbf S^2}(\gamma'_t\gamma'_{t*}-\gamma_t\gamma_{t*})\eta_{t*}((v-v_*\cdot\omega))_+dv_*d\omega\,,
\\
(\partial_s\!+\!v\cdot\nabla_x)\eta_t\!-\!\eta_t(\partial_s\!+\!v\cdot\nabla_x)g=&-\frac{\partial H'}{\partial P}(\gamma_t,\eta_t)
\\
=&-\int_{\mathbf R^3\times\mathbf S^2}(\eta'_t\eta'_{t*}-\eta_t\eta_{t*})\gamma_{t*}((v-v_*\cdot\omega))_+dv_*d\omega\,,
\end{aligned}
$$
système posé sur $[0,t]\times\mathbf T^3\times\mathbf R^3$ avec les conditions aux limites
$$
\gamma_t(0,\cdot)=f^{in}\,,\qquad \eta_t(t,\cdot)=e^{g(t,\cdot)}\,.
$$
La structure de ce nouveau système hamiltonien est intéressante: il s'agit de deux équations de Boltzmann à poids, mais \textit{avec des sens opposés de propagation du temps}. Comme il s'agit de prouver
l'existence et l'unicité de solutions à des équations de type Boltzmann sans utiliser la flèche du temps {\og  naturelle\fg}  , à savoir celle pour laquelle
$$
t\mapsto\iint_{\mathbf T^3\times\mathbf R^3}\phi(t,x,v)\ln\phi(t,x,v)dxdv\text{ est décroissante,}
$$
il n'est pas très étonnant que la démonstration repose de nouveau sur le théorème de Cauchy-Kowalevski abstrait de \textcite{Nir,Ov}: voir la Proposition 7.2.3 de \parencite{BoGalSRSim1}.

\section{Fluctuations}


Dans cette section, on va s'intéresser au champ de fluctuations \eqref{DefFluct} de la mesure empirique $\rho^\eps_t$ dans la limite de Boltzmann--Grad. Rappelons-en la définition:
$$
\begin{aligned}
\langle\zeta^\eps_t,\phi\rangle:=&\sqrt{\mu_\eps}\left(\langle\rho^\eps_t,\phi\rangle-\mathbb E_\eps(\langle\rho^\eps_t,\phi\rangle)\right)
\\
=&\frac1{\sqrt{\mu_\eps}}\left(\sum_{i=1}^N\phi(\mathbf z_i(t))-\mu_\eps\int_{\mathbf T^3\times\mathbf R^3}F^\eps_1(t,z)\phi(z)dz\right)\,.
\end{aligned}
$$

Puisqu'il s'agit de décrire les fluctuations de la mesure empirique autour de $F^\eps_1$ qui converge vers une solution de l'équation de Boltzmann, nous aurons évidemment besoin des opérateurs linéarisés 
(adjoints) associés à l'équation de Boltzmann \eqref{BoltzEq} autour d'une solution $f$ définie sur $[0,T]\times\mathbf T^3\times\mathbf R^3$:
$$
\begin{aligned}
\mathcal L^*_t\phi(z):=&v\cdot\nabla_x\phi(z)+\mathbf L^*_t\phi(z)\,,\quad\text{ où on a posé}
\\
\mathbf L^*_t\phi(z):=&\int_{\mathbf R^3\times\mathbf S^2}\mathbf D\phi(x,v,x,v^*,\omega)f(t,x,v_*)((v-v_*)\cdot\omega)_+dv_*d\omega\,.
\end{aligned}
$$

\begin{theo}\label{T-BFluct}
Supposons que $f^{in}$ vérifie \eqref{BndCondIni}. Il existe $T^*\equiv T^*[C_0,\beta_0]>0$ tel que, dans la limite de Boltzmann--Grad, $\zeta^\eps_t$ converge en loi vers $\zeta_t$ pour tout $t\in[0,T^*]$ 
lorsque $\eps\to 0^+$, où $\zeta_t$ est le processus gaussien centré solution de
\begin{equation}\label{BFluct}
d\zeta_t=\mathcal L_t\zeta_tdt+db_t\,,\qquad 0<t<T^*\,,
\end{equation}
et où $b_t$ est un bruit gaussien delta-corrélé en $t,x$, de covariance
$$
\begin{aligned}
\mathbf{Cov}(t,\phi,\psi):=\tfrac12\int_{\mathbf T^3\times(\mathbf R^3)^2\times\mathbf S^2}\mathbf D\phi(x,v,x,v_*,\omega)\mathbf D\psi(x,v,x,v_*,\omega)
\\
\times f(t,x,v)f(t,x,v_*)((v-v_*)\cdot\omega)_+dxdvdv_*d\omega&\,.
\end{aligned}
$$
De plus $\zeta_0$ est le champ gaussien centré de covariance
$$
\mathbf E(\langle\zeta_0,\phi\rangle\langle\zeta_0,\psi\rangle)=\int_{\mathbf T^3\times\mathbf R^3}\phi(z)\psi(z)f^{in}(z)dz\,.
$$
\end{theo}

Spohn avait conjecturé ce résultat dans \parencite{Spohn81}: voir l'équation (4.1) de cette référence, ainsi que l'article antérieur \parencite{vBLaLeSpohn} dans le cas où la solution de l'équation de Boltzmann 
considérée est un équilibre maxwellien \eqref{Maxw}. 

\subsection{Solutions faibles de \eqref{BFluct}}


Précisons la notion de solution de l'équation de Boltzmann aux fluctuations \eqref{BFluct}. Soit $U(t,s)$ le flot engendré par l'opérateur de Boltzmann linéarisé $\mathcal L_t$: autrement dit
$$
\partial_tU(t,s)\phi-\mathcal L_tU(t,s)\phi=0\,,\qquad U(s,s)\phi=\phi\,,\qquad 0\le s\le t\le T^*\,.
$$
En dérivant par rapport à $t_2$ la relation de flot $U(t_1,t_3)=U(t_1,t_2)U(t_2,t_3)$, il vient
$$
\partial_sU(t,s)\phi+U(t,s)\mathcal L_s\phi=0\,,\qquad U(t,t)\phi=\phi\,,\qquad 0\le s\le t\le T^*\,.
$$
Puis, par passage à l'adjoint, on trouve que
$$
\begin{aligned}
\partial_sU(t,s)^*\phi+\mathcal L^*_sU(t,s)^*\phi&=0\,,\qquad U(t,t)^*\phi=\phi\,,\qquad 0\le s\le t\le T^*\,,
\\
\partial_tU(t,s)^*\phi-U(t,s)^*\mathcal L^*_t\phi&=0\,,\qquad U(s,s)^*\phi=\phi\,,\qquad 0\le s\le t\le T^*\,.
\end{aligned}
$$
Formellement, la solution de \eqref{BFluct} vérifie
$$
\langle\zeta_t,\phi\rangle=\langle\zeta_0,U(t,0)^*\phi\rangle+\int_0^t\langle db_s,U(t,s)^*\phi\rangle
$$
pour toute fonction test $\phi$. Donc
$$
\begin{aligned}
\hat{\mathcal C}(t,s,\psi,\phi):=\mathbb E(\langle\zeta_t,\psi\rangle\langle\zeta_s,\phi\rangle)=&\mathbb E(\langle\zeta_0,U(t,0)^*\psi\rangle\langle\zeta_0,U(s,0)^*\phi\rangle)
\\
&+\mathbb E\left(\int_0^t\int_0^s\langle db_\tau,U(t,\tau)^*\phi\rangle\langle db_\sigma,U(s,\sigma)^*\phi\rangle\right)
\\
&+\mathbb E\left(\langle\zeta_0,U(t,0)^*\psi\rangle\int_0^s\langle db_\sigma,U(s,\sigma)^*\phi\rangle d\sigma\right)
\\
&+\mathbb E\left(\langle\zeta_0,U(s,0)^*\psi\rangle\int_0^t\langle db_\tau,U(t,\tau)^*\phi\rangle d\tau\right)\,,
\end{aligned}
$$
ce qui se réécrit
\begin{equation}\label{DefSolFluct}
\hat{\mathcal C}(t,s,\psi,\phi)\!=\mathbb E(\langle\zeta_0,U(t,0)^*\psi\rangle\langle\zeta_0,U(s,0)^*\phi\rangle)\!+\!\int_0^s\mathbf{Cov}(\sigma,U(t,\sigma)^*\psi,U(s,\sigma)^*\phi)d\sigma\,.
\end{equation}

\begin{defi}
Une solution faible de \eqref{BFluct} est un processus gaussien centré dont la covariance vérifie l'identité ci-dessus. (On rappelle en effet que la loi d'un processus gaussien centré est complètement 
déterminée par sa covariance).
\end{defi}

Dérivons chaque membre de l'identité ci-dessus par rapport à $t$: 
$$
\begin{aligned}
\partial_t\hat{\mathcal C}(t,s,\psi,\phi)\!\!=&\mathbb E(\langle\zeta_0,U(t,0)^*\mathcal L^*_t\psi\rangle\langle\zeta_0,U(s,0)^*\phi\rangle)\!+\!\!\int_0^s\!\!\mathbf{Cov}(\sigma,U(t,\sigma)^*\mathcal L^*_t\psi,U(s,\sigma)^*\phi)d\sigma
\\
=&\hat{\mathcal C}(t,s,\mathcal L^*_t\psi,\phi)\,,\quad t>s\,.
\end{aligned}
$$
D'autre part, pour $t=s$
$$
\begin{aligned}
\frac{d}{dt}\hat{\mathcal C}(t,t,\psi,\phi)=&\left(\partial_t\hat{\mathcal C}(t,s,\psi,\phi)+\partial_s\hat{\mathcal C}(t,s,\psi,\phi)\right)\Big|_{t=s}
\\
=&\mathbb E(\langle\zeta_0,U(t,0)^*\mathcal L^*_t\psi\rangle\langle\zeta_0,U(t,0)^*\phi\rangle)
\\
&+\int_0^t\mathbf{Cov}(\sigma,U(t,\sigma)^*\mathcal L^*_t\psi,U(t,\sigma)^*\phi)d\sigma
\\
&+\mathbb E(\langle\zeta_0,U(t,0)^*\psi\rangle\langle\zeta_0,U(t,0)^*\mathcal L^*_t\phi\rangle)
\\
&+\int_0^t\mathbf{Cov}(\sigma,U(t,\sigma)^*\psi,U(t,\sigma)^*\mathcal L^*_t\phi)d\sigma+\mathbf{Cov}(t,\psi,\phi)\,,
\end{aligned}
$$
de sorte que
$$
\frac{d}{dt}\hat{\mathcal C}(t,t,\psi,\phi)=\hat{\mathcal C}(t,t,\mathcal L^*_t\psi,\phi)+\hat{\mathcal C}(t,t,\psi,\mathcal L^*_t\phi)+\mathbf{Cov}(t,\psi,\phi)\,.
$$
Évidemment, il s'agit là d'un calcul formel (en toute rigueur, il faudrait revenir à la formulation intégrale en temps équivalente). Toutefois, nous allons, dans la section suivante, faire le lien entre ces formules et 
la solution de l'équation de Hamilton--Jacobi fonctionnelle \eqref{HJ2}.

Avant cela, on va modifier très légèrement le calcul ci-dessus de façon à absorber la dérivée en temps le long des caractéristiques de l'opérateur d'advection $v\cdot\nabla_x$. Rappelons la notation $A_t=e^{-tv\cdot\nabla_x}$,
c'est-à-dire que $A_t\phi(x,v)=\phi(x-tv,v)$.

Comme $(\phi,\psi)\mapsto\hat{\mathcal C}(t,s,\psi,\phi)$ est une forme bilinéaire, on calcule facilement
$$
\begin{aligned}
\frac{d}{dt}\hat{\mathcal C}(t,t,A_t\psi,A_t\phi)=&\hat{\mathcal C}(t,t,\mathcal L^*_tA_t\psi,A_t\phi)+\hat{\mathcal C}(t,t,A_t\psi,\mathcal L^*_tA_t\phi)+\mathbf{Cov}(t,A_t\psi,A_t\phi)
\\
&-\hat{\mathcal C}(t,t,v\cdot\nabla_xA_t\psi,A_t\phi)-\hat{\mathcal C}(t,t,A_t\psi,v\cdot\nabla_xA_t\phi)
\\
=&\hat{\mathcal C}(t,t,\mathbf L^*_tA_t\psi,A_t\phi)+\hat{\mathcal C}(t,t,A_t\psi,\mathbf L^*_tA_t\phi)+\mathbf{Cov}(t,A_t\psi,A_t\phi)\,.
\end{aligned}
$$

D'autre part, pour $t>s$, on a
$$
\begin{aligned}
\frac{d}{dt}\hat{\mathcal C}(t,s,A_t\psi,\phi_s)=&\hat{\mathcal C}(t,s,\mathcal L^*_tA_t\psi,\phi_s)-\hat{\mathcal C}(t,s,v\cdot\nabla_xA_t\psi,\phi_s)
\\
=&\hat{\mathcal C}(t,s,\mathbf L^*_tA_t\psi,\phi_s)\,,
\end{aligned}
$$
de sorte que
$$
\hat{\mathcal C}(t,s,A_t\psi,\phi_s)=\hat{\mathcal C}(s,s,A_s\psi,\phi_s\sigma)+\int_s^t\hat{\mathcal C}(\sigma,s,\mathbf L^*_\sigma A_\sigma\psi,\phi_s)\,,
$$
identité dont on intègre ensuite chaque membre par rapport à $s$, pour aboutir à
$$
\int_0^t\hat{\mathcal C}(t,s,A_t\psi,\phi_s)ds=\int_0^t\left(\hat{\mathcal C}(s,s,A_s\psi,\phi_s)+\int_s^t\hat{\mathcal C}(\sigma,s,\mathbf L^*_\sigma A_\sigma\psi,\phi_s)\right)ds\,.
$$
En remplaçant $\psi$ par $A_{-t}\psi$, on trouve finalement que
$$
\int_0^t\hat{\mathcal C}(t,s,\psi,\phi_s)ds=\int_0^t\left(\hat{\mathcal C}(s,s,A_{s-t}\psi,\phi_s)+\int_s^t\hat{\mathcal C}(\sigma,s,\mathbf L^*_\sigma A_{\sigma-t}\psi,\phi_s)d\sigma\right)ds\,.
$$

En résumé, la solution $\zeta_t$ de \eqref{BFluct} est un processus gaussien centré dont la covariance $\hat{\mathcal C}(t,s,\psi,\phi)$ est solution du système d'équations intégrales
\begin{equation}\label{SystChat}
\left\{
\begin{aligned}
{}&\hat{\mathcal C}(t,t,\psi,\phi)=\hat{\mathcal C}(0,0,A_{-t}\psi,A_{-t}\phi)+\int_0^t\mathbf{Cov}(s,A_{s-t}\psi,A_{s-t}\phi)
\\
&\qquad\qquad\qquad+\int_0^t\left(\hat{\mathcal C}(s,s,\mathbf L^*_sA_{s-t}\psi,A_{s-t}\phi)+\hat{\mathcal C}(s,s,A_{s-t}\psi,\mathbf L^*_sA_{s-t}\phi)\right)ds\,,
\\
&\int_0^t\hat{\mathcal C}(t,s,\psi,\phi_s)ds=\int_0^t\left(\hat{\mathcal C}(s,s,A_{s-t}\psi,\phi_s)+\int_s^t\hat{\mathcal C}(\sigma,s,\mathbf L^*_\sigma A_{\sigma-t}\psi,\phi_s)d\sigma\right)ds\,.
\end{aligned}
\right.
\end{equation}
On y ajoute l'information provenant de la condition initiale:
\begin{equation}\label{CondInChat}
\hat{\mathcal C}(0,0,\psi,\phi)=\int_{\mathbf T^3\times\mathbf R^3}\phi(z)\psi(z)f^{in}(z)dz\,.
\end{equation}

Or on va voir que l'équation d'Hamilton--Jacobi permet justement de montrer que la limite de la covariance du champ de fluctuation $\zeta^\eps_t$ lorsque $\eps\to 0^+$ vérifie le même système \eqref{SystChat}
d'équations intégrales que ci-dessus. Ce résultat, que nous allons présenter dans la section suivante, est évidemment une étape fondamentale dans la démonstration du Théorème \ref{T-BFluct}. 

\subsection{Covariance limite}


La covariance du champ de fluctuations est définie comme suit: pour toute paire de fonctions test $\phi,\psi\in C_b(\mathbf T^3\times\mathbf R^3)$, on pose
$$
\mathcal C_\eps(t,s,\psi,\phi):=\mathbb E_\eps(\langle\zeta^\eps_t,\psi\rangle\langle\zeta^\eps_s,\phi\rangle)\,,\qquad 0\le s\le t\,.
$$
L'équation de Hamilton--Jacobi \eqref{HJ2} va nous permettre de caractériser la covariance limite du champ de fluctuations (limite de $\mathcal C_\eps(t,s,\psi,\phi)$ lorsque $\eps\to 0^+$).

\begin{prop}\label{P-EvolCov}
Supposons que $f^{in}$ vérifie \eqref{BndCondIni}, et soit $f$ la solution du problème de Cauchy \eqref{BoltzEq}-\eqref{BoltzCondIn} sur $[0,T^*]\times\mathbf T^3\times\mathbf R^3$ où 
$T^*\equiv T^*[C_0,\beta_0]>0$. Pour toute paire de fonctions test $\phi,\psi\in C_b(\mathbf T^3\times\mathbf R^3)$ et tous $s\le t$ dans $[0,T_0]$,
$$
\mathcal C_\eps(t,s,\psi,\phi)\to\mathcal C(t,s,\psi,\phi)\quad\text{ lorsque }\eps\to 0^+\,.
$$
De plus, la covariance limite $\mathcal C$ est solution du système \eqref{SystChat} d'équations intégrales, et vérifie la condition initiale \eqref{CondInChat}.
\end{prop}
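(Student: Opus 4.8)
Le plan consiste à identifier la covariance limite $\mathcal C$ à une dérivée fonctionnelle seconde, prise en $g=0$, de la fonction génératrice des cumulants \emph{sur les chemins} $\mathbf K$, puis à obtenir le système \eqref{SystChat} en dérivant deux fois l'équation de Hamilton--Jacobi \eqref{HJ2}. Dans cet esprit, je commencerais par réécrire, pour $g\in\mathbf B_\alpha$ et $h(z([0,t]))=g(t,z(t))-\int_0^t(\partial_s+v\cdot\nabla_x)g(s,z(s))ds$,
$$
\sum_{j=1}^Nh(\mathbf z_j([0,t]))=\mu_\eps\langle\rho^\eps_t,g(t,\cdot)\rangle-\mu_\eps\int_0^t\langle\rho^\eps_s,(\partial_s+v\cdot\nabla_x)g(s,\cdot)\rangle ds\,,
$$
de sorte que la fonctionnelle pré-limite $\mathbf K^\eps(t,g):=\tfrac1{\mu_\eps}\ln\mathbb E_\eps(\exp\sum_jh(\mathbf z_j([0,t])))$ soit la fonction génératrice des cumulants associée aux deux \og variables conjuguées\fg{} indépendantes que sont la valeur terminale $g(t,\cdot)$ et la dérivée de transport $w(s,\cdot):=(\partial_s+v\cdot\nabla_x)g(s,\cdot)$. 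Dériver par rapport à $g(t,\cdot)$ (à $w$ constant) fait descendre le facteur $\mu_\eps\langle\rho^\eps_t,\cdot\rangle$, tandis que dériver par rapport à $w(s,\cdot)$ (à $g(t,\cdot)$ constant) fait descendre $-\mu_\eps\langle\rho^\eps_s,\cdot\rangle$ ; en évaluant en $g=0$, j'obtiendrais
$$
\frac{\partial^2\mathbf K^\eps}{\partial g(t)^2}\Big|_{g=0}\!\!\cdot(\psi,\phi)=\mathcal C_\eps(t,t,\psi,\phi)\,,\qquad\frac{\partial^2\mathbf K^\eps}{\partial g(t)\,\partial w(s)}\Big|_{g=0}\!\!\cdot(\psi,\phi)=-\mathcal C_\eps(t,s,\psi,\phi)\,,
$$
ce qui exprime les covariances (aux temps égaux et à deux temps) du champ \eqref{DefFluct} au moyen de $\mathbf K^\eps$.

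Ensuite, la Proposition \ref{P-DefbK} donne la convergence $\mathbf K^\eps\to\mathbf K$ et le contrôle uniforme de la dérivée première. L'étape suivante consisterait à étendre ces estimations à la dérivée seconde --- c'est-à-dire à borner uniformément le second cumulant évalué aux temps $s$ et $t$, à l'aide des bornes du chapitre~8 de \parencite{BoGalSRSim1} rappelées à la Proposition \ref{P-BorneCum} --- afin de pouvoir intervertir limite et dérivation. On en déduirait $\mathcal C_\eps(t,s,\psi,\phi)\to\mathcal C(t,s,\psi,\phi)$, la limite s'identifiant à la dérivée seconde correspondante de $\mathbf K$.

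Pour l'équation satisfaite par $\mathcal C$, je dériverais une seconde fois l'identité \eqref{EvolDbK} (obtenue en dérivant une fois \eqref{HJ2}), puis j'évaluerais en $g=0$. Comme $\mathcal H$ est quadratique en $p$, on a $\tfrac{\partial^2\mathcal H}{\partial p^2}(f,0)=0$, et $\tfrac{\partial\mathcal H}{\partial p}(f,0)=0$ par \eqref{Hp(p,0)=0}, ce qui annule plusieurs contributions. Les dérivées secondes survivantes de $\mathcal H$ en $(f,0)$ s'identifient, grâce à \eqref{dHdp} et \eqref{dHdq}, à
$$
\left\langle\frac{\partial^2\mathcal H}{\partial p\,\partial q}(f,0)\cdot\chi,\tilde p\right\rangle=\langle\mathbf L^*\chi,\tilde p\rangle\,,\qquad\left\langle\frac{\partial^2\mathcal H}{\partial q^2}(f,0)\cdot\chi,\psi\right\rangle=\mathbf{Cov}(\cdot,\chi,\psi)\,,
$$
soit respectivement l'opérateur de Boltzmann linéarisé $\mathbf L^*_s$ et le noyau de covariance $\mathbf{Cov}$ du Théorème \ref{T-BFluct}. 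La dérivation du terme initial $\int e^{g(0,z)}A_{-t}\psi(z)f^{in}(z)dz$ ferait apparaître, via le facteur $A_{-\sigma}\phi$ issu du Lemme \ref{L-ChgDerT}, la covariance initiale $\int\phi\psi f^{in}$, d'où \eqref{CondInChat}. En rassemblant ces termes et en absorbant l'advection $v\cdot\nabla_x$ dans les opérateurs $A_{\sigma-t}$ comme à la section \ref{SS-Legendre}, on reconnaîtrait les deux équations intégrales de \eqref{SystChat} : la dérivation selon la valeur terminale fournit l'équation aux temps égaux, celle selon la dérivée de transport l'équation à deux temps intégrée.

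La difficulté principale sera analytique et non algébrique : il s'agit d'établir les estimations uniformes en $\eps$ sur les dérivées fonctionnelles secondes de $\mathbf K^\eps$ (de manière équivalente, sur les cumulants d'ordre deux aux temps $s,t$), seules à même de légitimer à la fois la convergence $\mathcal C_\eps\to\mathcal C$ et l'interversion de $\lim_{\eps\to0^+}$ avec les dérivations en $g$. Une fois ces bornes acquises, l'identification avec \eqref{SystChat} n'est qu'une manipulation formelle de l'équation de Hamilton--Jacobi.
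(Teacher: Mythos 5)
Votre démarche est essentiellement celle du texte : identification de la covariance limite aux dérivées fonctionnelles secondes de $\mathbf K$ en $g=0$, double dérivation de \eqref{EvolDbK}, annulation des termes en $\partial\mathcal H/\partial p$ et $\partial^2\mathcal H/\partial p^2$ en $q=0$, identification de $\partial^2\mathcal H/\partial p\partial q$ et $\partial^2\mathcal H/\partial q^2$ avec l'opérateur linéarisé $\mathbf L^*_s$ et le noyau $\mathbf{Cov}$, et renvoi aux bornes uniformes sur les cumulants pour la convergence $\mathcal C_\eps\to\mathcal C$ (le texte, comme vous, renvoie l'équation à deux temps à un calcul analogue, que vous explicitez utilement via la dérivée mixte par rapport à la variable de transport). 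Seule retouche : l'annulation de $\partial^2\mathcal H/\partial p^2(f,0)$ ne provient pas du caractère quadratique en $p$ de $\mathcal H$ (qui donnerait seulement une dérivée seconde indépendante de $p$), mais de l'identité $\mathcal H(p,0)=0$ valable pour tout $p$.
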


\begin{proof}
Observons pour commencer que
$$
\frac{\partial^2\mathbf K(t,g)}{\partial g(t)^2}\Big|_{g=0}\cdot(\psi,\phi)=\mathcal C(t,t,\psi,\phi)\,.
$$
Partons de \eqref{HJBoltzCondIn}; en dérivant une fois de plus par rapport à $g(t,\cdot)$, il vient
$$
\frac{\partial^2\mathbf K(0,g)}{\partial g(t)^2}\Big|_{g=0}\cdot(\psi,\phi)=\int_{\mathbf T^3\times\mathbf R^3}A_{-t}\psi(z)A_{-t}\phi(z)f^{in}(z)dz=\mathcal C(0,0,A_{-t}\psi,A_{-t}\phi)\,.
$$
Revenons ensuite à l'identité \eqref{EvolDbK}, dont on dérive chaque membre au point $g=0$ par rapport à $g(t,\cdot)$ dans la direction $\psi$, pour trouver
$$
\begin{aligned}
\frac{\partial^2\mathbf K(t,g)}{\partial g(t)^2}\cdot(\phi,\psi)=\frac{\partial^2\mathbf K(0,g)}{\partial g(t)^2}\cdot(\phi,\psi)
\\
+\int_0^t\frac{\partial^2\mathcal H}{\partial p^2}\left(\frac{\partial\mathbf K(s,g)}{\partial g(s)},g(s)\right)
\cdot\left(\frac{\partial^2\mathbf K(s,g)}{\partial g(t)\partial g(s)}\cdot(\phi,\cdot),\frac{\partial^2\mathbf K(s,g)}{\partial g(t)\partial g(s)}\cdot(\psi,\cdot)\right)ds
\\
+\int_0^t\frac{\partial^2\mathcal H}{\partial q\partial p}\left(\frac{\partial\mathbf K(s,g)}{\partial g(s)},g(s)\right)
\cdot\left(A_{s-t}\phi,\frac{\partial^2\mathbf K(s,g)}{\partial g(t)\partial g(s)}\cdot(\psi,\cdot)\right)ds
\\
+\int_0^t\left\langle\frac{\partial\mathcal H}{\partial p}\left(\frac{\partial\mathbf K(s,g)}{\partial g(s)},g(s)\right),\frac{\partial^3\mathbf K(s,g)}{\partial g(t)^2\partial g(s)}\cdot(\phi,\psi,\cdot)\right\rangle ds
\\
+\int_0^t\frac{\partial^2\mathcal H}{\partial p\partial q}\left(\frac{\partial\mathbf K(s,g)}{\partial g(s)},g(s)\right)\cdot\left(\frac{\partial^2\mathbf K(s,g)}{\partial g(t)\partial g(s)}\cdot(\phi,\cdot),A_{s-t}\psi\right) ds
\\
+\int_0^t\frac{\partial^2\mathcal H}{\partial q^2}\left(\frac{\partial\mathbf K(s,g)}{\partial g(s)},g(s)\right)\cdot\left(A_{s-t}\phi,A_{s-t}\psi\right) ds&\,.
\end{aligned}
$$
En utilisant la règle de dérivation des fonctions composées et le Lemme \ref{L-ChgDerT}, on a
$$
\frac{\partial^2\mathbf K(s,g)}{\partial g(t)\partial g(s)}\cdot(\psi,\cdot)=\frac{\partial^2\mathbf K(s,g)}{\partial g(s)^2}\cdot(A_{s-t}\psi,\cdot)\,.
$$
D'autre part
$$
\mathcal H(p,0)=0\,,\quad\text{ d'où }\quad\frac{\partial\mathcal H}{\partial p}(p,0)=0\quad\text{ et }\quad\frac{\partial^2\mathcal H}{\partial p^2}(p,0)=0\,,
$$
de sorte qu'en faisant $g=0$ dans l'identité ci-dessus, il vient
$$
\begin{aligned}
\mathcal C(t,t,\phi,\psi)=\mathcal C(0,0,A_{-t}\phi,A_{-t}\psi)
\\
+\int_0^t\frac{\partial^2\mathcal H}{\partial q\partial p}\left(f(s,\cdot),0\right)\cdot\left(A_{s-t}\phi,\frac{\partial^2\mathbf K(s,0)}{\partial g(s)^2}\cdot(A_{s-t}\psi,\cdot)\right)ds
\\
+\int_0^t\frac{\partial^2\mathcal H}{\partial p\partial q}\left(f(s,\cdot),0\right)\cdot\left(\frac{\partial^2\mathbf K(s,0)}{\partial g(s)^2}\cdot(A_{s-t}\phi,\cdot),A_{s-t}\psi\right) ds
\\
+\int_0^t\frac{\partial^2\mathcal H}{\partial q^2}\left(f(s,\cdot),0\right)\cdot\left(A_{s-t}\phi,A_{s-t}\psi\right) ds
\end{aligned}
$$
puisqu'on sait, d'après la Proposition \ref{P-HJBoltz}, que $\frac{\partial\mathbf K(s,0)}{\partial g(s)}=f(s,\cdot)$, où $f$ est la solution de l'équation de Boltzmann \eqref{BoltzEq}.

Ensuite, la formule \eqref{DefHamilt} montre que
$$
\begin{aligned}
\frac{\partial^2\mathcal H}{\partial q^2}\left(f(s,\cdot),0\right)\cdot(\Phi,\Psi)=\tfrac12\int_{\mathbf T^3\times(\mathbf R^3)^2\times\mathbf S^2}\mathbf D\Phi(s,x,v,x,v_*,\omega)\mathbf D\Psi(s,x,v,x,v_*,\omega) 
\\
\times f(s,x,v)f(s,x,v_*)((v-v_*)\cdot\omega))_+d\omega dvdv_*dx=\mathbf{Cov}(s,\Phi,\Psi)&\,,
\end{aligned}
$$
tandis que
$$
\begin{aligned}
\frac{\partial^2\mathcal H}{\partial p\partial q}\left(f(s,\cdot),0\right)\cdot(\Phi,\Psi)
\\
=\tfrac12\int_{\mathbf T^3\times(\mathbf R^3)^2\times\mathbf S^2}\mathbf D\Psi(t,x,v,x,v_*,\omega)(\Phi(s,x,v)f(s,x,v_*)+\Phi(s,x,v_*)f(s,x,v))
\\
\times((v-v_*)\cdot\omega))_+d\omega dvdv_*dx
\\
=\int_{\mathbf T^3\times(\mathbf R^3)^2\times\mathbf S^2}(\Phi(x,v')f(s,x,v'_*)+\Phi(x,v'_*)f(s,x,v')-\Phi(x,v)f(s,x,v_*)
\\
-\Phi(x,v_*)f(s,x,v))\Psi(x,v)((v-v_*)\cdot\omega))_+d\omega dvdv_*dx
\\
=\int_{\mathbf T^3\times\mathbf R^3}\Psi(x,v)\mathbf L_t\Phi(x,v)dvdx=\int_{\mathbf T^3\times\mathbf R^3}\Phi(x,v)\mathbf L^*_t\Psi(x,v)dvdx&\,.
\end{aligned}
$$
Par conséquent
$$
\begin{aligned}
\mathcal C(t,t,\phi,\psi)=&\,\mathcal C(0,0,A_{-t}\phi,A_{-t}\psi)+\int_0^t\mathcal C(s,s,A_{s-t}\psi,\mathbf L^*_t(A_{s-t}\phi))ds
\\
&+\int_0^t\mathcal C(s,s,A_{s-t}\phi,\mathbf L^*_t(A_{s-t}\psi))ds+\int_0^t\mathbf{Cov}(s,A_{s-t}\phi,A_{s-t}\psi)ds\,.
\end{aligned}
$$
On vient donc de déduire de l'équation de Hamilton--Jacobi \eqref{HJ2} le fait que la limite de $\mathcal C_\eps(t,t,\cdot,\cdot)$ vérifie la première équation du système \eqref{SystChat}, tout comme la covariance 
$\hat{\mathcal C}(t,t,\cdot,\cdot)$ de la solution de \eqref{BFluct}. Que la limite de $\mathcal C_\eps(t,s,\cdot,\cdot)$ vérifie également la seconde équation du système \eqref{SystChat} s'obtient par un calcul 
analogue, quoiqu'un peu plus long, que nous ne ferons donc pas ici, mais pour lequel nous renvoyons le lecteur à la fin de la preuve de la Proposition 5.5.2 dans \parencite{BoGalSRSim1}.
\end{proof}

\smallskip
Comme annoncé plus haut, il suffit maintenant de démontrer l'unicité des solutions de \eqref{SystChat} pour en déduire que la corrélation limite $\mathcal C$ du champ de fluctuations de la mesure empirique 
$\zeta^\eps_t$ autour de la solution de l'équation de Boltzmann coïncide avec la corrélation $\hat{\mathcal C}$ de la solution de \eqref{BFluct}. 

Après quoi, une fois démontré que $\zeta^\eps_t$ converge en loi vers un champ gaussien centré, on en déduira que ce champ gaussien est précisément la solution $\zeta_t$ de \eqref{BFluct}, puisqu'un champ 
gaussien centré est caractérisé par sa covariance.

\smallskip
Ces deux étapes bien distinctes de la démonstration du Théorème \ref{T-BFluct} (unicité et convergence vers un champ gaussien centré) ne reposent pas sur l'équation de Hamilton--Jacobi, et nous renvoyons 
le lecteur intéressé à en connaître les démonstrations aux sections 6.2 et 6.3, chapitre 6 de \parencite{BoGalSRSim1}.

\subsection{Formule de Spohn pour la covariance limite}


\textcite{Spohn81} propose encore une autre formule pour la covariance limite $\mathcal C(t,s,\psi,\phi)$. 

\begin{prop}\label{P-Spohn}
Sous les mêmes hypothèses qu'à la Proposition \ref{P-EvolCov}, et pour tous $0\le s\le t\le T^*$, l'on a
$$
\begin{aligned}
\mathcal C(t,s,\psi,\phi)=\int_{\mathbf T^3\times\mathbf R^3}U(t,s)^*\psi(z)\phi(z)f(s,z)dz
\\
+\int_0^t\int_{(\mathbf T^3\times\mathbf R^3)^2}R^{1,2}(f(\tau,\cdot),f(\tau,\cdot))(z,z_*)(U(t,\tau)^*\psi)(z)(U(s,\tau)^*\phi)(z_*) dzdz_*&\,,
\end{aligned}
$$
où $R^{1,2}$ est l'opérateur de recollision, défini comme suit:
$$
R^{1,2}(g,g)(z_1,z_2):=\delta_0(x_1-x_2)\int_{\mathbf S^2}(g(z'_1)g(z'_2)-g(z_1)g(z_2))((v_1-v_2)\cdot\omega)_+d\omega\,.
$$
\end{prop}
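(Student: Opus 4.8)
The plan is to start from the representation \eqref{DefSolFluct} of the limit covariance and to turn it into Spohn's formula by a single telescoping in time. By Proposition \ref{P-EvolCov} the limit covariance $\mathcal C$ and the covariance $\hat{\mathcal C}$ of the solution of \eqref{BFluct} both solve the integral system \eqref{SystChat} with initial condition \eqref{CondInChat}; granting uniqueness for that system one has $\mathcal C=\hat{\mathcal C}$, so that \eqref{DefSolFluct} holds with $\mathcal C$ in place of $\hat{\mathcal C}$. Using \eqref{CondInChat} to rewrite the first term, the starting point is
$$
\mathcal C(t,s,\psi,\phi)=\int_{\mathbf T^3\times\mathbf R^3}(U(t,0)^*\psi)(z)(U(s,0)^*\phi)(z)f^{in}(z)\,dz+\int_0^s\mathbf{Cov}(\tau,U(t,\tau)^*\psi,U(s,\tau)^*\phi)\,d\tau\,.
$$
Writing $\Psi_\tau:=U(t,\tau)^*\psi$ and $\Phi_\tau:=U(s,\tau)^*\phi$, I would introduce the interpolant
$$
A(\tau):=\int_{\mathbf T^3\times\mathbf R^3}\Psi_\tau(z)\Phi_\tau(z)f(\tau,z)\,dz\,,\qquad 0\le\tau\le s\,,
$$
and observe that $A(0)$ is exactly the first term above (since $f(0,\cdot)=f^{in}$), while $A(s)=\int U(t,s)^*\psi(z)\,\phi(z)\,f(s,z)\,dz$ is precisely the non-collisional term of Spohn's formula (since $U(s,s)^*=\mathrm{Id}$). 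Everything thus reduces to evaluating $A(s)-A(0)=\int_0^sA'(\tau)\,d\tau$.

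The main computation is the differentiation of $A$. The adjoint flow equation $\partial_\tau U(t,\tau)^*=-\mathcal L_\tau^*U(t,\tau)^*$ gives $\partial_\tau\Psi_\tau=-\mathcal L_\tau^*\Psi_\tau$ and similarly $\partial_\tau\Phi_\tau=-\mathcal L_\tau^*\Phi_\tau$, while $f$ solves the nonlinear Boltzmann equation $(\partial_\tau+v\cdot\nabla_x)f=\mathcal B(f)$, with $\mathcal B(f)=\tfrac12\mathbf L_\tau f$ because the collision part of the linearized operator satisfies $\mathbf L_\tau g=2Q(f,g)$ for the symmetric bilinear collision form $Q$. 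Since $\mathcal L_\tau^*=v\cdot\nabla_x+\mathbf L_\tau^*$, the free-transport contributions assemble into $\langle f,v\cdot\nabla_x(\Psi_\tau\Phi_\tau)\rangle-\langle v\cdot\nabla_xf,\Psi_\tau\Phi_\tau\rangle$, which vanishes after one integration by parts in $x$ on the torus. One is left with the purely collisional expression
$$
A'(\tau)=\tfrac12\langle\mathbf L_\tau f(\tau),\Psi_\tau\Phi_\tau\rangle-\langle f(\tau),(\mathbf L_\tau^*\Psi_\tau)\Phi_\tau\rangle-\langle f(\tau),\Psi_\tau(\mathbf L_\tau^*\Phi_\tau)\rangle\,.
$$

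The heart of the argument --- and the step I expect to be the main obstacle --- is to reorganize these three collision bilinear forms. Each is an integral over $(x,v,v_*,\omega)$ of a product of two values of $f$ against the kernel $((v-v_*)\cdot\omega)_+$. Applying first the involution $(v,v_*)\mapsto(v',v'_*)$ at fixed $\omega$ (which turns each factor $f(v')f(v'_*)$ into $f(v)f(v_*)$ while relabelling the test functions) and then the symmetry $(v,v_*,\omega)\mapsto(v_*,v,-\omega)$ --- exactly the two changes of variable used in Section \ref{SS-BoltzMath} to establish the weak formulation of the collision integral --- one reduces both $A'(\tau)$ and the target $\mathbf{Cov}-\langle R^{1,2}\rangle$ to integrals of the form $\int f(x,v)f(x,v_*)(\cdots)((v-v_*)\cdot\omega)_+$. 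A bookkeeping of the resulting terms (eight of each, built from $\Psi,\Psi_*,\Psi',\Psi'_*$ and $\Phi,\Phi_*,\Phi',\Phi'_*$) then yields the pointwise identity
$$
A'(\tau)=\mathbf{Cov}(\tau,\Psi_\tau,\Phi_\tau)-\langle R^{1,2}(f(\tau,\cdot),f(\tau,\cdot)),\Psi_\tau\otimes\Phi_\tau\rangle\,,
$$
where, after integrating out the factor $\delta_0(x_1-x_2)$ in $R^{1,2}$, one has $\langle R^{1,2}(f,f),\Psi\otimes\Phi\rangle=\int(f(x,v')f(x,v'_*)-f(x,v)f(x,v_*))\Psi(x,v)\Phi(x,v_*)((v-v_*)\cdot\omega)_+\,d\omega\,dv\,dv_*\,dx$.

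Integrating this identity over $\tau\in[0,s]$ and substituting $A(0)=A(s)-\int_0^sA'(\tau)\,d\tau$ into the starting representation, the two copies of $\int_0^s\mathbf{Cov}(\tau,\Psi_\tau,\Phi_\tau)\,d\tau$ cancel and one is left with
$$
\mathcal C(t,s,\psi,\phi)=\int_{\mathbf T^3\times\mathbf R^3}U(t,s)^*\psi(z)\,\phi(z)\,f(s,z)\,dz+\int_0^s\langle R^{1,2}(f(\tau,\cdot),f(\tau,\cdot)),\Psi_\tau\otimes\Phi_\tau\rangle\,d\tau\,,
$$
which is Spohn's formula (the recollision integral running over the relevant interval $[0,s]$, $s\le t$). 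Apart from the collision-term identity, the only points to check are the differentiability of $A$ and the convergence of all the integrals involved, which follow from the Gaussian control of $f$ inherited from \eqref{BndCondIni} and the boundedness of the adjoint semigroup $U(t,s)^*$ on the corresponding weighted spaces.
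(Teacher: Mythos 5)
Your proposal is correct and follows essentially the same route as the paper: your interpolant $A(\tau)=\int\Psi_\tau\Phi_\tau f(\tau,\cdot)\,dz$ is the weak form of the operator $U(t,\tau)f(\tau,\cdot)U(s,\tau)^*$ that the paper differentiates in $\tau$, and your claimed identity $A'(\tau)=\mathbf{Cov}(\tau,\Psi_\tau,\Phi_\tau)-\langle R^{1,2}(f,f),\Psi_\tau\otimes\Phi_\tau\rangle$ is exactly the paper's Lemme~\ref{L-Sigmat} tested against $\Psi_\tau\Phi_\tau$ (which the paper likewise reduces to the standard collision changes of variables and leaves to the reader). Your remark that the recollision integral should run over $[0,s]$ rather than $[0,t]$ is consistent with \eqref{DefSolFluct} and with the paper's own derivation.
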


\smallskip
Avant de donner la démonstration de cette proposition, expliquons la terminologie d'{\og  opérateur de recollision\fg}   désignant $R^{1,2}$. Pour cela, revenons à l'équation de Liouville \eqref{LiouvilleDistrib}, et écrivons
l'équation vérifiée par $\mathbb F_{n:2}$. On intègre donc chaque membre de \eqref{LiouvilleDistrib} par rapport aux variables $z_3,\ldots,z_n$, pour trouver que
$$
\begin{aligned}
(\partial_t+\sum_{i=1}^2v_i\cdot\nabla_{x_i})\mathbb F_{n:2}(t,z_1,z_2)
\\
=(n-2)\eps^2\mathcal B_\eps^{1,3}(\mathbb F_{n:3})(t,z_1,z_2)+(n-2)\eps^2\mathcal B_\eps^{2,3}(\mathbb F_{n:3})(t,z_1,z_2)
\\
+\mathbb F_{n:2}(t,z'_1,z'_2)\Big|_{\text{dist}(x_1,x_2)=\eps+0}((v_1-v_2)\cdot n_{12})_+\delta_0(x_1-x_2)
\\
-\mathbb F_{n:2}(t,z_1,z_2)\Big|_{\text{dist}(x_1,x_2)=\eps+0}((v_1-v_2)\cdot n_{12})_-\delta_0(x_1-x_2)&\,.
\end{aligned}
$$
Les intégrales de collision sur la seconde ligne sont données par 
$$
\begin{aligned}
\mathcal B^{1,3}_\eps(\mathbb F_{n:2})(t,z_1,z_2)
\\
=\int_{\mathbf R^3\times\mathbf S^2}(\mathbb F_{n:3}(t,z'_1,z_2,x_1-\eps\omega,v'_3)-\mathbb F_{n:3}(t,z_1,z_2,x_1+\eps\omega,v_3))((v_1-v_3)\cdot\omega)_+d\omega dv_3&\,,
\\
\mathcal B^{2,3}_\eps(\mathbb F_{n:2})(t,z_1,z_2)
\\
=\int_{\mathbf R^3\times\mathbf S^2}(\mathbb F_{n:3}(t,z'_1,z_2,x_2-\eps\omega,v'_3)-\mathbb F_{n:3}(t,z_1,z_2,x_2+\eps\omega,v_3))((v_1-v_3)\cdot\omega)_+d\omega dv_3&\,.
\end{aligned}
$$
Les deux derniers termes sur les troisièmes et quatrièmes lignes traduisent les collisions entre les molécules n$^\textbf{os}$ $1$ et $2$. Comme ces molécules sont celles que l'on a choisies parmi les $n$ 
en considérant la marginale à deux corps $F_{n:2}(t,z_1,z_2)$, ces collisions correspondent à des corrélations entre les molécules en question, ce qui semble contredire l'hypothèse de chaos moléculaire
de Boltzmann, utilisée pour factoriser 
$$
\mathbb F_{n:2}(t,z'_1,z'_2)\Big|_{\text{dist}(x_1,x_2)=\eps+0}\mathbf 1_{(v_1-v_2)\cdot n_{12}>0}\quad\text{ et }\quad\mathbb F_{n:2}(t,z_1,z_2)\Big|_{\text{dist}(x_1,x_2)=\eps+0}\mathbf 1_{(v_1-v_2)\cdot n_{12}<0}
$$ 
dans le formalisme canonique utilisé dans \parencite{Lan75,CIP,GSRT,FG2014}, ou, de façon équivalente, la corrélation $F^\eps_2$ dans le formalisme grand-canonique utilisé ici, afin d'arriver à l'équation 
de Boltzmann. 

Intégrons donc ces deux derniers termes contre une fonction test $\phi\equiv\phi(t,z_1,z_2)$ et faisons tendre $\eps$ vers $0$ en appliquant l'hypothèse de chaos moléculaire en supposant que $\mathbb F_{n:2}$
se factorise en $g(t,\cdot)\otimes g(t,\cdot)$ pour les particules sur le point de collisionner: il vient
$$
\begin{aligned}
\frac1{\eps^2}\int_{(\mathbf T^3\times\mathbf R^3)^2}\mathbb F_{n:2}(t,z'_1,z'_2)\Big|_{\text{dist}(x_1,x_2)=\eps+0}((v_1-v_2)\cdot n_{12})_+\delta_0(x_1-x_2)\phi(z_1,z_2)dz_1dz_2
\\
\to\int_{\mathbf T^3\times(\mathbf R^3)^2\times\mathbf S^2}g(t,x,v'_1)g(t,x,v'_2)((v_1-v_2)\cdot \omega)_+\phi(x,v_1,x,v_2)dxdv_1dv_2d\omega&\,,
\end{aligned}
$$
et
$$
\begin{aligned}
\frac1{\eps^2}\int_{(\mathbf T^3\times\mathbf R^3)^2}\mathbb F_{n:2}(t,z_1,z_2)\Big|_{\text{dist}(x_1,x_2)=\eps+0}((v_1-v_2)\cdot n_{12})_-\delta_0(x_1-x_2)\phi(z_1,z_2)dz_1dz_2
\\
\to\int_{\mathbf T^3\times(\mathbf R^3)^2\times\mathbf S^2}g(t,x,v_1)g(t,x,v_2)((v_1-v_2)\cdot \omega)_+\phi(x,v_1,x,v_2)dxdv_1dv_2d\omega&\,.
\end{aligned}
$$
Donc
$$
\begin{aligned}
\frac1{\eps^2}\int_{(\mathbf T^3\times\mathbf R^3)^2}\mathbb F_{n:2}(t,z'_1,z'_2)\Big|_{\text{dist}(x_1,x_2)=\eps+0}((v_1-v_2)\cdot n_{12})_+\delta_0(x_1-x_2)\phi(z_1,z_2)dz_1dz_2
\\
-\frac1{\eps^2}\int_{(\mathbf T^3\times\mathbf R^3)^2}\mathbb F_{n:2}(t,z_1,z_2)\Big|_{\text{dist}(x_1,x_2)=\eps+0}((v_1-v_2)\cdot n_{12})_-\delta_0(x_1-x_2)\phi(z_1,z_2)dz_1dz_2
\\
\to\int_{(\mathbf T^3\times\mathbf R^3)^2}R^{1,2}(g,g)(t,z_1,z_2)\phi(z_1,z_2)dz_1dz_2&\,,
\end{aligned}
$$
ce qui montre le lien entre les recollisions des molécules n$^\textbf{os}$ $1$ et $2$ et l'opérateur $R^{1,2}$.

\smallskip
La formule de Spohn dans la proposition ci-dessus est donc particulièrement intéressante  car elle met en évidence la contribution des recollisions, qui sont des événements rares (noter qu'on a dû diviser leur contribution
par $\eps^2$ pour les mettre à une échelle observable dans la limite de Boltzmann--Grad), à la covariance limite. Quoique rares, ces recollisions ont un effet sur l'instabilité de la dynamique et sur les fluctuations,
puisqu'elles correspondent à des collisions entre des sphères de rayon négligeable --- rappelons que dans la dynamique du billard, plus le rayon des molécules est petit, plus grande est l'instabilité de la
dynamique.

\begin{proof}[Démonstration de la Proposition \ref{P-Spohn}]
On va se limiter au cas où $s=t$ --- le cas $s<t$ se traitant de façon analogue. La formule est évidemment correcte pour $t=s=0$, puisqu'on sait que
$$
\mathcal C(0,0,\phi,\psi)=\mathbb E(\langle\zeta_0,\phi\rangle\langle\zeta_0,\psi\rangle)=\int_{\mathbf T^3\times\mathbf R^3}\phi(z)\psi(z)f^{in}(z)dz\,.
$$
Considérons l'opérateur
$$
\begin{aligned}
\Sigma_t\psi(z_1):=-\delta_0(x_1-x_2)\int_{\mathbf R^3\times\mathbf R^3}(f(t,z_1)f(t,z_2)+f(t,z'_1)f(t,z'_2))\mathbf D\psi(z_1,z_2,\omega)
\\
\times((v_1-v_2)\cdot\omega)_+dz_1d\omega&\,.
\end{aligned}
$$

\begin{lemm}\label{L-Sigmat}
L'opérateur $\Sigma_t$ représente la covariance du bruit $\mathbf{Cov}(t,\cdot,\cdot)$, au sens où
$$
\mathbf{Cov}(t,\phi,\psi)=\int_{\mathbf T^3\times\mathbf R^3}\phi(z)\Sigma_t\psi(z)dz\,.
$$
Cet opérateur vérifie
$$
\begin{aligned}
\Sigma_t\phi(z)=&-(f(t,\cdot)\mathcal L^*_t+\mathcal L_tf(t,\cdot))\phi(z)+\phi(z)\partial_tf(t,z)
\\
&+\int_{\mathbf T^3\times\mathbf R^3}R^{1,2}(f(t,\cdot),f(t,\cdot))(z,z_*)\phi(z_*)dz_*\,.
\end{aligned}
$$
\end{lemm}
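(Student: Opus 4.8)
Je traiterais séparément les deux affirmations du lemme, qui sont de natures bien distinctes: la première est une identité de symétrie, la seconde un calcul algébrique s'appuyant sur l'équation de Boltzmann vérifiée par $f$. Dans toute la suite j'abrège $f=f(t,x,v)$, $f_*=f(t,x,v_*)$, $f'=f(t,x,v')$, $f'_*=f(t,x,v'_*)$ (et de même $\phi,\phi_*,\phi',\phi'_*$ pour $\phi$, et $\psi,\dots$ pour $\psi$), les vitesses $v',v'_*$ étant données par \eqref{LoiColl}.

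Pour la représentation de la covariance, le plan est de partir de la définition de $\mathbf{Cov}(t,\cdot,\cdot)$ et de la ramener à $\int\phi\,\Sigma_t\psi$ par les arguments de la section \ref{SS-BoltzMath}. En notant $\int(\cdots)$ l'intégrale de $(\cdots)((v-v_*)\cdot\omega)_+$ sur $\mathbf T^3\times(\mathbf R^3)^2\times\mathbf S^2$, on a $\mathbf{Cov}(t,\phi,\psi)=\tfrac12\int\mathbf D\phi\,\mathbf D\psi\,ff_*$. La symétrie combinée $(v,v_*)\mapsto(v_*,v)$ et $\omega\mapsto-\omega$ laisse invariants $\mathbf D\psi$, $ff_*$ et $((v-v_*)\cdot\omega)_+$ tout en échangeant $\phi'\leftrightarrow\phi'_*$ et $\phi\leftrightarrow\phi_*$, ce qui absorbe le facteur $\tfrac12$ et donne
$$\mathbf{Cov}(t,\phi,\psi)=\int(\phi'-\phi)\,\mathbf D\psi\,ff_*\,.$$
Il reste le seul point un peu délicat: identifier $\int\phi'\,\mathbf D\psi\,ff_*$ avec $-\int\phi\,\mathbf D\psi\,f'f'_*$. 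On applique d'abord l'involution $(v,v_*)\mapsto(v',v'_*)$ à $\omega$ fixé, qui conserve $dv\,dv_*$, vérifie $(v')'=v$, $(v'_*)'=v_*$ donc transforme $\mathbf D\psi$ en $-\mathbf D\psi$, et change $((v-v_*)\cdot\omega)_+$ en $((v-v_*)\cdot\omega)_-$; puis le changement $\omega\mapsto-\omega$, qui laisse $v',v'_*$ invariants et rétablit $(\cdots)_+$. On obtient ainsi
$$\int\phi\,\Sigma_t\psi=-\int\phi\,(ff_*+f'f'_*)\,\mathbf D\psi=\int(\phi'-\phi)\,\mathbf D\psi\,ff_*=\mathbf{Cov}(t,\phi,\psi)\,.$$

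Pour la seconde affirmation, l'observation décisive est que \emph{les termes de transport se compensent}. En écrivant $\mathcal L^*_t=v\cdot\nabla_x+\mathbf L^*_t$ et $\mathcal L_t=-v\cdot\nabla_x+\mathbf L_t$, et en utilisant que $f$ résout l'équation de Boltzmann, soit $\partial_tf=\mathcal B(f)-v\cdot\nabla_xf$, la somme des parties de transport de $-f\mathcal L^*_t\phi$, $-\mathcal L_t(f\phi)$ et $\phi\,\partial_tf$ vaut $-f\,v\cdot\nabla_x\phi+v\cdot\nabla_x(f\phi)-\phi\,v\cdot\nabla_xf=0$, en accord avec la présence du facteur $\delta_0(x_1-x_2)$ dans $\Sigma_t$. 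Reste alors l'identité purement collisionnelle (intégrales en $(v_*,\omega)$ seulement, à $(x,v)$ fixé) $\Sigma_t\phi=-f\,\mathbf L^*_t\phi-\mathbf L_t(f\phi)+\phi\,\mathcal B(f)+\int R^{1,2}(f,f)(z,z_*)\phi(z_*)\,dz_*$, que je vérifierais par développement direct. On dispose des formes explicites $\mathbf L^*_t\phi=\int\mathbf D\phi\,f_*$, $\mathcal B(f)=\int(f'f'_*-ff_*)$, $\mathbf L_t(f\phi)=\int((f\phi)'f'_*+(f\phi)'_*f'-(f\phi)f_*-(f\phi)_*f)$ (linéarisé de $\mathcal B$ en $f$, lu sur la formule \eqref{dHdq} et la preuve de la Proposition \ref{P-EvolCov}), et, grâce au $\delta_0(x-x_*)$, $\int R^{1,2}(f,f)(z,z_*)\phi(z_*)\,dz_*=\int(f'f'_*-ff_*)\phi_*$. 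En développant chacun de ces opérateurs sur les huit monômes $(ff_*,\,f'f'_*)\times(\phi,\phi_*,\phi',\phi'_*)$ et en recollant les coefficients, tout se réduit à $-\int(ff_*+f'f'_*)\mathbf D\phi$, c'est-à-dire à $\Sigma_t\phi$.

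L'obstacle principal n'est donc pas conceptuel mais tient au suivi scrupuleux des signes: dans la première partie, la bonne articulation de l'involution de collision et du renversement $\omega\mapsto-\omega$ pour passer de $(\cdots)_+$ à $(\cdots)_-$ puis revenir; dans la seconde, le recours \emph{indispensable} à l'équation de Boltzmann satisfaite par $f$ pour annuler le transport — sans quoi l'identité serait fausse et $\Sigma_t$ cesserait d'être un opérateur purement local en $x$. Une fois ces deux points fixés, le reste est un calcul algébrique de routine.
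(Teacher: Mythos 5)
Le texte de l'exposé ne contient aucune démonstration à laquelle comparer la vôtre : le lemme y est explicitement admis, sa preuve étant qualifiée de {\og calcul élémentaire laissé au lecteur\fg}. Votre rédaction fournit précisément ce calcul, et elle est correcte. Pour la première identité, la symétrie combinée $(v,v_*)\mapsto(v_*,v)$, $\omega\mapsto-\omega$ absorbe bien le facteur $\tfrac12$, et l'involution de collision suivie de $\omega\mapsto-\omega$ convertit bien $\int\phi'\,\mathbf D\psi\,ff_*\,((v-v_*)\cdot\omega)_+$ en $-\int\phi\,\mathbf D\psi\,f'f'_*\,((v-v_*)\cdot\omega)_+$ : ce sont exactement les manipulations rappelées à la section \ref{SS-BoltzMath}, et l'on aboutit à $-\int\phi\,(ff_*+f'f'_*)\,\mathbf D\psi=\int\phi\,\Sigma_t\psi$. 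Pour la seconde identité, votre lecture de $\mathcal L_tf(t,\cdot)\phi$ comme $\mathcal L_t(f\phi)$ est la bonne (c'est la composition d'opérateurs qu'impose la notation du lemme), l'annulation des termes de transport au moyen de $\partial_tf+v\cdot\nabla_xf=\mathcal B(f)$ est effectivement indispensable et cohérente avec le caractère local en $x$ de $\Sigma_t$, et le recollement des coefficients des huit monômes $(ff_*,\,f'f'_*)\times(\phi,\phi_*,\phi',\phi'_*)$ --- que j'ai refait --- redonne bien $(ff_*+f'f'_*)(\phi+\phi_*-\phi'-\phi'_*)=-(ff_*+f'f'_*)\,\mathbf D\phi$ sous l'intégrale. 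Rien à corriger.
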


\smallskip
Admettons ce lemme, dont la démonstration résulte d'un calcul élémentaire laissé au lecteur. On part de la formule \eqref{DefSolFluct} avec $s=t$, soit
$$
\mathcal C(t,t,\phi,\psi)=\mathbb E(\langle\zeta_0,U(t,0)^*\phi\rangle\langle\zeta_0,U(t,0)^*\psi\rangle)+\int_0^t\mathbf{Cov}(\tau,U(t,\tau)^*\phi,U(t,\tau)^*\psi)d\tau\,,
$$
formule que l'on transforme comme suit:
$$
\begin{aligned}
\mathcal C(t,t,\phi,\psi)=&\int_{\mathbf T^3\times\mathbf R^3}U(t,0)^*\phi(z)U(t,0)^*\psi(z)f^{in}(z)dz
\\
&+\int_0^t\left(\int_{\mathbf T^3\times\mathbf R^3}\phi(z)U(t,\tau)\Sigma_tU(t,\tau)^*\psi(z)dz\right)d\tau\,.
\end{aligned}
$$
Utilisons maintenant la deuxième formule du lemme, que l'on écrit sous la forme
$$
\begin{aligned}
U(t,\tau)\Sigma_tU(t,\tau)^*\psi(z)=\partial_\tau\left(U(t,\tau)f(\tau,\cdot)U(t,\tau)^*\right)\psi(z)
\\
+U(t,\tau)\int_{\mathbf T^3\times\mathbf R^3}R^{1,2}(f(t,\cdot),f(t,\cdot))(z,z_*)U(t,\tau)^*\phi(z_*)dz_*&\,,
\end{aligned}
$$
d'où l'on tire que
$$
\begin{aligned}
\mathcal C(t,t,\phi,\psi)=\int_{\mathbf T^3\times\mathbf R^3}U(t,0)^*\phi(z)U(t,0)^*\psi(z)f^{in}(z)dz
\\
+\int_0^t\int_{\mathbf T^3\times\mathbf R^3}\phi(z)\partial_\tau\left(U(t,\tau)f(\tau,\cdot)U(t,\tau)^*\right)\psi(z)dzd\tau
\\
+\int_0^t\left(\int_{(\mathbf T^3\times\mathbf R^3)^2}(U(t,\tau)^*\phi)(z)R^{1,2}(f(t,\cdot),f(t,\cdot))(z,z_*)(U(t,\tau)^*\psi)(z_*)dz\right)d\tau&\,.
\end{aligned}
$$
Par intégration en $\tau$, la seconde intégrale au membre de droite se combine avec le premier terme au membre de droite pour donner ce qui est bien la formule annoncée dans le cas $s=t$.
\end{proof}

\section{Esquisse de preuve pour le Théorème \ref{T-HJ}}


On se limitera dans cette section à quelques trop brèves indications. On espère qu'elles inciteront les lecteurs à étudier en détail \parencite{BoGalSRSim1}, et leur serviront de guide de lecture.

Une première remarque s'impose: comme pour la preuve du théorème de Lanford, la démonstration du Théorème \ref{T-HJ} passe par une représentation {\og  assez explicite\fg}   des cumulants de tous ordres, 
plutôt que par une analyse des solutions de l'équation \eqref{HJ2} --- sans même parler d'un passage à la limite pour $\eps\to 0^+$ dans \eqref{HJeps} pour en déduire l'équation \eqref{HJ1}.

\subsection{Représentation des corrélations}


Commençons par la solution de l'équation de Boltzmann \eqref{BoltzEq}
$$
(\partial_t+v_1\cdot\nabla_{x_1})F_1(t,z_1)=\mathcal B_+(F_1)(t,z_1)-\mathcal B_-(F_1)(t,z_1)\,,
$$
où l'intégrale des collisions est décomposée en termes de perte $\mathcal B_-$ et de gain $\mathcal B_+$:
$$
\begin{aligned}
\mathcal B_-(F_1)(t,z_1)\!=\!F_1(t,x_1,v_1)\int_{\mathbf R^3\times\mathbf S^2}F_1(t,x_1,v_2)((v_1-v_2)\cdot\omega)_+dv_2d\omega\,,
\\
\mathcal B_+(F_1)(t,z_1)=\int_{\mathbf R^3\times\mathbf S^2}\!F_1(t,x_1,v'_1)F_1(t,x_1,v'_2)((v_1-v_2)\cdot\omega)_+dv_2d\omega\,,
\end{aligned}
$$
avec 
$$
v'_1=v_1-((v_1-v_2)\cdot\omega)\omega\,,\qquad v'_2=v_2+((v_1-v_2)\cdot\omega)\omega\,.
$$
On exprime ensuite $F_1$ par la formule de Duhamel
$$
\begin{aligned}
F_1(t,z_1)=F_1(0,x_1-tv_1,v_1)+\int_0^t\mathcal B_+(F_1)(t-s,x-sv_1,v_1)ds
\\
-\int_0^t\mathcal B_-(F_1)(t-s,x-sv_1,v_1)ds&\,,
\end{aligned}
$$
puis l'on itère cette formule en exprimant à nouveau $F_1$ par cette même formule dans les termes de perte $\mathcal B_-$ et de gain $\mathcal B_+$. On aboutit ainsi à représenter $F_1$ par une série 
de la forme
\begin{equation}\label{SerF1}
F_1(t,z_1)=\sum_{n\ge 0}\sum_{\mathbb A_{1,n}}\int_{([0,t_1]\times\mathbf R^3\times\mathbf S^2)^{n-1}}(f^{in})^{\otimes n}(\Psi_{1,n}(0))S(\Psi_{1,n})dT_{2,n}dV_{2,n}d\Omega_{2,n}\,,
\end{equation}
où $V_{2,n}:=(v_2,\ldots,v_n)$, avec des définitions similaires de $\Omega_{2,n}$ et de $T_{2,n}$, à ceci près que l'on suppose en plus $0<t_n<t_{n-1}<\ldots<t_2<t_1$. Chaque terme dans cette série est 
représenté par un arbre\footnote{Pour ce qui est des arbres et des graphes, on utilisera la terminologie de l'annexe du chapitre IV de \parencite{BourbakiLie46}.} $\mathbb A_{1,n}$ enraciné\footnote{C'est-à-dire
avec un sommet distingué.} en $1$ avec $n-1$ points de ramification $a_i\in\{2,\ldots,n\}$ correspondant à des instants de collision $t_2>\ldots>t_n>0$ pour $i=2,\ldots,n$, comme 
sur la figure \ref{F-PsiTraj}. La pseudo-trajectoire $\Psi_{1,n}$ est définie par la règle suivante:

\noindent
(a) sur $]t_i,t_{i-1}[$, on fait évoluer un groupe de $i-1$ particules par la dynamique du transport libre $Z_{i-1}\mapsto(X_{i-1}-(t_{i-1}-t)V_{i-1},V_{i-1})$;

\noindent
(b) au temps $t_i$, on ajoute la particule n$^\mathrm{o}\,i$ à la position $x_{a_i}(t_i)$ avec la vitesse $v_i$;

\noindent
(c) si $v_{a_i}(t_i+0)-v_i)\cdot\omega_i>0$, on effectue la collision en remplaçant $(v_{a_i}(t_i+0),v_i)$ par $(v'_{a_i}(t_i+0),v'_i)$ au moyen des relations de collision paramétrées par $\omega_i$, i.e.
$$
(v'_{a_i}(t_i+0),v'_i)=(v_{a_i}(t_i+0)-((v_{a_i}(t_i+0)-v_i)\cdot\omega_i)\omega_i,v_i+((v_{a_i}(t_i+0)-v_i)\cdot\omega_i)\omega_i)
$$
et on itère le processus ci-dessus en posant
$$
(v_{a_i}(t_i-0),v_i(t_i-0))=(v'_{a_i}(t_i+0),v'_i)\,;
$$
si $(v_{a_i}(t_i+0)-v_i)\cdot\omega_i>0$, on poursuit de même sans changer les vitesses $v_{a_i}(t_i+0)$ et $v_i$ (ce dernier cas correspondant à la contribution des termes de perte $\mathcal B_-$).

Ces règles définissent ainsi une pseudo-trajectoire $\Psi_n(t)$ issue de $z_1$ à l'instant $t$, et dont la valeur à $t=0$ est $\Psi_n(0)\in(\mathbf T^3\times\mathbf R^3)^n$.

Dans la série ci-dessus, on pose enfin
$$
S(\Psi_{1,n}):=\prod_{i=1}^n\pm((v-v_{a_i}(t_i+0))\cdot\omega_i)_+\,,
$$
où le préfacteur $\pm$ vaut $+$ en cas de collision avec changement de vitesse, et $-$ dans le cas contraire.

Une représentation analogue existe pour $F^\eps_1$ avant le passage à la limite $\eps\to 0^+$:
\begin{equation}\label{SerF1eps}
F^\eps_1(t,z_1)=\sum_{n\ge 0}\sum_{\mathbb G_{1,n}}\int_{\mathcal E^\eps}F^\eps_n(\Psi^\eps_{1,n}(0))S^\eps(\Psi_{1,n})dT_{2,n}dV_{2,n}d\Omega_{2,n}\,.
\end{equation}

Les différences avec le calcul de $F_1$ sont les suivantes:

\noindent
(a') sur $]t_i,t_{i-1}[$, on fait évoluer un groupe de $i-1$ particules par la dynamique du billard à $i-1$ corps \eqref{Newton}-\eqref{Collxk}-\eqref{Collvkvj}, au lieu du transport libre; il y a donc en général
des recollisions, comme expliqué dans la section précédente, de sorte que cette représentation est paramétrée par un graphe $\mathbb G_{1,n}$ pouvant présenter des circuits (chaque recollision créant
un circuit: voir figure \ref{F-PsiTraj}), et non plus par un arbre;

\noindent
(b') au temps $t_i$, on ajoute la particule n$^\mathrm{o}\,i$ à la position $x_{a_i}(t_i)+\eps\omega_i$ au lieu de $x_{a_i}(t_i)$ avec la vitesse $v_i$;

\noindent
(c') on intègre sur les paramètres admissibles $\mathcal E^\eps$ au lieu de $([0,t_1]\times\mathbf R^3\times\mathbf S^2)^{n-1}$ avec la seule contrainte $0<t_n<\ldots<t_2<t_1$, car les particules ne doivent 
se recouvrir à aucun instant. En particulier $\Psi^\eps_n(0)\in\Gamma^\eps_n$.


\begin{figure}\label{F-PsiTraj}

\begin{center}
\includegraphics[width=5cm]{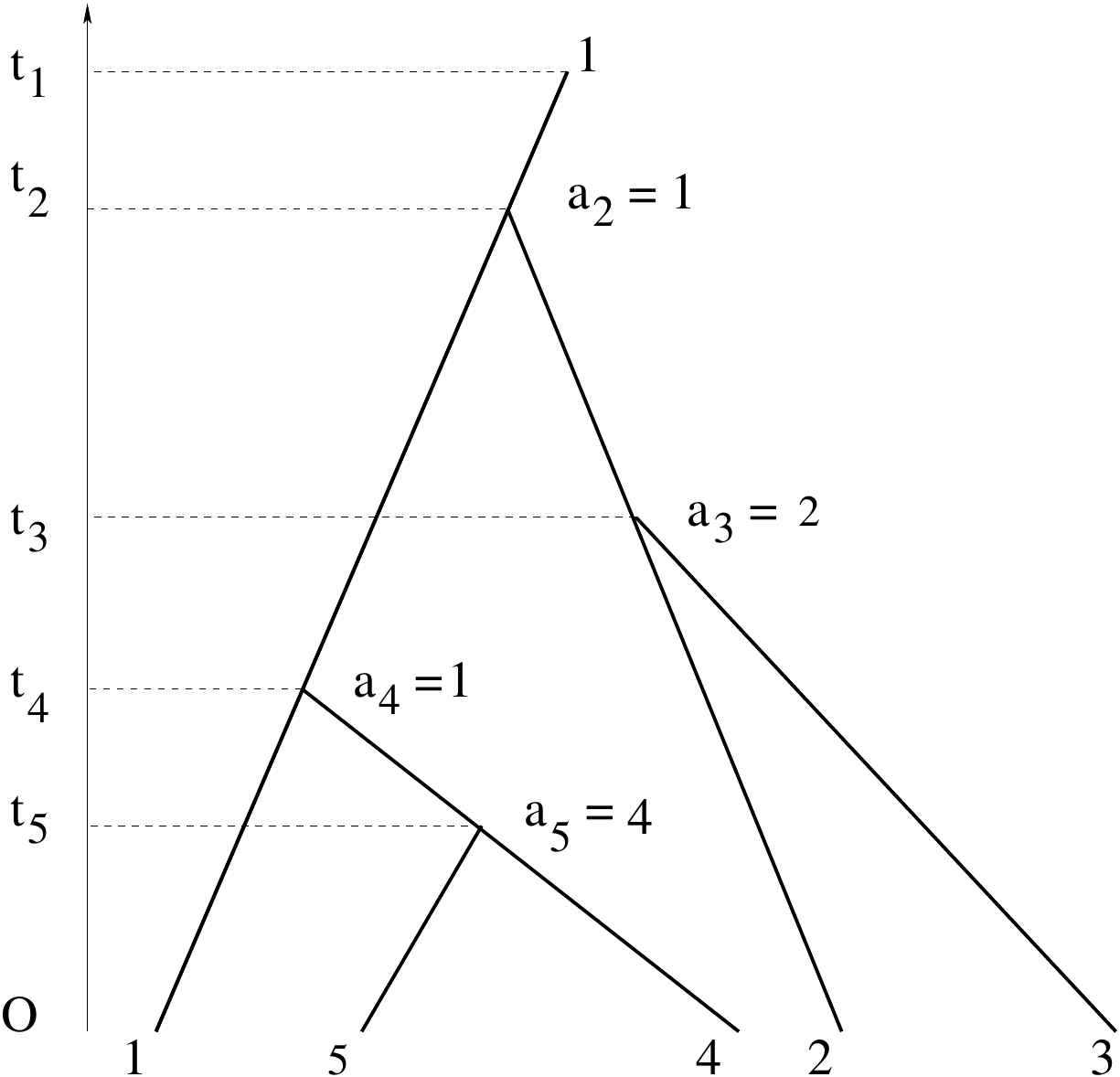}\hskip 0.5cm\includegraphics[width=5cm]{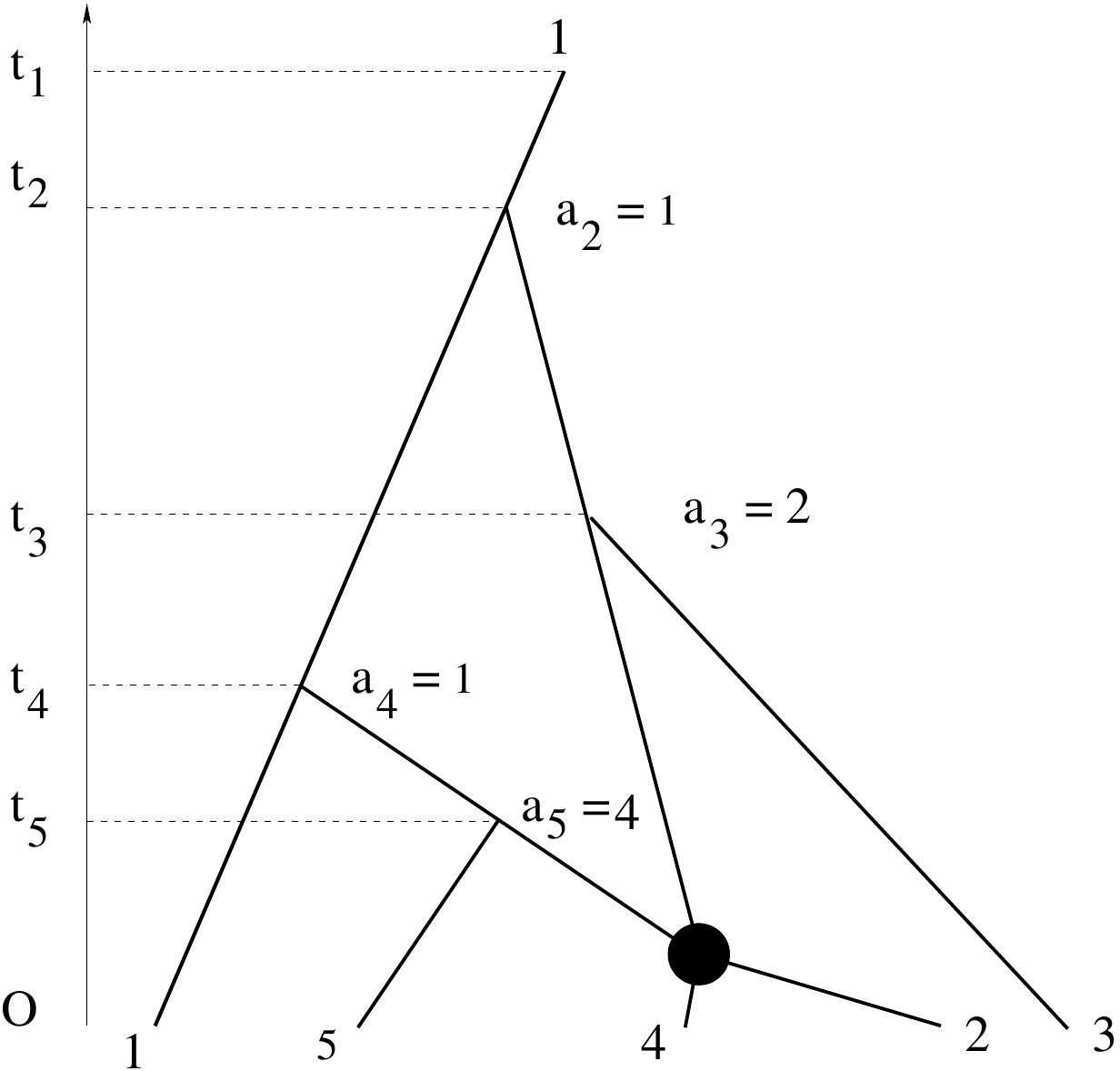}

\caption{À gauche un arbre de type $\mathbb A_{1,5}$ à $5$ molécules correspondant à la dynamique de l'équation de Boltzmann. À droite, un graphe de type $\mathbb G_{1,5}$ à $5$ molécules correspondant 
à la dynamique \eqref{Newton}-\eqref{Collxk}-\eqref{Collvkvj}. L'intersection signalée par un disque noir désigne une {\og  recollision\fg}   entre les molécules n$^\mathrm{os}2$ et $4$ avant que celles-ci n'entrent en collision 
avec la molécule n$^\mathrm{o}1$. Lorsqu'à l'instant $t_4-0$ la particule n$^\mathrm{o}4$ est ajoutée aux particules n$^\mathrm{os}1,2,3$, elle n'en est pas indépendante car elle a déjà subi une collision avec la 
molécule n$^\mathrm{o}2$ auparavant (dans le cas présent, avant l'instant $t_5$). Le triangle de sommets les points de ramification $a_2, a_4$ et le disque noir correspondant à la recollision entre molécules n$^\mathrm{os} \,2$ 
et $4$ est un circuit dans le graphe $\mathbb G_{1,5}$.}

\end{center}

\end{figure}


Évidemment, il existe une représentation analogue à \eqref{SerF1eps} pour les corrélations d'ordre supérieur $F^\eps_k$. Au lieu d'un seul arbre, elle met en jeu $k$ graphes de sommets marqués $1,2,\ldots,k$, 
avec $n_1,\ldots,n_k$ arêtes, et ces graphes peuvent bien sûr interagir.

\subsection{Représentation des cumulants}


Examinons le cas du second cumulant $f^\eps_2:=\mu_\eps(F^\eps_2-F^\eps_1\otimes F^\eps_1)$. Comme expliqué plus haut, la représentation de $F^\eps_2$ met en jeu deux graphes avec chacun un sommet
marqué, notés $1$ et $2$. Deux cas se présentent alors: (i) ces deux graphes sont disconnectés, c'est-à-dire restent à une distance supérieure à $\eps$ sur l'intervalle de temps $[0,t]$, (ii) à un (ou plusieurs) 
instants dans l'intervalle de temps $[0,t]$, ces deux graphes subissent une recollision \textit{externe}, mettant en jeu une arête de chacun des deux graphes. 

Les graphes disconnectés ne sont pas indépendants, car ils sont corrélés par la relation d'exclusion (le fait d'être à distance supérieure à $\eps$ sur $[0,t]$). De même que, d'après \eqref{DefFnin}
$$
\mathbb F^{in}_2(z_1,z_2)=f^{in}\otimes f^{in}(z_1,z_2)-f^{in}(z_1)f^{in}(z_2)\mathbf 1_{|x_1-x_2|\le\eps}\,,
$$
(où on peut penser aux points $z_1$ et $z_2$ comme à deux arbres triviaux à un seul sommet), on peut représenter la contribution des graphes disconnectés comme différence entre le carré tensoriel de $F^\eps_1$
et celle où les deux graphes de sommets marqués $1$ et $2$ se trouvent à une distance inférieure à $\eps$ à un (ou plusieurs) instants dans l'intervalle de temps $[0,t]$ sans transformation de vitesse, situation qu'on 
désigne du nom de {\og  recouvrement\fg}.


\begin{figure}

\begin{center}

\includegraphics[width=3cm]{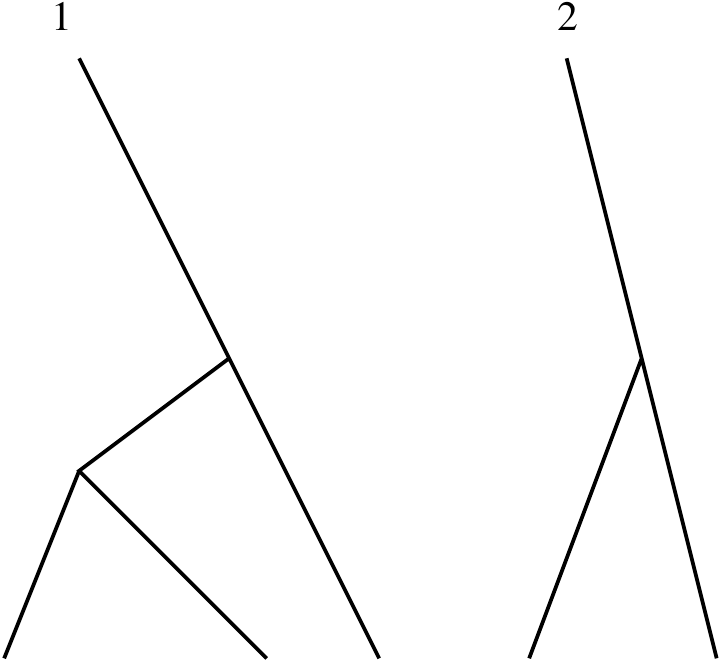}\hskip .5cm\includegraphics[width=3cm]{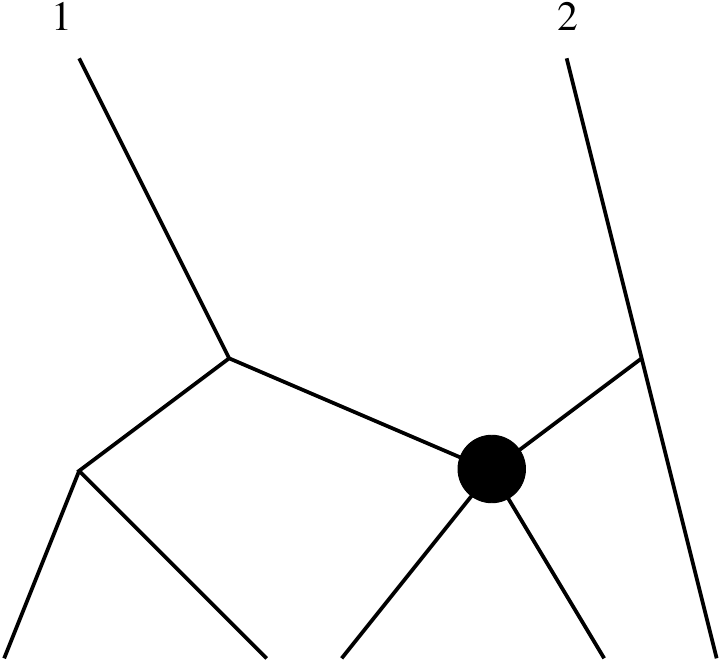}\hskip .5cm\includegraphics[width=3cm]{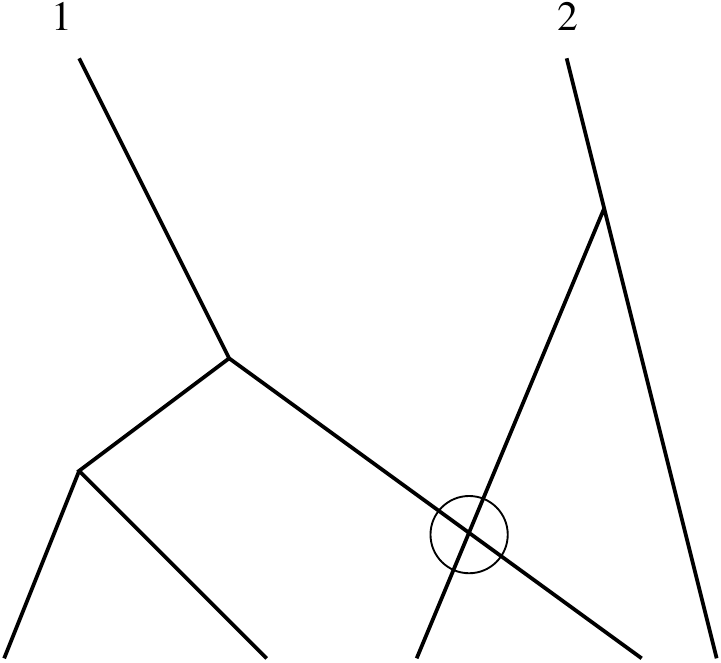}

\caption{À gauche les arbres $(1)$ et $(2)$ sont disconnectés, ce qu'on note $(1)\not\sim(2)$; au milieu, les arbres $(1)$ et $(2)$ subissent une recollision externe (disque noir), ce qu'on note $(1)\sim_r(2)$;
enfin, à droite, les arbres $(1)$ et $(2)$ sont en situation de recouvrement (cercle), ce qu'on note $(1)\sim_o(2)$.}

\end{center}

\end{figure}


À partir de là, on arrive à une représentation du second cumulant de la forme

\[
\begin{aligned}
f^\eps_2=\mu_\eps(F^\epsilon_2-F^\epsilon_1\otimes F^\epsilon_1)
\\
\\
=\mu_\eps\sum_{(1)\sim_r(2)}\ctikz{\draw[-] (-1,-1)--(1,1);\draw[-] (0,0)--(0.5,-0.5); \draw[-] (0.2,-1)--(0.5,-0.5);\draw[-] (0.7,-1)--(0.5,-0.5); \draw[-] (0.5,-0.5)--(2,1);\draw[-] (1,0)--(2,-1); \fill(0.5,-0.5) circle[radius=4pt];}
-\mu_\eps\sum_{(1)\sim_o(2)}\ctikz{ \draw[-] (-1,-1)--(1,1); \draw[-] (0,0)--(1,-1); \draw[-] (0,-1)--(2,1); \draw[-] (1,0)--(2,-1);\draw(0.5,-0.5) circle[radius=4pt]}
\\
\\
+\text{terme obtenu par propagation du second cumulant à }t=0
\end{aligned}
\]
(voir dans la section 4.4.2 de \parencite{BoGalSRSim1} la formule suivant (4.4.1), ou encore la figure 10 de \parencite{BGSRSicm}, où le terme provenant de la condition initiale n'est pas mentionné, comme
expliqué dans la Remarque 3.1 de cette dernière référence). 

Partons de $z_1,z_2$ à l'instant $t$ tels que $|(x_1-x_2)-(t-s)(v_1-v_2)|\le\eps$ pour au moins une valeur de $s\in[0,t]$, et considérons le cas très particulier de deux arbres sans point de ramification 
(autrement dit sans collisions entre $[0,t]$). La contribution d'un tel terme à $F^\eps_2-F^\eps_1\otimes F^\eps_1$ sera génériquement d'ordre $1$ en norme $L^\infty$ (sauf évidemment dans le cas d'une 
maxwellienne, où les termes de perte et de gain se compensent exactement pour chaque vecteur unitaire $\omega$ dans la loi de collision \eqref{LoiColl}). Ce qu'on donc peut espérer, dans le meilleur des 
cas, est que  $F^\eps_2-F^\eps_1\otimes F^\eps_1$ soit petit en variation totale, comme terme d'ordre un intégré sur un ensemble de mesure petite --- dans ce cas précis très particulier, sur un cylindre de 
section $\pi\eps^2\times t|v_1-v_2|$. Ceci justifie heuristiquement la mise à l'échelle de $F^\eps_2-F^\eps_1\otimes F^\eps_1$, que l'on doit multiplier par $\mu_\eps$ pour compenser l'évanescence de
ce terme dans la limite de Boltzmann--Grad.

Ces remarques très fragmentaires donnent un aperçu de comment  calculer et estimer le cumulant d'ordre $2$. On comprend qu'il existe une représentation analogue pour les cumulants d'ordre quelconque. 
En gros, les contributions au cumulant d'ordre $k$ sont d'ordre $\mu_\eps^{k-1}$ (provenant de la mise à l'échelle dans la définition \eqref{DefCumul} de $f^\eps_k$), multiplié par le nombre de graphes sur 
lesquels on doit sommer, terme que l'on multiplie à son tour par $O((\eps^2\times t(|v_1|+\ldots+|v_k|))^{k-1})$, qui est la mesure du domaine d'intégration, multipliée par la taille de la fonction à intégrer, pour 
laquelle on utilise la décroissance gaussienne de la condition initiale, afin de compenser la contribution de la norme des vitesses relatives à la taille du domaine d'intégration. Quant au nombre de graphes 
sur lesquels sommer, on se ramène à $k^{k-2}$ (nombre d'arbres non orientés à $k$ sommets étiquetés, d'après la formule de Cayley --- cf. \parencite{PftB}, chapitre 30), multiplié par $2^{k-1}$ (correspondant 
au choix d'un signe $\pm$ à chaque sommet excepté la racine). Ce raisonnement conduit à la formule (3.4) de \parencite{BGSRSicm}.

Le lecteur intéressé à connaître le détail de ces calculs et des estimations qui en découlent est invité à lire les chapitres 3 et 4 de \parencite{BoGalSRSim1}, qui précisent les formules évoquées ci-dessus 
de représentation des cumulants de tous ordres en termes de graphes, ainsi que le chapitre 8, pour la démonstration des bornes sur les cumulants de tous ordres (contenant le résultat énoncé plus haut
comme Proposition \ref{P-BorneCum}). La limite en $\eps\to 0^+$ conduisant à la fonction $\mathbf K(t,h)$ et au Théorème \ref{T-HJ} est traitée dans le chapitre 9 de \parencite{BoGalSRSim1}.

\section{Épilogue: par-delà le temps de Lanford}\label{S->TLanford}


Tous les résultats dont il a été question jusqu'ici dans cet exposé portaient sur des intervalles de temps courts (de l'ordre du {\og  temps de Lanford\fg}, c'est-à-dire de $T_0=T_0[C_0,\beta_0]$ dans l'énoncé
du Théorème \ref{T-Lanford}). Rappelons que cette restriction avait été annoncée dans le troisième paragraphe de la section \ref{S-HJ}. Mais la dernière phrase de ce paragraphe faisait également espérer
que l'on puisse également obtenir des informations sur la limite de Boltzmann--Grad après le temps de Lanford.

Rappelons que \textcite{BGSRInvent} ont réussi à pousser la limite de Boltzmann--Grad par-delà le temps de Lanford dans des régimes particuliers (conduisant à l'équation de Boltzmann linéaire  puis à 
une asymptotique de diffusion, ou au voisinage d'un équilibre maxwellien, puis dans une limite hydrodynamique conduisant aux équations de Stokes dépendant du temps \parencite{BGSRapde}). Sur le 
premier de ces deux résultats, on pourra consulter également \parencite{FG2014}.

L'idée de pousser l'analyse des fluctuations de la mesure empirique autour de la solution de l'équation de Boltzmann s'impose alors comme particulièrement naturelle. On peut en effet espérer que ce
cadre particulier offre la possibilité de justifier une équation aux fluctuations sur des plages de temps tendant vers l'infini avec $1/\eps$. Voici le théorème obtenu par \textcite{BGSRScpam,BGSRSLongT}
dans cette direction (Théorème 4.1 de \parencite{BGSRSicm}).

Notons $M(v):=\tfrac1{(2\pi)^{3/2}}e^{-|v|^2/2}$ la maxwellienne de densité $1$, de vitesse moyenne nulle et de température $1$, autrement dit la gaussienne centrée réduite. Notons également
$$
\begin{aligned}
\mathbf L_M\phi(t,x,v):=\int_{\mathbf R^3\times\mathbf S^2}&(M(v')\phi(t,x,v'_*)+M(v'_*)\phi(t,x,v')
\\
&-M(v)\phi(t,x,v_*)-M(v_*)\phi(t,x,v))((v-v_*)\cdot\omega)_+dv_*d\omega\,,
\\
\mathbf L^*_M\phi(t,x,v):=\int_{\mathbf R^3\times\mathbf S^2}&(\phi(t,x,v'_*)+\phi(t,x,v')
\\
&-\phi(t,x,v_*)-\phi(t,x,v))((v-v_*)\cdot\omega)_+M(v_*)dv_*d\omega\,,
\end{aligned}
$$
les intégrales de collision directe et adjointe linéarisées autour de $M$. 

\begin{theo}
Supposons que $f^{in}=M$. Dans la limite de Boltzmann--Grad $\mu_\eps=\eps^{-2}$ pour le formalisme grand-canonique, le champ de fluctuations $\zeta^\eps_t$ de \eqref{DefFluct}  converge en loi
sur tout intervalle de temps de la forme $[0,T_\eps]$ avec $T_\eps=O(\ln\ln\ln\mu_\eps)$ vers le processus gaussien centré solution de l'équation de Boltzmann aux fluctuations
$$
d\zeta_t=(-v\cdot\nabla_x+\mathbf L_M)\zeta^\eps_t+db_t\,,
$$
où $b_t$ est un bruit gaussien delta-corrélé en $t,x$, de covariance
$$
\begin{aligned}
\mathbf{Cov}(t,\phi,\psi):=\tfrac12\int_{\mathbf T^3\times(\mathbf R^3)^2\times\mathbf S^2}\mathbf D\phi(x,v,x,v_*,\omega)\mathbf D\psi(x,v,x,v_*,\omega)
\\
\times M(v)M(v_*)((v-v_*)\cdot\omega)_+dxdvdv_*d\omega&\,,
\end{aligned}
$$
et où $\zeta_0$ est le champ gaussien centré de covariance
$$
\mathbf E(\langle\zeta_0,\phi\rangle\langle\zeta_0,\psi\rangle)=\int_{\mathbf T^3\times\mathbf R^3}\phi(z)\psi(z)M(v)dz\,.
$$
\end{theo}

\smallskip
On laissera de côté la démonstration de ce dernier résultat. Ce qui en est dit dans la section 4.3 de \parencite{BGSRSicm} fait comprendre qu'elle est particulièrement complexe. Signalons la formule
suivante pour la covariance dans ce cas:
$$
\mathcal C(t,0,h,g^{in})=\int_{\mathbf T^3\times\mathbf R^3}g(t,x,v)h(x,v)M(v)dxdv\,,
$$
où $g(t,\cdot)=e^{t(-v\cdot\nabla_x+\mathbf L^*_M)}g^{in}$. Cette formule se déduit facilement de celle de Spohn (Proposition \ref{P-Spohn}), au moins sur le laps de temps  bref (de l'ordre du temps de 
Lanford) sur lequel elle a été établie jusqu'ici. Il suffit en effet d'utiliser l'identité immédiate $M\mathbf L^*_M(g)=L_M(Mg)$, et d'observer que $R^{1,2}(M,M)=0$.

Enfin, comme la longueur de l'intervalle de temps sur lequel l'équation de Boltzmann aux fluctuations autour de l'équilibre maxwellien est établie tend vers l'infini avec $1/\eps$, on peut également étudier
ces fluctuations dans un régime hydrodynamique convenablement linéarisé, pour obtenir un système de Stokes-Fourier aux fluctuations --- sur ce dernier point, voir l'article très récent \parencite{BGSRSGawed}.


\printbibliography


\end{document}